\newcommand{\type}{\texttt{type}}
\newcommand{\prop}{\texttt{prop}}
\newcommand{\sing}{\mathfrak{S}}
\newcommand{\myG}{\Gamma}
\newcommand{\entS}{\ensuremath{\vdash_{\Sigma;\Xi}}}
\newcommand{\ent}{\vdash}
\newcommand{\SE}{\cdot \ent}
\newcommand{\GE}{\Gamma \ent}
\newcommand{\SPE}{\Psi \entS}
\newcommand{\DE}{\Delta \ent}
\newcommand{\DGE}{\Delta;\Gamma \ent}
\newcommand{\PE}{\Psi \ent}
\newcommand{\DPE}{\Delta;\Psi \ent}
\newcommand{\squig}{\rightsquigarrow}
\newcommand{\Gpluss}[2]{(#1\text{\small +}#2)}
\newcommand{\Gplus}[1]{\Gpluss{\Gamma}{#1}}
\newcommand{\hook}{\mapsto}
\newcommand{\bound}[1]{\textbf{BOUND}\ #1}
\newcommand{\free}{\textbf{FREE}}
\newcommand{\unify}{\texttt{unify}}
\newcommand{\unifyargs}{\texttt{unify\_args}}
\newcommand{\args}[1]{\langle #1 \rangle}
\newcommand{\cross}[1]{{\textrm{\small x}}[#1]}
\newcommand{\func}[2]{#1 \args{#2}}
\newcommand{\fails}{\ \texttt{fails}}
\newcommand{\heap}[1]{\{#1\}}
\newcommand{\hup}[3]{#1 \{ #2 \hook #3 \}}
\newcommand{\hups}[2]{#1 \{ #2 \}}
\newcommand{\thup}[3]{#1 \{ #2 \colon #3 \}}
\newcommand{\hext}[3]{#1 \{ \hskip -2pt \{ #2 \hook #3 \}\hskip -2pt\}}
\newcommand{\thext}[3]{#1 \{ \hskip -2pt \{ #2 \colon #3 \}\hskip -2pt\}}
\newcommand{\hexts}[2]{#1 \{ \hskip -2pt \{ #2 \}\hskip -2pt\}}
\newcommand{\sub}[1]{[\hskip -1pt[#1]\hskip-1pt]}
\newcommand{\step}{\longmapsto}
\newcommand{\putvar}[2]{\texttt{put\_var }[#1] #2}
\newcommand{\dputvar}[2]{\texttt{put\_var }#1, #2}
\newcommand{\mov}[2]{\texttt{mov}\ #1, #2}
\newcommand{\getval}[2]{\texttt{get\_val}\ #1, #2}
\newcommand{\setval}{\texttt{set\_val}\ }
\newcommand{\getstr}[2]{\texttt{get\_str}\ #1, #2}
\newcommand{\putstr}[2]{\texttt{put\_str}\ #1, #2}
\newcommand{\unifyvar}[1]{\texttt{unify\_var}\ #1}
\newcommand{\unifyval}[1]{\texttt{unify\_val}\ #1}
\newcommand{\jmp}[1]{\texttt{jmp}\ #1}
\newcommand{\trustme}[1]{\texttt{trust\_me}}
\newcommand{\ret}[1]{\texttt{ret}}
\newcommand{\fresh}{\ \texttt{fresh}}
\newcommand{\ells}{\args{\ell_1, \ldots, \ell_n}}
\newcommand{\ellsh}{\args{\ell_1^H, \ldots, \ell_n^H}}
\newcommand{\cwsh}{c\ellsh}
\newcommand{\ok}{\ \texttt{ok}}
\newcommand{\done}{\ \texttt{done}}
\newcommand{\succeed}{\texttt{succeed}}
\newcommand{\edn}{\texttt{end}}
\renewcommand{\to}{\rightarrow}
\newcommand{\dom}[1]{\mathit{Dom}(#1)}
\newcommand{\case}[1]{\noindent \textbf{case } #1 \vskip 5pt}
\newcommand{\subcase}[1]{\noindent \textbf{subcase } #1 \vskip 5pt}
\newcommand{\code}{\texttt{code}}
\newcommand{\bt}{\texttt{backtrack}}
\newcommand{\branch}[1]{\texttt{push\_bt}\ #1}
\newcommand{\unwind}{\texttt{unwind}}
\newcommand{\uptrail}{\texttt{update\_trail}}
\newcommand{\puttuple}[1]{\texttt{put\_tuple}\ #1}
\newcommand{\proj}[1]{\texttt{proj}\ #1}
\newcommand{\close}[1]{\texttt{close}\ #1}
\newcommand{\closure}{\texttt{close}}
\newcommand{\mread}{\texttt{read}}
\newcommand{\mwrite}{\texttt{write}}
\newcommand{\twrite}{\texttt{twrite}}
\newcommand{\reads}{\texttt{ reads }}
\newcommand{\writes}{\texttt{ writes }}
\newcommand{\arity}{\texttt{arity}}
\newcommand{\pred}{\texttt{pred}}
\newcommand{\elems}{\texttt{elems}}
\renewcommand{\succ}{\texttt{succ}}
\newcommand{\canon}{\texttt{ canon}}
\renewcommand{\path}{\texttt{ path}}
\newcommand{\eval}{\Downarrow}
\newcommand{\size}{\mathit{size}}
\newtheorem{thm}{Theorem}
\newtheorem*{thm*}{Theorem}
\newtheorem{lem}{Lemma}
\newtheorem*{lem*}{Lemma}
\newtheorem{claim}{claim}
\newtheorem{subclaim}{subclaim}
\theoremstyle{definition}
\newtheorem{exmp}{Example}[section]
\begin{document}

\setlength{\pdfpageheight}{\paperheight}
\setlength{\pdfpagewidth}{\paperwidth}

%\conferenceinfo{CONF 'yy}{Month d--d, 20yy, City, ST, Country}
%\copyrightyear{20yy}
%\copyrightdata{978-1-nnnn-nnnn-n/yy/mm}
%\copyrightdoi{nnnnnnn.nnnnnnn}

% Uncomment the publication rights you want to use.
%\publicationrights{transferred}
%\publicationrights{licensed}     % this is the default
%\publicationrights{author-pays}

%\titlebanner{banner above paper title}        % These are ignored unless
%\preprintfooter{short description of paper}   % 'preprint' option specified.

\title{TWAM: A Certifying Abstract Machine for Logic Programs}

\author{Brandon Bohrer}
\orcid{0000-0001-5201-9895}
\author{Karl Crary}
\affiliation{
  \institution{Carnegie Mellon University}
  \department{Computer Science Department}
  \city{Pittsburgh}
  \state{PA}
  \postcode{15213}
  \country{USA}
}
%\authorinfo{Brandon Bohrer \and Karl Crary}
%           {Carnegie Mellon University}
%           {\{bbohrer,crary\}@cs.cmu.edu}
 
\begin{abstract}
 Type-preserving (or typed) compilation uses typing derivations to certify correctness properties of compilation. 
 We have designed and implemented a type-preserving compiler for a simply-typed dialect of Prolog we call T-Prolog.
 The crux of our approach is a new \textit{certifying abstract machine} which we call the Typed Warren Abstract Machine (TWAM).
 The TWAM has a dependent type system strong enough to specify the semantics of a logic program in the logical framework LF.
 We present a soundness metatheorem which constitutes a partial correctness guarantee:  well-typed programs implement the logic program specified by their type.
 This metatheorem justifies our design and implementation of a certifying compiler from T-Prolog to TWAM.
\end{abstract}

\begin{CCSXML}
<ccs2012>
<concept>
<concept_id>10003752.10003790.10002990</concept_id>
<concept_desc>Theory of computation~Logic and verification</concept_desc>
<concept_significance>500</concept_significance>
</concept>
<concept>
<concept_id>10003752.10003790.10003795</concept_id>
<concept_desc>Theory of computation~Constraint and logic programming</concept_desc>
<concept_significance>500</concept_significance>
</concept>
<concept>
<concept_id>10011007.10011006.10011041</concept_id>
<concept_desc>Software and its engineering~Compilers</concept_desc>
<concept_significance>500</concept_significance>
</concept>
<concept>
<concept_id>10003752.10003790.10011740</concept_id>
<concept_desc>Theory of computation~Type theory</concept_desc>
<concept_significance>100</concept_significance>
</concept>
</ccs2012>
\end{CCSXML}

\ccsdesc[500]{Theory of computation~Logic and verification}
\ccsdesc[500]{Theory of computation~Constraint and logic programming}
\ccsdesc[500]{Software and its engineering~Compilers}
\ccsdesc[100]{Theory of computation~Type theory}
\keywords
{Certifying Compilation, Proof-Producing Compilation, Prolog, Warren Abstract Machine}
\nocite{*}

\maketitle

%Done
%TODO: Check for missing T in unify
%TODO: H vs S in unify_args
%TODO: Mention minor syntactic changes in depwam
%TODO: Notation S(l)
%TODO:  unifyval:   unify_val r, x:a. unifyvar:   unify_var r, x:a.
%TODO: Update :_s notation for spine typing
%TODO: Make it obvious that 1 program trace = 1 query
%TODO: Make obvious why sometimes we want to ignore elements in a heap
%TODO: Be real clear about use of \vec
%TODO: motivate failure vs. success continuation early
%TODO: One reviewer didnt seem to get how static everything was, make sure to draw that out
%TODO: 
% Simplify notation to:
% =: unify, unifyargs, \sqcap
% ::, epsilon : unifyargs list, trails, trail frames, machine lists
% \langle\rangle: empty tuple ONLY
% {}: Empty heap type/rftype ONLY 
%Additions: Karl intro, compiler impl, tail-call optimization
%TODO: make sure all frees are annotated
%TODO: Get terminology clear structure vs functor vs prolog term vs. constructor
%TODO: consistency z and s vs zero and succ
%TODO: Mention sometimes we write down arity but don't actually care
%TODO: Clear up ell^C vs. \ell^H
%TODO: Ensure trail judgement always has right contexts
%TODO: Maybe mention canons in heap value

%Notdone

\section{Introduction}
Compiler verification is important because of the central role that compilers play in computing infrastructure, and because compiler bugs are easy to make but often difficult to catch.
Most work on compiler verification has been done in the setting of imperative or functional programming; very little has been done for logic programming languages like Prolog.

Compiler verification is an equally interesting problem in the case of logic programming.
Logic programs are often easier to write correctly than programs in other paradigms, because a logic program is very close to
being its own specification.  However, the correctness advantages of logic
programming cannot be fully realized without compiler verification.
Compiler correctness is a concern for logic programming given the scale of realistic language implementations;
for example, SWI-Prolog is estimated at over 500,000 lines of code~\cite{OpenHub}.

Certifying compilation ~\cite{necula1998design} is an approach to verification wherein a
compiler outputs a formal proof that the compiled program satisfies
some desired properties.  Certifying compilation, unlike
conventional verification, has the advantage that the certificates can
be distributed with the compiled code and checked independently by
third parties, but the flip side is that compiler bugs are not
found until the compiler sees a program that elicits the bug.  In
the worst case, bugs might be found by the compiler's users, rather than its
developers.

%the properties being certified include type and memory safety, they do not include preservation of dynamic semantics.
%In contrast, we show preservation of dynamic semantics in the case of logic programming.%
%We achieve this by specifying the dynamic semantics of a logic program as a logical  LF signature.
%namely the order in which rules are tried, with the advantage that the type {\em are\/} the program.
In most work on certifying compilation ~\cite{necula1998design}, an additional disadvantage is that while type and memory safety are certified, dynamic correctness is not.
In contrast, we certify the dynamic behavior of logic programs, whose semantics we give as a signature in the logical framework LF \cite{Harper93aframework}.
This semantics abstracts away the low-level operational details of Prolog semantics such as order of execution.
This brings the type system into close harmony with our source programs, allowing type correctness to naturally encompass dynamic correctness.

In this work, we develop the Typed Warren Abstract Machine (TWAM), a dependently-typed \emph{certifying abstract machine} suitable as a compilation target for certifying compilers.
Section \ref{sec:dependent-wam} formalizes and proves the following claim that TWAM certifies partial correctness:

\textbf{Theorem 1}(\textit{Soundness}): If a query \verb+?-+$G.$ succeeds, there exists a proof of $G$ in LF.
That is, well-typed TWAM programs satisfy partial correctness with respect to their LF semantics.
Because this theorem says that well-typed TWAM programs implement sound proof search procedures for the LF specification, we also call this theorem \textit{soundness}.
We show that TWAM is a suitable compilation target by implementing a compiler from a simply-typed dialect of Prolog called T-Prolog to the TWAM.
The result is a certifying compiler with a small, domain-specific proof checker as its trusted core: the TWAM typechecker.

We ease the presentation of TWAM by first presenting its simply-typed variant (SWAM) in Section \ref{sec:simple-wam} along with standard progress and preservation theorems, which show type and memory-safety.
We then develop a dependently-typed variant in Section \ref{sec:dependent-wam} whose type system expresses the behavior of a TWAM program as a logic program in the logical framework LF \cite{Harper93aframework}, using an encoding demonstrated in Section \ref{sec:lf-encoding}.

%\subsection{Methods}
%\subsubsection{Certifying Compilation}

%\subsubsection{Warren Abstract Machine (WAM)}
%\section{Background$:$ Warren Abstract Machine}
\section{Source Language: T-Prolog}
Our compiler accepts programs for a simply-typed dialect of Prolog which we named T-Prolog.
It is worth noting that the language need not be typed for our approach to work: if we wished to
work in an untyped dialect of Prolog, we could simply add a compiler pass to collect a list of
all the constructors used in a particular Prolog program and construct a single type called \verb+term+ containing
every constructor we need. We choose a simply-typed language over an untyped one because our use of LF in the TWAM
makes this feature easy to implement and because the correspondence with LF is easier to present for a simply-typed
language.

T-Prolog programs obey the following grammar:
%\subsection{Syntax}
%\begin{figure}
 % \centering
  %\figrule
%  \footnotesize
\begin{center}
  \begin{tabular}{lll}
    programs                & $P$                &$::= D^*\ Q$\\
    query                   & $Q$                &$::= {\tt ?-}t{\tt.}$\\
    declaration             & $D$                &$::= D_\tau\ |\ D_c\ |\ D_p$\\
    type declaration        & $D_\tau$            &$::= {\tt ident:type}$\\
    constructor declaration & $D_c$              &$::= {\tt ident:}\ \tau_c$\\
    predicate declaration   & $D_p$              &$::= {\tt ident:}\ \tau_p\ C^*$\\
    constructor types       & $\tau_c$           &$::= {\tt type}\ |\ {\tt ident\ \verb+->+\ \tau_c}$\\
    predicate types         & $\tau_p$           &$::= {\tt prop}\ |\ {\tt ident\ \verb+->+\ \tau_p}$\\
    clause                  & $C$                &$::= t{\tt.}\ |\ t\ {\verb|:-|}\ G^*$\\
    goals                   & $G$                &$::=  t{\tt.}\ |\ t{\tt,}\ G^*$\\
    terms                   & $t$                &$::= {\tt Ident}\ |\ {\tt ident(}t{\tt, \ldots, }t{\tt )}$\\
  \end{tabular}
\end{center}
As our running example throughout the paper, we consider a series of arithmetic operations on the Peano representation of the natural numbers $\mathit{zero}$ and $\mathit{succ}(n)$.
To start, here is the \verb+plus+ function written in T-Prolog, with the query \verb|2 + 2 = X|.
As in standard Prolog, we will often annotate constructors and predicates with their arities as in {\tt plus/3}.
However, each identifier in T-Prolog has a unique type and thus a unique arity, so annotating identifiers with their arity is not strictly necessary.

\begin{exmp}\label{ex:tprolog-code}
\begin{verbatim}
nat : type.
zero/0 : nat.
succ/1 : nat -> nat.

plus/3 : nat -> nat -> nat -> prop.
plus(zero,X,X).
plus(succ(X),Y,succ(Z)) :-
  plus(X,Y,Z).

?- plus(succ(succ(zero)), succ(succ(zero)), X).
\end{verbatim}
\end{exmp}

There is no fundamental difference between \verb+type+ and \verb+prop+ (and in the theory they are identical): we differentiate them in the T-Prolog syntax because we find this notation intuitive and because it makes the language easier to parse.

\subsection{Semantics of T-Prolog}
In order to certify that a compiler preserves the dynamic semantics of T-Prolog programs, we must first ascertain those semantics.
As in typical Prolog, a T-Prolog program is defined as a signature of logical inference rules, and execution proceeds via depth-first proof-search under that signature, trying rules in the typical Prolog order.
Seeing as Prolog evaluation is proof search, the semantics of Prolog are often given operationally in terms of proof-search trees.
This operational treatment has the advantage that it can naturally express non-termination and the order in which rules are tried.
The disadvantage is that, in increasing operational detail, we diverge further from the world of pure logic, increasing the difficulty of verification.

For this reason, while the T-Prolog implementation does evaluate in the same order as Prolog, we do not take the operational search-based semantics of T-Prolog as canonical.
Rather, we take as the meaning of a T-Prolog program the set of formulas provable from its inference rules (Section ~\ref{sec:lf-encoding}), without regard to the order in which the proof steps are performed.
The abstractions made in this semantics are not significantly greater than those already made by a proof-search semantics.
The common insight is that a formal semantics for logic programs should exploit the close relationship to logic, ignoring implementation details that have no logical equivalent.
In both semantics, for example, it is typical to ignore Prolog's cut operator {\tt !}, which has the side effect of skipping any remaining backtracking opportunities for the current predicate, typically used as an optimization.
The cut operation is inherently about search rather than truth, informing the Prolog implementation to ignore some of the available proof rules.

Both the search semantics and provability semantics implicitly assume that all Prolog terms are finite.
The backbone of Prolog proof search is unification, and as usual for unification, finiteness cannot be taken for granted.
Allowing instances such as $X = f(X)$ to unify would result in infinite terms that do not have a logical interpretation.
In typical Prolog implementations, such terms are accepted out of the interest of performance.
In T-Prolog, we apply the standard solution of using an occurs check in unification, causing unification to fail on instances such as $X = f(X)$ whose only solutions are infinite.
This restores the close correspondence with logic, at the cost of decreased performance.

\section{The TWAM Instruction Set}
\label{sec:background-wam}
The TWAM borrows heavily from the Warren Abstract Machine, the abstract machine targeted by most Prolog implementations \cite{warren1983abstract}.
For a thorough, readable description of the WAM, see A\"it-Kaci~\cite{ait1999warren}. 
Readers familiar with the WAM may wish to skim this section and observe the differences from the standard WAM, while readers unfamiliar with the WAM will wish to use this section as a primer or even consult A\"it-Kaci's book.
In this section we present our simplified instruction set for the WAM using examples.
Notable simplifications include the usage of continuation-passing style and omission of many optimizations (with the exception of tail-call optimization in Section~\ref{sec:tco}) in order to simplify the formalism.
The description here is informal; the formal semantics are given in Section~\ref{sec:simp-op}.
\paragraph{Prolog and WAM Terminology}
The following terminology will be used extensively in this paper to describe Prolog and the WAM: 
a \emph{Prolog term} is an arbitrary combination of \emph{unification variables} $X$ combined with \emph{constructors} such as $\mathit{succ}$ and $\mathit{zero}$. What we call \emph{constructors} are generally called \emph{functors} in Prolog terminology.
We use the phrase unification variable when discussing Prolog source text and instead use \emph{free variable} to discuss WAM state at runtime.
The distinction becomes significant, e.g. because the Prolog source may specify that a parameter to some predicate is a unification variable, but at runtime the argument is a ground term.
We use the word \emph{constructor} only when discussing data and use the word \emph{predicate} to refer both to predicates for which a WAM program implements proof search and to the implementation itself.
We also say that certain WAM instructions are constructors because they construct some Prolog term, or destructors if they perform pattern matching on some Prolog term.
A \emph{structure} is the WAM representation of a constructor applied to its arguments.
A predicate consists of one or more \emph{clauses}, each of which specifies one inference rule and each of which consists of a \emph{head term} along with zero or more \emph{subgoals}.
A user interacts with the Prolog program by making a \emph{query}, which is compiled in the same way as a predicate with one clause with one subgoal.
In our discussion of TWAM programs, we consider programs with arbitrarily many predicates, one of which is designated as the query.

\paragraph{Term Destructors}
The instructions \texttt{get\_var, get\_val}, and \texttt{get\_str} are used the implementation of predicates to destruct the predicate arguments:
\begin{itemize}
\item $\texttt{get\_var}\ r_d, r_s$ reads (gets) $r_s$ into $r_d$.
  This is an unconditional register-to-register move and thus its use can be minimized by good register allocators.
  This is used to implement clauses where a unification variable is an argument.
\item $\texttt{get\_val}\ r_1, r_2$ reads (gets) $r_1$ and $r_2$ and unifies their values against each other.
  This is used to implement clauses where multiple arguments are the same unification variable.
\item $\texttt{get\_str}\ r_s, c$ reads (gets) $r_s$ and unifies it against the constructor $c$.
  For our initial examples, we will consider only the case where $c$ has no arguments.
  \texttt{get\_str} is effectively an optimized special-case of \texttt{get\_val} where we know the second unificand must be $c$.
  This is used to implement clauses where a constructor appears as a predicate argument.
\end{itemize}
For example, the Prolog predicate \texttt{both\_zero(zero, zero)}, which holds exactly when both arguments are zero, could be compiled in all of the following ways, with the naming convention
that the register for argument $i$ is named $A_i$ and the $i$'th temporary is named $X_i$:
\begin{exmp}[Implementing a Predicate]
\label{ex:both-zero}
\begin{center}
\begin{tabular}{ccc}
   \begin{minipage}{0.25\linewidth}
\begin{verbatim}
  # Implementation 1
  get_str A_1, zero/0;
  get_str A_2, zero/0;
\end{verbatim}
\end{minipage}
& \begin{minipage}{0.25\linewidth}
\begin{verbatim}
  # Implementation 2
  get_str A_1, zero/0;
  get_val A_1, A_2;
\end{verbatim}
\end{minipage}
&\begin{minipage}{0.25\linewidth}
\begin{verbatim}
  # Implementation 3
  get_var X_1, A_1;
  get_var X_2, A_2;
  get_str X_1, zero/0;
  get_val X_1, X_2;
\end{verbatim}
\end{minipage}
\end{tabular}
\end{center}
\end{exmp}
Generally speaking, Implementation 1 is most efficient, then Implementation 2, then Implementation 3.
Note that even though the Prolog predicate \texttt{both\_zero(zero, zero)} contains no unification variables, we can still use \texttt{get\_val} in the implementation,
because the unification problems $A_1 = zero, A_2 = zero$ and $A_1 = zero, A_1 = A_2$ are equivalent.
Observe that any instruction that uses unification, such as {\tt get\_val} and {\tt get\_str}, will fail if unification fails.
Should this occur, the runtime automatically backtracks if possible; backtracking is \emph{never} executed explicitly in the text of a TWAM program.

\paragraph{Term Constructors and Jumps}
To implement a query or subgoal that uses the predicate {\tt both\_zero}, we must first construct its arguments, then jump to the implementation:
\begin{itemize}
\item $\texttt{put\_var}\ r_d$  writes (puts) a \emph{new} free variable into $r_d$.
  This is used to implement passing a unification variable as an argument.
\item $\texttt{put\_val}\ r_d, r_s$ writes (puts) the value of an \emph{existing} unification variable into $r_d$, assuming it is already in $r_s$.
  This is an unconditional register move. Thus it is entirely identical to \texttt{get\_var}.
  For this reason, in our theory we will condense these into one instruction $\texttt{mov}\ r_d, r_s$ and only use the names \texttt{get\_var} and \texttt{put\_val} for consistency
  with traditional terminology in our examples.
\item $\texttt{put\_str}\ r_d, c$ writes a structure into $r_d$ using constructor $c$.
  \texttt{get\_str} is effectively an optimized special-case of \texttt{put\_val} where we are storing not an arbitrary unification variable, but specifically a constant $c$.
  This is used to implement passing a constructor as an argument to a predicate.
\item $\texttt{jmp}\ \ell^C$ passes control to the code location (address literal) $\ell^C$.
Arguments are passed through registers.
All code is in continuation passing style, and thus a continuation can be passed in through a register, which is named {\tt ret} by convention.
The queries in Example~\ref{ex:make-query} do not require returning from predicate calls, thus continuations are discussed separately.
\end{itemize}

\begin{exmp}[Making a Query]
\label{ex:make-query}
\begin{center}
 \begin{tabular}{cc}
   \begin{minipage}{0.5\linewidth}
\begin{verbatim}
# both_zero(X, X).
put_var A_1;
put_val A_2, A_1;
jmp both_zero/2;
\end{verbatim}
 \end{minipage}&
\begin{minipage}{0.5\linewidth}
\begin{verbatim}
# both_zero(X, zero).
put_var A_1;
put_str A_2, zero/0;
jmp both_zero/2;
\end{verbatim}
 \end{minipage}
   \end{tabular}
\end{center}
\end{exmp}
\paragraph{Constructors with Arguments}
We continue to use the \verb|put_str| and \verb|get_str| instructions to construct and destruct structures that contain arguments.
The difference is that when calling \verb|put_str| or \verb|get_str| with a constructor of arity $n > 0$, 
we now initiate a \emph{spine} (terminology ours) consisting of $n$ additional instructions using only the following:
\begin{itemize}
\item When $\texttt{unify\_var}\ r$ is the $i$'th instruction of a spine, it unifies the  $i'$th argument of the constructor with a \emph{new} unification variable, at register $r$.
\item When $\texttt{unify\_val}\ r$ is the $i$'th instruction of a spine, it unifies the  $i'$th argument of the constructor with an \emph{existing} unification variable, at register $r$.
\end{itemize}
The same instructions are used with both \texttt{put\_str} and \texttt{get\_str} spines.
However, at runtime a spine will execute in one of two modes, \emph{read mode} or \emph{write mode}.
\emph{Read mode} is used to destruct an existing value, meaning we are in the \texttt{get\_str $r_s, c$} and $r_s$ contains a structure whose constructor is $c$.
\emph{Write mode} is used to construct a new value, meaning we are either in $\texttt{put\_str}\ r_d, c$ or we are in $\texttt{get\_str}\ r_s, c$ but the content of $r_s$ is a free variable.
In both modes, each unification instruction processes one constructor argument:
\begin{itemize}
\item Read-mode \texttt{unify\_var} stores the next constructor argument in a register.
\item Read-mode \texttt{unify\_val} unifies the next constructor argument with the content of a register.
\item Write-mode \texttt{unify\_var} allocates a free variable as constructor argument, storing it also in a register.
\item Write-mode \texttt{unify\_val} uses the content of a register as constructor argument.
\end{itemize}

For example, the Prolog predicate \texttt{same\_pos(succ(X), succ(X))} which holds when the arguments are the same positive number, can be implemented and used as follows:

\begin{exmp}[Predicates with Prolog Spines]
\begin{center}
 \begin{tabular}{cc}
   \begin{minipage}{0.5\linewidth}
\begin{verbatim}
# Implementation
get_str A_1, succ/1;
  unify_var X_1;
get_str A_2, succ/1;
  unify_val X_1;
\end{verbatim}
 \end{minipage}&
\begin{minipage}{0.5\linewidth}
\begin{verbatim}
# Query same_pos(succ(X),succ(Y))
put_str A_1, succ/1;
  unify_var X_1;
put_str A_2, succ/1;
  unify_var X_1; # Overwrites X_1
jmp same_pos/2;
\end{verbatim}
 \end{minipage}\\[0.5in]
\begin{minipage}{0.5\linewidth}
\begin{verbatim}
# Query same_pos(succ(X),succ(X))
put_str A_1, succ/1;
  unify_var X_1;
put_str A_2, succ/1;
  unify_val X_1; # Reads X_1
jmp same_pos/2;
\end{verbatim}
\end{minipage}&
\begin{minipage}{0.5\linewidth}
\begin{verbatim}
# Query same_pos(zero,succ(succ(zero)))
put_str A_1, zero/0;
put_str X_1, zero/0; # Z = 0
put_str X_2, succ/1; # Y = 1
  unify_val X_1;
put_str A_2, succ/1; # X = 2
  unify_val X_2;
jmp same_pos/2;
\end{verbatim}
 \end{minipage}
 \end{tabular}\\
\end{center}
\end{exmp}
The last example demonstrates a compilation technique known as \emph{flattening}: The unification problem $X = succ(succ(zero))$ is equivalent to the problem $X = succ(Y), Y = succ(Z), Z = zero$.
This allows us to implement nested structures such as $succ(zero)$ or $succ(succ(succ(zero)))$ by introducing intermediate variables.
Thus each spine need only introduce one structure, and nested structures are reduced to the one-structure case by flattening.

\paragraph{Continuations, Closures, and Halting}
Prolog proof search can be structured using success and failure continuations.~\cite{Elliott91asemi-functional} 
When a predicate has multiple clauses, failure continuations are used to remember alternate clauses and implement backtracking.
When a clause has multiple subgoals, success continuations are used to remember the remaining subgoals.
In our system, success continuations can be stored in registers and passed to predicates, typically in a register named {\tt ret}, whereas failure continuations are stored in the trail.
Both success and failure continuations can access an environment value (generally a tuple) through the register {\tt env}.
Tuples are like structures, but can contain closures and cannot be unified.
The entry-point of a TWAM program is a top-level query, which specifies an initial continuation that terminates the program in success.
If all clauses fail, then the runtime will automatically report that the program failed.

\begin{itemize}
\item $\texttt{close}\ r_d, r_e, \ell^C$ places a new closure in $r_d$ containing an environment read from $r_e$.
 When that closure is invoked, control will pass to $\ell^C$ and the environment will be placed in a special-purpose register named \texttt{env}.
 This is used to construct success continuations.
\item $\texttt{push\_bt}\ r_e, \ell^C$ (push backtracking point) creates a new failure continuation.
 When that continuation is invoked, control will pass to $\ell^C$ and the environment will be placed in \texttt{env}.
 Note that \texttt{push\_bt} does not take a destination register: a stack of failure continuations is stored implicitly in the machine state,
 and they are only ever invoked implicitly, when unification instructions like {\tt get\_val} fail.
\item $\texttt{put\_tuple}\ r_d, n$ begins a \emph{tuple spine} of length $n$ which will put a new tuple in $r_d$.
  All following instructions of the tuple spine are \texttt{set\_val}.
\item $\texttt{set\_val}\ r_s$ copies $r_s$ in as the next tuple element.
\item $\texttt{proj}\ r_d, r_s, i$ copies the i'th element of the tuple at $r_s$ into $r_d$.
\item $\texttt{succeed}$ immediately terminates the program and indicates that the initial query has succeeded.
 (At this point, the runtime system will print out the solution to the query.)
\end{itemize}

As an example, consider implementing and calling the predicate $X + Y = Z$ with two clauses: {\tt plus(zero,X,X)} and {\tt plus(succ(X),Y,succ(Z)) :- plus(X,Y,Z)}:
\begin{exmp}[Implementing {\tt plus}]
\label{ex:plus-wam}
\centering
\begin{tabular}{cc}
\begin{minipage}{0.5\linewidth}
 \begin{verbatim}

# Entry point to plus, implements the
# case plus(zero,X, X) and tries 
# plus-succ on failure
plus-zero/3:
  put_tuple X_1, 3;
    set_val A_1;
    set_val A_2;
    set_val A_3;
  push_bt X_1, plus-succ/3;
  get_str A_1, zero/0;
  get_val A_2, A_3;
  jmp ret;
\end{verbatim}
\end{minipage}&
\begin{minipage}{0.5\linewidth}
\vspace{-0.25in}
\begin{verbatim}
# plus(succ(X), Y, succ(Z)) :- plus(X,Y,Z).
plus-succ/3:
  proj A_1, env, 1;
  proj A_2, env, 2;
  proj A_3, env, 3;
  get_str A_1, succ/1;
    unify_var A_1;
  get_str A_3, succ/1;
    unify_var A_3;
  jmp plus-zero/3;
\end{verbatim}
\end{minipage}
\end{tabular}
\end{exmp}

\begin{exmp}[Calling {\tt plus}]
%\label{ex:plus-wam}
\centering
\begin{tabular}{cc}
\begin{minipage}{0.5\linewidth}
\begin{verbatim}
    init-cont/0:
      succeed;    
\end{verbatim}
\end{minipage}
&\begin{minipage}{0.5\linewidth}
\vspace{0.08in}
\begin{verbatim}
# plus(succ(zero), succ(zero), X)
query/0:
  put_tuple X_1, 0;
  close ret, X_1, init-cont/0;
  put_str X_2, zero/0;
  put_str A_1, succ/1;
    unify_val X_2;
  put_str A_2, succ/1;
    unify_val X_2;
  put_var A_3;
  jmp plus-zero/3;
\end{verbatim}
\end{minipage}
\end{tabular}
\end{exmp}
In this example, \texttt{plus-zero/3} is the entry point for addition, and implements the base case.
Because \texttt{plus-zero/3} is not the last case, it constructs a failure continuation which tries the \texttt{plus-succ/3} case if an instruction fails.
This requires remembering the environment, implemented by creating a tuple.
In the example query, the first invocation of \texttt{plus-zero/3} will fail on the \texttt{get\_str} instruction because $A_1$ contains $succ(zero),$ not $zero$,
causing \texttt{plus-succ/3} to run (which will succeed after another call to \texttt{plus-zero}).
\texttt{plus-succ} contains several optimizations. 
The final subgoal of a clause can always apply tail-call optimization, so no success continuation is necessary.
Furthermore, it carefully avoids the use of intermediate registers. For example, when reading the argument $succ(X),$ the variable $X$ is written directly into $A_1$ to prepare for the recursive call.
The query $plus(succ(zero),succ(zero),X)$ must specify an initial continuation, which simply reports success. 
Because the success continuation is trivial, the empty tuple suffices as its environment.

\paragraph{Runtime State}
The runtime representation of a TWAM program differs from that of a WAM program, following the differences in their instruction sets.
Both languages have a fixed \emph{code section} containing the TWAM program text and a variable-sized \emph{heap}, which maintains all Prolog terms in a union-find data structure to enable fast unification.
The most significant difference is that the TWAM machine state does not have a stack, but instead allocates success continuations on the heap and allows them to be garbage collected.
Failure continuations, however, are stored in a separate area called the \emph{trail} as in standard WAM. 
In addition to storing a closure created with \texttt{push\_bt}, the trail automatically keeps track of all state changes which might have to be reverted during backtracking.
Traditional descriptions of the WAM contain a \emph{push-down list} or PDL area, which is used in unification to store a temporary list of unification subproblems.
Because this data structure is used only during unification, we found it easier to express the PDL merely as a part of unification and not as a permanent part of the state.

\paragraph{Differences Between WAM and TWAM Instruction Sets}
The key difference between WAM and TWAM is that the TWAM implements predicate calls and backtracking with success and failure continuations, while WAM implements both by maintaining a custom stack discipline, whose invariants are non-trivial.
The use of CPS significantly simplifies the formalism and unifies several instructions that are distinct in traditional WAM:
\begin{itemize}
\item Environments in TWAM are expressed as tuples with the instructions {\tt put\_tuple}, {\tt set\_val}, and {\tt proj}, which  replace {\tt allocate} and {\tt deallocate}.
\item The {\tt jmp} instruction of TWAM unifies  {\tt call}, {\tt execute}, and {\tt proceed} from WAM.
\item The {\tt push\_bt} instruction of TWAM replaces {\tt try\_me\_else}, {\tt retry\_me\_else}, and {\tt trust\_me} from WAM.
\item The {\tt succeed} instruction is added in TWAM for stylistic purposes; WAM reuses {\tt proceed} for this purpose.
\item The unification and spinal instructionns of TWAM correspond directly to WAM.
\item TWAM omits several optimizations such as cut and case analysis.
\end{itemize}

\section{The Simply-Typed WAM (SWAM)}
\label{sec:simple-wam}
The core contributions of this work are the design, metatheory, and implementation of a type system for the TWAM strong enough to certify compilation.
The certification guarantees provided by the dependently-typed TWAM in Section \ref{sec:dependent-wam} require significant complexity in the type-system.
In this section, we ease the presentation of that system by first presenting its simply-typed variant, the SWAM.
We prove progress and preservation for SWAM, which constitute a safety property analogous to those of other strongly-typed abstract machines such as the typed assembly language TAL \cite{Morrisett:1999:SFT:319301.319345}.
In Section~\ref{sec:dependent-wam}, this is subsumed by progress and preservation for TWAM, which is strong enough to certify partial dynamic correctness.

\subsection{Typechecking SWAM}
\label{sec:simple-wam-istat}
The text of a SWAM program is structured as a code area $C$ mapping code locations $\ell^C$ to code values $\code[\myG](I)$.
A code value is a single basic block $I$ annotated with a \emph{register file type} (rftype) $\myG$ which indicates, for each register $r_i$, the type expected by $I$. One of those code values is designated as the query (a predicate with one clause and one subgoal), which is the entry point of the program.
The type $\tau$ assigned to each register is either an \emph{atomic type} $a$ representing a Prolog term, a \emph{continuation type} $\neg \Gamma$ representing a closure that expects the registers to obey the types in $\myG$, or a \emph{tuple type} $\cross{\vec\tau}$, where the elements $\vec\tau$ can freely mix atomics and continuations.
Here $\vec{\tau}$ is an abbreviation for the sequence $\tau_1,\ldots,\tau_n$; similar abbreviations will be used extensively throughout the paper.

The main typing judgement in SWAM is $\myG \vdash_{\Sigma;\Xi} I \ok$, which says the basic block $I$ is well-typed assuming the registers obey $\myG$ initially,
and given signatures $\Sigma,\Xi$ which assign types to every constructor $c$ and code location $\ell^C,$ respectively.
We omit the subscripts $\Sigma;\Xi$ on rules where they are not relevant.
Throughout the paper, the notation $\thup{\Gamma}{r}{\tau}$ refers to updating the type of $r$ in $\Gamma$ to be $\tau$.
Analogous notation will be used for register values, etc.
Throughout this section, we alternate between inference rules for typechecking instructions and their descriptions.

%\begin{center}
%  $\boxed{\textrm{Instructions}\ i}$
%\end{center}
\begin{center}
{\footnotesize  \begin{tabular}{cccc}
    \infer[\textsc{Succeed}]{\GE \succeed;I \ok}{}
   &\infer[\textsc{PutVar}]{\GE \putvar{a}{r}; I \ok}
    {\hups{\Gamma}{r\colon{}a}\ent I \ok}
%   &\infer{\GE \getval {r_1}{r_2}; I \ok} {\Gamma(r_1) = a & \Gamma(r_2) = a}
   &\infer[\textsc{GetVal}]{\GE \getval {r_1}{r_2}; I \ok} { \Gamma(r_1) = a & \Gamma(r_2) = a & \Gamma \ent I \ok}
  \end{tabular}}

\end{center}
\begin{center}
  \footnotesize\begin{tabular}{cc}
    \infer[\textsc{Jmp}]{\GE_{\Sigma;\Xi}\jmp op, I\ok}{\Gamma\vdash_{\Sigma;\Xi} op : \neg\Gamma' & \SE \myG' \leq \myG}
&
  \infer[\textsc{Mov}]{\GE \mov{r_d}{r_s}; I \ok}
  {\myG(r_s) = \tau & \thup{\Gamma}{r_d}{\tau} \ent I \ok}
  \end{tabular}
\end{center}
\begin{center}
{\footnotesize\begin{tabular}{ccc}
  \infer[\textsc{PushBT}]{\GE_{\Sigma;\Xi} \branch r_e, \ell^C; I \ok}{\GE I \ok &
    \Gamma(r_e) = \tau & \Xi(\ell^C) = \neg\heap{\texttt{env}\colon{}\tau}}
&
  \infer[\textsc{Close}]{\GE \close r_d, r_e, \ell^C; I \ok}
  {\deduce{\GE \ell^C \colon{} \neg \thup{\myG'}{\texttt{env}}{\tau}} {\Gamma(r_e)= \tau &
      \thup{\myG}{r_d}{\neg \myG'} \ent I \ok}}
\end{tabular}}
\end{center}

\begin{itemize}
\item {\tt succeed} always typechecks, and is typically the last instruction of its block.
\item ${\tt put\_var}\ [a]r$ allocates a free variable of type $a$ in $r$, thus updating $r$ to type $a$. We write the annotation $[a]$ in brackets to emphasize that it is used only for typechecking.
\item ${\tt get\_val}\ r_1, r_2$ unifies $r_1$ and $r_2$, so they must have the same (atomic) type.
\item ${\tt jmp}\ op$ transfers control to $op$, which in the general case is either a location $\ell^C$ (used in predicate calls) or register $r$ (used in returns, by convention generally named {\tt ret}).
The judgement $\cdot\vdash\Gamma'\leq\Gamma$ means $\forall r\in\dom{\Gamma'}. \Gamma'(r) = \Gamma(r)$ ($\Gamma'$ may omit some registers of $\myG$).
The judgement $\Gamma\vdash_{\Sigma;\Xi} op\colon{}\tau$ has the rules:
\begin{center}
\begin{tabular}{cc}
 $\infer[\textsc{Op-}\ell^C]{\myG\vdash_{\Sigma;\Xi} \ell^C\colon{}\tau}
        {\Xi(\ell^C) = \tau}$
& $\infer[\textsc{Op-}r]{\myG\vdash_{\Sigma;\Xi} r\colon{}\tau}
        {\myG(r) = \tau}$
\end{tabular}
\end{center}
\item${\tt mov}\ r_d,r_s$ copies $r_s$ into $r_d$.
\item${\tt push\_bt}\ r_e,\ell^C$ installs the failure continuation $\ell^C$ in the trail along with the environment from $r_e$, which will be in {\tt env} upon invocation of $\ell^C$.
\item${\tt close}\ r_d,r_s,\ell^C$ is analogous, but stores the resulting success continuation in $r_d$ before proceeding.
\begin{center}
{\footnotesize\begin{tabular}{cc}
 \infer[\textsc{Proj}]{\GE\proj r_d, r_s, i; I \ok}{\myG(r_s) = \cross{\vec\tau} & \thup{\myG}{r_d}{\tau_i}\vdash I\ok & \text{(where $i \leq |\vec\tau|)$}}
&\infer[\textsc{PutTuple}]{\GE\puttuple r_d, n; I \ok}{\GE I:_t (\vec\tau\to \{r_d:\cross{\vec\tau}\}) & \text{(where $n = |\vec\tau|$)}}
\end{tabular}}
\end{center}
\item${\tt proj}\ r_d,r_s,i$ puts the $i$'th element the tuple $r_s$ into $r_d$, typechecking only if $r_d$ has length at least $i$. 
  Here $\vec\tau$ is a sequence of types, one for each element.
\item${\tt put\_tuple}\ r_d,n$ initiates a \emph{tuple spine} of length $n$ with destination $r_d$.
The remainder of the tuple spine is checked using the auxiliary \emph{tuple spine typing} judgement $\GE I:_t (\tau_1 \to \cdots \to \tau_n \to {\tt Post})$, 
where {\tt Post} is a singleton rftype $\{r_d\colon{}\cross{\vec\tau}\}$.
The auxiliary judgement $\GE I:_t (\tau_1 \to \cdots \to \tau_n \to {\tt Post})$ should be read as 
``the next $n$ instructions construct tuple elements of type $\tau_i,$ with postcondition ${\tt Post} \leq \myG$, and all remaining instructions typecheck''.
The typing rules for the spine typing judgement are given in Section~\ref{sec:spine-typing}.
\begin{center}
{\begin{tabular}{cc}
  \infer[\textsc{GetStr}]{\GE \getstr {c} {r}; I \ok} {\deduce{\GE I:_s(\vec a \to  \{\})}{\Sigma(c) = \vec a \to a &
      \Gamma(r) = a}}
&
  \infer[\textsc{PutStr}]{\GE \putstr {c} {r}; I \ok}
  {\deduce{\thup{\Gamma}{r}{a} \ent I:_s (\vec a \to \{\})}
  {\Sigma(c) = \vec a \to a}}
\end{tabular}}
\end{center}
\item${\tt get\_str}\ c,r$ and ${\tt put\_str}\ c,r$ both initiate \emph{Prolog spines} which are checked with \emph{Prolog spine typing} judgement $\GE I:_s (a_1 \to \cdots \to a_n \to {\tt Post})$. Unlike tuple spines, Prolog spines contain only atomic types, and in SWAM always have an empty postcondition ${\tt Post} = \{\}$.
Intuitively one might expect ${\tt Post} = \{r : a\}$,  for {\tt put\_str}, but we choose to update the type of $r$ at the \emph{beginning} of the spine instead of the end, because this leaves {\tt put\_str} symmetric more symmetric with {\tt get\_str}
(a free variable is stored at $r_d$ until the spine completes to ensure type safety). 
\end{itemize}
\subsection{Spine Typing}
\label{sec:spine-typing}
When constructing compound data structures (either tuples or structures), we wish to know that the data structure has the intended number of arguments, each with the intended type.
For this reason, we apply the auxilliary typing judgements for tuple and Prolog spines.
Each spinal instruction produces one element, and so each rule application checks the type of one element.
Consider the rules for the tuple spine judgement $\GE I:_t (\tau_1 \to \cdots \to \tau_n \to {\tt Post})$:
\begin{center}
\begin{tabular}{cc}
 \infer[\textsc{TSpine-SetVal}]{\GE\setval r; I :_t (\tau\to J)}{\myG(r) = \tau & \GE I:_t J}
&
 \infer[\textsc{TSpine-End}]{\GE I :_t {\tt Post}}
 {\Gplus{\tt Post}\ent I \ok}
\end{tabular}
\end{center}
The rule for (TSpine-SetVal) says that each {\tt set\_val} contributes one element.
The rule (TSpine-End) resumes the main typing judgement $\GE I \ok$ when 
says that when a tuple is complete, we store the tuple according to {\tt Post} and resume normal typechecking. Specifically, $\Gplus{\myG'}$ is the rftype such that $\Gplus{\myG'}(r) = \myG'(r)$ for $\dom{\myG'}$ and $\Gplus{\myG'}(r) = \myG(r)$ otherwise.

Prolog spines have their own auxilliary judgment, $\GE I:_s (a_1 \to \cdots \to a_n \to {\tt Post})$.
The rule for ending a Prolog spine is analogous to \textsc{(TSpine-End)}.
The elements of a Prolog spine can be produced either by ${\tt unify\_val}\ r$ or ${\tt unify\_var}\ r$.
The ${\tt unify\_val}\ r$ instruction which requires the argument register type to match the constructor argument, 
whereas ${\tt unify\_var}\ r$ creates a new unification variable of the correct type, which appears both in $r_d$ and as a constructor argument.
\begin{center}
{\begin{tabular}{cc}
  \infer[\textsc{UnifyVal}]{\GE \unifyval r; I:_s (a \to J)} 
   {\Gamma(r) = a & \GE I:_s J}
&
  \infer[\textsc{UnifyVar}]{\GE \unifyvar r; I:_s (a \to J)}
  {\thup{\Gamma}{r}{a} \ent I:_s J}
\end{tabular}
}\end{center}

\subsection{State Representation and Invariants}
\label{sec:simp-rep-inv}
Following the traditional description of the WAM, the essential parts of the SWAM \emph{machine state} include the \emph{code section} $C$, \emph{heap} $H$, and \emph{trail} $T$ (backtracking structure). $H$ and $C$ are often considered together as the \emph{store} $S = (C,H)$.
Locations in the code section are written $\ell^C$ and locations in the heap are written $\ell^H$.
Where both are acceptable we write $\ell$. The notation $S(\ell)$ denotes either $H(\ell^H)$ or $C(\ell^C)$ as appropriate. 
We additionally have an explicit representation of the \emph{register file} $R$ and we represent the instruction pointer as the sequence $I$ of remaining instructions in the current basic block.
Machines also support three spinal execution modes: read spines, write spines, and tuple (write) spines.
In short, machine states are described by the syntax:
\[m ::= (T,S,R,I)\ |\ \mread(T,S,R,I,\vec\ell^H)\ |\ \mwrite(T,S,R,I,c,\ell^H,\vec\ell^H)\ |\ \twrite(T,S,R,I,r,n,\vec\ell^H)\]
We first consider the following typing invariant for normal states $(T,S,R,I)$ in depth and then revisit the additional invariants for spinal states:
\[\infer[\textsc{Mach}]{\cdot \ent_{\Sigma;\Xi} (T,S,R,I) \ok}
{\deduce{\SE S:(\Xi;\Psi) \hskip 0.1in \PE R \colon{} \Gamma \hskip 0.1in \GE I \ok}
{S \ent T \ok}}
\]
As in Section~\ref{sec:simple-wam-istat}, all judgments are parameterized by signatures $\Sigma$ and code section types $\Xi,$ which are elided when irrelevant.
The code section is well-typed when each basic block is well-typed according to the rules of Section \ref{sec:simple-wam-istat}.
The code section is allowed to be mutually recursive:
\[\infer[\textsc{Code-Sec}]{\cdot\ent_{\Sigma;\Xi} \heap{v^C_1, \ldots v^C_n} \colon{} \Xi}
        {\cdot\ent_{\Sigma;\Xi} v^C_1 \colon{} \tau_1\ \cdots\ \cdot\ent_{\Sigma;\Xi} v^C_n \colon{} \tau_n & \text{(where $\Xi = \heap{v^C_1 \colon{} \tau_1, \ldots, v^C_n \colon{} \tau_n}$)}}\]

Heap types are written $\Psi$ and are analogous to rftypes.
As in rftypes we write $\thup{\Psi}{\ell^H}{\tau}$ when updating the type of $\ell^H$.
We also write $\thext{\Psi}{\ell^H}{\tau}$ when adding a \emph{fresh} location $\ell^H$ with type $\tau$, or
$\heap{}$ for an empty heap or empty heap type.
We prohibit cycles in the heap because it simplifies implementing SWAM and simplifies the dependent type system of Section \ref{sec:dependent-wam} even further.
Specifically, a typing derivation $\mathcal{D} : (\SE H : \Psi)$ serves as a witness that $H$ is acyclic,
because $\mathcal{D}$ implicitly specifies a topological sorting on $H$: the rules below state that each value may
only refer to preceding values.
However, $\Psi$ need not assign a type to all values in H, so long as those values without types are never accessed.
This technicality is useful when reasoning about backtracking as in Lemma~\ref{lem:trail-up}.
\begin{center}
  \begin{tabular}{cc}
    \infer[\textsc{Heap-Nil}]{\SE H\colon{}\heap{}}{}
    &\infer[\textsc{Heap-Cons}]{\SE H\colon{}\thext{\Psi}{\ell^H}{\tau}}
    {\SE H\colon{}\Psi & H(\ell^H) = v^H & \PE v^H\colon{}\tau & \ell^H \notin\dom{\Psi}}
%    &\infer{\SE \hext{H}{\ell^H}{v^H}\colon{}\thext{\Psi}{\ell^H}{\tau}}
%    {\SE H\colon{}\Psi & \PE v^H\colon{}\tau & \ell^H \notin\dom{H}}
  \end{tabular}
\end{center}
Values are divided into \emph{heap values} $v^H$ which are arbitrarily large and \emph{word values} $w$ which are fixed size.
In SWAM, words are always heap locations $w ::= \ell^H$. 
The heap values $v^H$ follow the syntax:
\[v^H ::= \cwsh\ |\ \free{}[a]\ |\ \bound{\ell^H}\ |\ \close(w_{env}, \ell^C)\ |\ \langle{}w_1,\ldots,w_n\rangle\]
The values $\cwsh$ and $\free{}[a]$ introduce structures and free variables in Prolog terms, respectively.
The type annotation $a$ in $\free{}[a]$ is merely a convenience for the metatheory and not used at runtime (i.e. SWAM and TWAM support type erasure).

Combined with pointers $\bound{\ell^H}$, these values provide a union-find data structure within the heap, used by SWAM's unification algorithm. The $\bound{\ell^H}$ pointers are merely an artifact of that algorithm and are semantically equivalent to $\ell^H$.
In addition to Prolog terms, the heap contains closures $\close(w_{env}, \ell^C)$ where the machine word $w_{env}$ is the environment for executing $\ell^C$, as well as tuples $\langle w_1, \ldots, w_n\rangle$.
The typing invariants for heap values are:
\begin{center}
\begin{tabular}{ll}
\infer[\textsc{HV-Close}]{\PE \close(w_{env}, \ell^C) \colon{}\neg \myG}
        {\PE w_{env} \colon{} \tau &  \PE\ell^C\colon{}\neg\thup{\myG}{\texttt{env}}{\tau}}
&\infer[\textsc{HV-Tup}]{\PE\langle w_1,\ldots,w_n\rangle\colon{}\cross{\tau_1,\ldots,\tau_n}}
       {\PE w_1 : \tau_1&\cdots&\PE w_n : \tau_n}
\end{tabular}
\end{center}
\begin{center}
\begin{tabular}{lll}
\infer[\textsc{HV-Bound}]{\PE \bound{\ell^H}\colon{}a}{\PE \ell^H\colon{}a}
&\infer[\textsc{HV-Str}]{\PE_{\Sigma;\Xi} \cwsh\colon{}a}
        {\Sigma(c) = \vec a \to a & \PE \ell_i^H\colon{}a_i}
 &\infer[\textsc{HV-Free}]{\PE \free{}[a] \colon{}a}{}
\end{tabular}
\end{center}

\paragraph{Register Typing}
A register file $R$ simply maps registers $r_i$ to word values $w_i$ and is well-typed when all $w_i$ are well-typed.
Because the only word values in SWAM are heap locations, it suffices to consult the heap type $\Psi$:
\begin{center}
\begin{tabular}{cc}
\infer[\textsc{RF}]{\PE \heap{r_1 \hook w_1, \ldots, r_n \hook w_n} : \heap{r_1 :\tau_1, \ldots, r_n : \tau_n}}
        {\PE w_1 : \tau_1  & \cdots & \PE w_n : \tau_n}&
\infer[\textsc{WV-}\ell^H]{\PE \ell^H \colon{} \tau}{\Psi(\ell^H) = \tau}
\end{tabular}
\end{center}

\paragraph{Trail Typing}
When Prolog backtracks because a clause failed, it must revert all changes made by the failed clause.
The only such change is the binding of free variables during unification, 
thus it suffices to record bindings when they occur and revert them during backtracking.
The \emph{trail} is the data structure that records these variables.
In traditional presentations of the WAM, the trail contains variable addresses only and separate \emph{choice point} records in the call stack contain the failure continuation.
For our presentation, it simplified the formalism to store the continuation inline.
The trail is given as a list of \emph{trail frames} $(t, w_{env}, \ell^C)$ where $t$ is a list of heap locations (in the theory, annotated with types as in $\ell^H:a$), where we write the list of $t_i$ as $t_1::\cdots::t_n::\epsilon$.  The environment is $w_{env}$ and $\ell^C$ is the failure continuation. The function $\unwind(S,t)$ describes the process of backtracking one trail frame, i.e. $\unwind(S,(\ell^H:a)::t) = \unwind( \hup{S}{\ell^H}{\free{}[a]},t)$ and $\unwind(S, \epsilon) = S$.
\begin{center}
\begin{tabular}{cc}
\infer[\textsc{Trail-Nil}]{S \ent \epsilon \ok}{}
&\infer[\textsc{Trail-Cons}]{S \ent_{\Sigma;\Xi}(t,w_{env},\ell^C)::T \ok}
        {\deduce
          {\ent S'\colon{}(\Xi,\Psi')\hskip 0.1in \Psi' \ent w_{env}\colon{}\tau
            \hskip 0.1in \Psi'\ent\ell^C\colon{}\neg\heap{\texttt{env}\colon{}\tau}}
          {\unwind(S,t)= S' & S' \ent T \ok}}
\end{tabular}
\end{center}

\paragraph{Special Mode Invariants}
When the machine is in read or write mode, it maintains additional data. Read mode maintains a list of arguments not yet read, while the write modes maintain lists of arguments written so far with destination registers or locations. Prolog write mode also tracks the constructor being applied while tuple write mode tracks the number of elements left to be written in the tuple. In each case additional invariants are required, as given in the judgements $\PE\vec\ell^H \reads \vec a$, $\PE (\vec \ell^H, \ell^H,c) \writes (\vec a_2\to \{\})$ and $\PE (n,r,\vec\ell^H) \writes (\vec \tau_2 \to \{r\colon{}\cross{\vec\tau_1 \vec \tau_2}\})$. 
In each case the invariants ensure that the type of the constructor or tuple in question is consistent with both the values computed so far and the remaining spinal instructions.
\begin{center}{\footnotesize
  \begin{tabular}{cc}
\infer[\textsc{Mach-Read}]{\cdot \ent_{\Sigma;\Xi} \mread(T,S,R,I, \vec \ell^H) \ok}
        {\deduce{\PE R\colon{}\Gamma\hskip 0.1in\GE I:_sJ\hskip 0.1in \PE\vec\ell^H\reads J}
          {S\ent T\ok & \ent S\colon{}(\Xi;\Psi)}}&
\infer[\textsc{Reads}]{\PE\vec\ell^H\reads(\vec a\to \{\})}
{\PE\ell^H_i\colon{}a_i}
  \end{tabular}}
\end{center}
\begin{center}{\footnotesize
  \begin{tabular}{cc}
\infer[\textsc{Mach-Write}]{\cdot \ent_{\Sigma;\Xi} \mwrite(T,S,R,I,c,\ell^H, \vec \ell^H) \ok}
        {\deduce{ \PE R\colon{}\Gamma\hskip 0.1in\GE I:_s J \hskip 0.1in \PE (\vec \ell^H, \ell,c) \writes J}
          {S \ent T \ok&\SE S\colon{}(\Xi;\Psi)}}&
\infer[\textsc{Writes}]{\PE_{\Sigma;\Xi} (\vec\ell^H,\ell^H,c) \writes (\vec a_2\to \{\})}
        {\Sigma(c) = \vec a_1 \to \vec a_2 \to a & \PE \vec\ell^H\colon{}\vec a_1 & \Psi(\ell^H) = a}
  \end{tabular}}
\end{center}
\begin{center}{\footnotesize
  \begin{tabular}{cc}
\infer[\textsc{Mach-TWrite}]{\cdot \ent_{\Sigma;\Xi} \twrite(T,S,R,I, \vec w,r,n) \ok}
        {\deduce{\PE R\colon{}\Gamma\hskip 0.1in\GE I:_t J \hskip 0.1in \PE (\vec w, r, n) \writes J }
          {S\ent T\ok&\ent S:(\Xi;\Psi)}}&
\infer[\textsc{TWrites}]{\PE (n,r,\vec\ell^H) \writes (\vec \tau_2 \to \{r\colon{}\cross{\vec\tau_1 \vec \tau_2}\})}
{\PE \vec \ell^H \colon{} \vec \tau_1 & |\vec \tau_2| = n}
  \end{tabular}}
\end{center}

\subsection{Operational Semantics}
The dynamic semantics of SWAM are given as a small-step operational semantics.
We begin with an informal example trace executing the query \verb|?- plus(X,zero,succ(zero))| using the \verb|plus| function of Example \ref{ex:plus-wam} before
developing the semantics formally.

%We present the operational semantics by example here, with the full semantics given in Appendix B.
%We give an  evaluation trace of the query . 
For each line we describe any
changes to the machine state, i.e. the heap, trail, register file, and instruction pointer. As with the WAM, the TWAM
supports special execution modes for spines: \emph{read mode} and \emph{write mode}. When the program enters read mode,
we annotate that line with the list $\ell^H{}s$ of variables being read, and when the program enters write mode we annotate it
with the constructor $c$ being applied, the destination location $\ell^H$ and the argument locations $\vec\ell^H$. The final instruction
of a write-mode spine is best thought of two evalution steps, one of which constructs the last argument of the constructor and one of which
combines the arguments into a term.

%TODO: typeset nice
%TODO: better empty lists
% EVALEXMP
\newcommand{\indentone}{\hspace{0.1in}}
\newcommand{\indenttwo}{\hspace{0.2in}}
{\footnotesize
\begin{longtable}{ll}
  Code & Change\\
  \multicolumn{2}{l}{{\tt \# Query} ${\tt plus}(X,{\tt zero},{\tt succ(zero)})$}\\
  \multicolumn{2}{l}{{\tt query}$~\mapsto\code[\heap{}]($}\\
    \indentone{\tt put\_var}$\ r_1;$    &$H \leftarrow \hext{H}{\ell_1}{\free{}[{\tt nat}]}$, $R \leftarrow \hup{R}{r_1}{\ell_1}$\\
    \indentone{\tt put\_str}$\ r_2$, {\tt zero/0};      &$H \leftarrow \hext{H}{\ell_2}{\free{}[{\tt nat}]}, R \leftarrow \hup{R}{r_2}{\ell_2}, c={\tt zero}$\\
                             &$\ell = \ell_2, \vec\ell = \langle\rangle,$\\
                             &$H \leftarrow \hup{H}{\ell}{c\langle\vec\ell\rangle}$\\
    \indentone{\tt put\_str}$\ r_3$, {\tt succ/1};      &$H \leftarrow \hext{H}{\ell_3}{\free{}[{\tt nat}]}, R \leftarrow \hup{R}{r_3}{\ell_3}, c={\tt succ}$\\
                             &$\ell = \ell_3, \vec\ell = \langle\rangle$\\
    \indenttwo{\tt unify\_val}$\ r_2;$            &$\vec\ell \leftarrow \langle\ell_2\rangle, H \leftarrow \hup{H}{\ell}{c\langle\vec\ell\rangle}$\\
    \indentone{\tt put\_tuple}$\ r_4,0;$          &$H \leftarrow \hext{H}{\ell_4}{\langle\rangle},R\leftarrow\hup{R}{r_4}{\ell_4}$\\
    \indentone{\tt close}\ {\tt ret}$,r_4,{\tt success/0};$  &$H \leftarrow \hext{H}{\ell_5}{\close(\ell_4,{\tt success})},R\leftarrow\hup{R}{\tt ret}{\ell_5}$\\
    \indentone{\tt jmp\ }\verb|plus-zero/3|$;$         &$I \leftarrow C({\verb|plus-zero|})$ \\
  )&\\
~\\
  \multicolumn{2}{l}{{\tt plus-zero/3}$~\mapsto\code[{r_2:{\tt nat}, r_2:{\tt nat},r_3:{\tt nat},r_4:\neg\heap{}}]($}\\
    \indentone{\tt put\_tuple} $r_4,4;$          &$\vec\ell=\langle\rangle$\\
    \indenttwo   {\tt set\_val} $r_2;$           &$\vec\ell=\langle\ell_1\rangle$\\
    \indenttwo   {\tt set\_val} $r_2;$           &$\vec\ell=\langle\ell_1,\ell_2\rangle$\\
    \indenttwo   {\tt set\_val} $r_3;$           &$\vec\ell=\langle\ell_1,\ell_2,\ell_3\rangle$\\
    \indenttwo   {\tt set\_val} ${\tt ret};$          &$\vec\ell=\langle\ell_1,\ell_2,\ell_3,\ell_5\rangle$\\
                             &$H \leftarrow \hext{H}{\ell_6}{\langle\vec\ell\rangle},R\leftarrow\hup{R}{r_4}{\ell_6}$\\
    \indentone{\tt push\_bt} $r_4, {\verb|plus-succ/3|};$ &$T \leftarrow (\ell_6, {\verb|plus-succ/3|}, {\tt nil})::{\tt nil}$\\
    \indentone{\tt get\_str} $r_2, {\tt zero/0};$      &$c={\tt zero}, l = \ell_1, \vec\ell = \langle\rangle$\\
                             &$H \leftarrow \hup{H}{\ell_1}{\tt zero}, T \leftarrow (\ell_6,{\verb|plus-succ/3|}, \ell_1)::\langle\rangle$\\
    \multicolumn{2}{l}{\tt\# This instruction fails, backtrack to plus-succ/3}\\
    \indentone{\tt get\_val}$\ r_2, r_3;$          &$T \leftarrow {\tt nil}, I \leftarrow {\verb|plus-succ/3|}, H \leftarrow \hup{H}{\ell_1}{\free{}[\tt nat]}$\\
    \indentone{\tt jmp\ ret};\\
  )&\\
~\\
  \multicolumn{2}{l}{{\tt plus-succ/3}$~\mapsto \code[\{{\tt env}:\cross{{\tt nat},{\tt nat},{\tt nat},\neg\heap{}}\}]$ (}\\
    \indentone{\tt proj} $r_1, {\tt env}, 1;$& $R\leftarrow \hup{R}{r_1}{\ell_1}$\\
    \indentone{\tt proj} $r_2, {\tt env}, 2;$& $R\leftarrow \hup{R}{r_2}{\ell_2}$\\
    \indentone{\tt proj} $r_3, {\tt env}, 3;$& $R\leftarrow \hup{R}{r_3}{\ell_3}$\\
    \indentone{\tt proj\ ret}$, {\tt env}, 4;$&$R\leftarrow \hup{R}{{\tt ret}}{\ell_5}$\\
  \multicolumn{2}{l}{\# Here we are replacing a free variable with a concrete term}\\
    \indentone{\tt get\_str} ${\tt succ/1}, r_2;$      &$\ell=\ell_1, \vec\ell=\langle\rangle$\\
    \indenttwo{\tt unify\_var} $r_2;$            &$H \leftarrow \hext{H}{\ell_4}{\free{}[{\tt nat}]},R \leftarrow \hup{R}{r_2}{\ell_4},\vec\ell= \langle\ell_4\rangle$\\
                             &$H \leftarrow \hup{H}{\ell_1}{{\tt succ}\ \vec\ell},$\\
    \indentone{\tt get\_str} ${\tt succ/1}, r_3;$      &$\vec\ell = \langle\ell_2\rangle$\\
    \indenttwo{\tt unify\_var} $r_3;$            &$R \leftarrow \hup{R}{r_3}{\ell_2},$\\
                             &$H \leftarrow \hup{H}{\ell_1}{{\tt succ}\ \vec\ell},$\\
    \indentone{\tt jmp }\verb|plus-zero/3|$;$         &$I \leftarrow C(\verb|plus-zero|)$\\
  )&\\
~\\  
\multicolumn{2}{l}{{\tt plus-zero/3}$~\mapsto \code[\{r_2:{\tt nat}, r_2:{\tt nat},r_3:{\tt nat},r_4:\neg\heap{}\}]($}\\
    \indentone{\tt put\_tuple} $r_4,4;$          &$\vec\ell=\langle\rangle$\\
    \indenttwo   {\tt set\_val} $r_2;$           &$\vec\ell=\langle\ell_1\rangle$\\
    \indenttwo   {\tt set\_val} $r_2;$           &$\vec\ell=\langle\ell_1,\ell_2\rangle$\\
    \indenttwo   {\tt set\_val} $r_3;$           &$\vec\ell=\langle\ell_1,\ell_2,\ell_3\rangle$\\
    \indenttwo   {\tt set\_val} {\tt ret};       &$\vec\ell=\langle\ell_1,\ell_2,\ell_3,\ell_5\rangle$\\
                                       &$H \leftarrow \hext{H}{\ell_7}{\langle\vec\ell\rangle},R \leftarrow \hup{R}{r_4}{\ell_6}$\\
    \indentone{\tt push\_bt} $r_4, {\verb|plus-succ/3|};$ &$T \leftarrow(\ell_7,{\verb|plus-succ|},\langle\rangle) :: {\tt nil}$\\
    \indentone{\tt get\_str} $r_2, {\tt zero/0};$      &$\ell=\ell_4, \vec\ell=\langle\rangle, c = {\tt zero}$\\
                             &$H \leftarrow \hup{H}{\ell_4}{c\langle\vec\ell\rangle}$\\
    \indentone{\tt get\_val} $r_2, r_3;$          &\\
    \indentone{\tt jmp} $r_4$;                  &$I \leftarrow C(R(r_4)) = C({\tt success})$\\
)&\\
~&\\
\multicolumn{2}{l}{\tt {success/0} $\mapsto\code[\heap{}]$ (}\\
    \indentone{\tt succeed;}&\\
)&\\
\end{longtable}
\normalsize
% \begin{verbatim}
% Code:                        Changes
%   # Query plus(X,zero,succ(zero))
%   query |-> code [{}](
    
%     put_var r1;              H <- H{{l1->FREE[nat]}}, R <- R{r1->l1}
%     put_str r2, zero/0;      H <- H{{l2->FREE[nat]}}, R <- R{r2->l2} c=zero
%                              l = l2, ls = <>,
%                              H <- H{l-> c ls}
%     put_str r3, succ/1;      H <- H{{l3->FREE[nat]}}, R <- R{r3->l3} c=succ
%                              l = l3, ls = <>

%     unify_val r2;            ls <- <l2>, 
%                              H <- H{l-> c ls}
%     put_tuple r4,0;          H <- H{{l4-> <>}},R<-R{{r4->l4}}
%     close ret,r4,success/0;  H <- H{{l5->close(l4,success)}},R<-R{ret->l5}
%     jmp plus-zero/3;         I <- C(plus-zero) 
%   )

%   plus-zero/3 |-> code [{r1:nat, r2:nat,r3:nat,r4:!{}}](
%     put_tuple r4,4;          ls=<>
%        set_val r1;           ls=<l1>
%        set_val r2;           ls=<l1,l2>
%        set_val r3;           ls=<l1,l2,l3>
%        set_val ret;          ls=<l1,l2,l3,l5>
%                              H <- H{{l6-> <ls>}},R<-R{r4->l6}
%     push_bt r4, plus-succ/3; T <- (l6, plus-succ/3, nil)::nil
%     get_str r1, zero/0;      c=zero, l = l1, ls = <>
%                              H <- H{l1->zero} T <- (l6,plus-succ/3, l1)::<>
%     # This instruction fails, backtrack to plus-succ/3
%     get_val r2, r3;          T <- nil, I <- plus-succ/3, H <- H{l1->FREE[nat]}
%     jmp ret;
%   )

%   plus-succ/3 |-> code [{env:x[nat,nat,nat,!{}]}]code (
%     proj r1, env, 1;
%     proj r2, env, 2;
%     proj r3, env, 3;
%     proj ret, env, 4;
%     # Here we are replacing a free variable with a concrete term
%     get_str succ/1, r1;      l=l1, ls=<>
%     unify_var r1;            H <- H{{l4->FREE[nat]}},R <- R{r1->l4},ls= <l4>
%                              H <- H{l1->succ ls},
%     get_str succ/1, r3;      ls = <l2>
%     unify_var r3;            R <- R{r3->l2},
%                              H <- H{l1->succ ls},
%     jmp plus-zero/3;         I <- C(plus-zero)
%   )
  
% plus-zero/3 |-> code [{r1:nat, r2:nat,r3:nat,r4:!{}}](
%     put_tuple r4,4;          ls=<>
%        set_val r1;           ls=<l1>
%        set_val r2;           ls=<l1,l2>
%        set_val r3;           ls=<l1,l2,l3>
%        set_val ret;          ls=<l1,l2,l3,l5>
%                              H <- H{{l7-> <ls>}},R <- R{r4->l6}
%     push_bt r4, plus-succ/3; T <-(l7,plus-succ,<>) :: nil
%     get_str r1, zero/0;      l=l4, ls=<>, c = zero
%                              H <- H{l4-> c ls}
%     get_val r2, r3;
%     jmp r4;                  I <- C(R(r4)) = C(success)
% )

% success/0 |-> code[{}] (
%     succeed;
% )
% \end{verbatim}}
%TODO: Intuition about unmoded?
\subsection{Formal Operational Semantics}
\label{sec:simp-op}
The small-step operational semantics consists of three main judgements: 
$m\step m', m\done,$ and $m\fails,$ where $m\fails$ indicates a negative result to a Prolog query, not a stuck state.
There are also numerous auxilliary judgements for unification, backtracking, trail management, etc.
We begin with conceptually simple cases and proceed to conceptually complex ones.
The simplest instructions are {\tt mov} and {\succeed}, requiring no auxilliary judgements:
\begin{center}
\begin{tabular}{cc}
\infer[\textsc{Mov}\step]{(T,S,R,\mov{r_d}{r_s}; I) \step (T,S,\hup{R}{r_d}{w},I)}{R(r_s) = w}&
\infer[\done]{(T,S,R,\succeed) \done}{}
\end{tabular}
\end{center}
\subsubsection{Operands}
The $\jmp{op}$ instruction takes at \emph{operand} which allows us to jump either to a literal location or a success continuation stored in a register.
The \emph{operand evaluation} judgement $R \ent op\eval w$ resolves an operand $op$ into a word $w$ by consulting the registers $R$ if necessary.
If the operand is a code location, {\tt jmp} simply transfers control, else if the operand is a closure, {\tt jmp} also loads the stored environment.
\begin{center}{\footnotesize
\begin{tabular}{cc}
\infer[\textsc{Jmp-}\ell^C]{(T,S,R,\jmp op; I) \step (T,S,R,I')}
{R \ent op \eval \ell^C & S(\ell^C) = \code[\Gamma]I'}&
\infer[\textsc{Jmp-}\ell^H]{(T,S,R,\jmp op; I) \step (T,S,\hup{R}{\texttt{env}}{w_{env}},I')}
{R \ent op \eval \ell^H & S(\ell^H) = \close(w_{env}, \ell^C)
&S(\ell^C) = \code[\Gamma](I')}
\end{tabular}
\begin{tabular}{cc}
\infer[\ell^C\eval]{R\ent \ell^C\eval\ell^C}{}&
\infer[r\eval]{R\ent r\eval w}{R(r) = w}
\end{tabular}}
\end{center}
\subsubsection{Environments}
Environment tuples are constructed with the {\tt twrite} spinal mode.
This mode begins after {\tt put\_tuple} and ends when the count of remaining tuple elements reaches 0.
When the spine completes, the resulting tuple is stored in the destination register specified by the initial {\tt put\_tuple}.
As before, $\epsilon$ denotes an empty sequence. We also use the notation $\vec{w}::w$ even when adding an element $w$  to the end of a sequence $\vec{w}$.
%TODO: Say C is fixed somewhere
Reading tuple elements with {\tt proj} does not require entering a spine.
\[\infer[\textsc{PutTuple}\step]{(T,S,R,\puttuple{r, n}; I) \step
\twrite(T,S,R,I,r,n,\epsilon)}{}\]
\[\infer[\textsc{SetVal}\step]{\twrite(T, S,R, \setval r_s; I,r_d,n,\vec w) \step
\twrite(T,S,R,I,r_d,n-1,(\vec w::w))}{R(r_s) = w & n > 0}\]
\vspace{0in}
\[\infer[\textsc{TWrite}\step]{\twrite(T,S,R,I,r,0,\vec w) \step (T,\hext{S}{\ell^H}{\args{\vec w}},\hup{R}{r}{\ell^H},I)}{}\]
\[\infer[\textsc{Proj}\step]{(T,S,R,\proj r_d, r_s, i; I) \step (T,S,\hup{R}{r_d}{w_i},I)}
{R(r_s)=\ell^H & S(\ell^H) = \args{w_1,\ldots,w_i,\ldots,w_n}}\]
\subsubsection{Continuations and Backtracking}
\label{sec:cont-bt}
The instructions {\tt close} and {\tt push\_bt} allocate new success and failure continuations, respectively.
The {\tt close} instruction puts the continuation in a register $r_d$ for use by a future {\tt jmp}, 
whereas {\tt push\_bt} puts the failure continuation in the trail.
As shown in Section \ref{sec:trailing}, the trail maintains an invariant that the location of every free variable
bound since the last {\tt push\_bt} is stored in the current trail frame, which is necessary when backtracking.
Because we have just created a new failure continuation, after a {\tt push\_bt} our new trail frame contains the empty list $\epsilon$.
Backtracking is handled automatically when unification fails, thus there is no need to make the failure continuation accessible via a register.
\[\infer[\textsc{Close}\step]{(T,S,R,\close r_d, r_e, \ell^C; I) \step (T,\hext{S}{\ell^H}{\close(w_{env},\ell^C)},\hup{R}{r}{\ell^H},I)}
{R(r_e) = w_{env}}\]
\[\infer[\textsc{PushBT}\step]{(T,S,R,\branch r_e, \ell^C; I) \step ((\epsilon,w_{env},\ell^C)::T,S,R,I)}
{R(r_e) = w_{env}}\]
The trail invariant is essential to the correctness of the $\bt$ operation, which succeeds when there is failure continuation on the trail, or signals query failure when the trail is empty.
Trail unwinding has its inverse in trail updating, which adds a recently-bound variable to the trail (or safely skips it if there are no failure continuations left).
\begin{center}{\small
  \begin{tabular}{cc}
\infer[\textsc{BT-Cons}]{\bt(S,(t,w,\ell^C)::T) = (T,S',\heap{\texttt{env} \hook w},I)}
        {\unwind(S,t) = S' & C(\ell^C) = \code[\heap{\texttt{env}:\tau}](I)}&
\infer[\textsc{BT-Nil}]{\bt(S,\epsilon) = \bot}{}\\[0.08in]
 $\unwind(S,(\ell^H:a)::t) = \unwind( \hup{S}{\ell^H}{\free{}[a]},t)$ & $\unwind(S, \epsilon) = S$\\[0.08in]
$\uptrail(\ell^H:a,(t,w_{env},\ell^C)::T) = ((\ell^H:a)::t,w_{env},\ell^C)::T$&$\uptrail((\ell^H:a),\epsilon) = \epsilon$
  \end{tabular}}
\end{center}

\subsubsection{Unification, Occurs Checks, and Trailing}
\label{sec:trailing}
The {\tt get\_val} instruction unifies two arbitrary Prolog terms stored at $r_1$ and $r_2$.
It does so using the auxilliary judgement $\unify(S,T,\ell^H_1,\ell^H_2) = (S',T')$, which computes the resulting store where $\ell^H_1$ and $\ell^H_2$ are unified, or $\bot$ if unification fails.
It also must compute an updated trail, because unification binds free variables, and backtracking must be able to undo those changes.
\[\infer[\textsc{GetVal}\step]{(T, S,R, \getval{r_1}{r_2}; I) \step (T', S',R,I)}
{R(r_1) = \ell^H_1 & R(r_2) = \ell^H_2 & \unify(S,T,\ell^H_1,\ell^H_2) = (S',T')}\]
The judgement $\unify(S,T,\ell^H_1,\ell^H_2) = (S',T')$ is defined by mutual recursion with the judgement $\unifyargs(S,T,\vec \ell^H, \vec \ell'^H)=(S',T')$ which simply unifies every $\ell_i^H$ with the corresponding $\ell_i'^H$. 
These lists $\vec\ell^H$ and $\vec\ell'^H$ correspond to the push-down list (PDL) in other presentations of the WAM.
An additional judgement $\edn(S,\ell^H)$ follows chains of $\bound{\ell^H}$ pointers to their ends.
Because the typing invariant for heaps ensures absence of cycles, this is guaranteed to terminate.
The basic unification algorithm says to recurse if both unificands are structures, or if either is a free variable, then bind it to the other unificand.
However, unification must also maintain the invariant that the heap is free of cycles, thus we employ an occurs check in our algorithm, 
writing $\ell_1^H \in_S \ell_2^H$ when $\ell_1^H$ occurs in $\ell_2^H$ (occurs check failure)  and $\ell_1^H \notin_S \ell_2^H$ otherwise (occurs check success).\footnote{Typing constraints ensure that at runtime, the occurs check is only ever invoked on Prolog terms. However, the occurs check is also of broader use in the metatheory proofs, and there it is convenient to define the occurs check on closures and tuples as well, such as in Lemma~\ref{lem:heap-str}.}
Additionally, we employ the $\uptrail$ function to maintain the trail invariants when binding free variables.
{\footnotesize\begin{center}
\begin{tabular}{c}
\infer[\in{}c\langle\rangle]{\ell^H_1 \in_S \ell^H_2}{S(\ell^H_2) = c\args{\ell'^H_1,\ldots,\ell'^H_n} & \ell^H_1 \in_S \ell'^H_i\ (\exists i \in [n])}
\end{tabular}
\end{center}
\begin{center}
\begin{tabular}{cc}
\infer[\in=]{\ell^H_1 \in_S \ell^H_2}{\ell^H_1 = \ell^H_2}&
\infer[\in\bound{}]{\ell^H_1 \in_S \ell^H_2}{S(\ell^H_2) = \bound{\ell'^H_2} & \ell^H_1 \in_S \ell'^H_2}
\end{tabular}
\end{center}

\begin{center}
\begin{tabular}{cc}
\infer[\notin\free{}]{\ell^H_1 \notin_S \ell^H_2}{\ell^H_1 \neq \ell^H_2 & S(\ell^H_2)=\free{}[a]}&
\infer[\notin\bound{}]{\ell^H_1 \notin_S \ell^H_2}{S(\ell^H_2) = \bound{\ell'^H_2} & \ell^H_1 \notin_S \ell'^H_2}
\end{tabular}
\end{center}
\begin{center}
\begin{tabular}{c}
\infer[\notin{}c\langle\rangle]{\ell^H_1 \notin_S \ell^H_2}{S(\ell^H_2) = c\args{\ell'^H_1,\ldots,\ell'^H_n} & \ell^H_1 \notin_S \ell'^H_i\ (\forall i \in [n])}
\end{tabular}
\end{center}
\begin{center}
\begin{tabular}{cc}
\infer[\in\langle\rangle]{\ell^H_1 \in_S \ell^H_2}{S(\ell^H_2) = \args{\ell'^H_1,\ldots,\ell'^H_n} & \ell^H_1 \in_S \ell'^H_i\ (\exists i \in [n])}&
\infer[\notin\langle\rangle]{\ell^H_1 \notin_S \ell^H_2}{S(\ell^H_2) = \args{\ell'^H_1,\ldots,\ell'^H_n} & \ell^H_1 \notin_S \ell'^H_i\ (\forall i \in [n])}
\end{tabular}
\end{center}
\begin{center}
\begin{tabular}{cc}
\infer[\in\close]{\ell^H_1 \in_S \ell^H_2}{S(\ell^H_2) = \close(\ell'^H_2,\ell^C) & \ell^H_1 \in_S \ell'^H_2}&
\infer[\notin\close]{\ell^H_1 \notin_S \ell^H_2}{S(\ell^H_2) = \close(\ell'^H_2,\ell^C) & \ell^H_1 \notin_S \ell'^H_2}
\end{tabular}
\end{center}
\begin{center}
\begin{tabular}{cc}
  \infer[\edn~\free{}]{\edn(S,\ell^H) = \ell^H}{S(\ell^H) = \free{}[a]}&
\infer[\edn~c\langle\rangle]{\edn(S,\ell^H) = \ell^H}{S(\ell^H) = \cwsh}
\end{tabular}
\end{center}
\begin{center}

\begin{tabular}{c}
\infer[\edn~\bound{}]{\edn(S,\ell^H) = \ell''^H}{S(\ell^H) = \bound{\ell'^H} & \edn(S,\ell'^H) = \ell''^H}\\
\infer[{\tt unify}=]{\unify(S,T,\ell_1^H,\ell_2^H) = (S,T)}
        {\edn(\ell_1^H) = \ell^H & \edn(\ell_2^H) = \ell^H}
\infer[{\tt unify}~\free{}]{\unify(S,T,\ell_1^H,\ell_2^H) = (\hup{S}{\ell'^H_2}{\bound{\ell_1^H}},T')}
        {\deduce{{\ell'}_2^H \notin_S \ell_1\hskip 0.1in \uptrail(T,({\ell'}^H_2:a))=T'}
          {\edn(S,\ell_1^H) = \ell'^H_1 & \edn(S,\ell^H_2) = {\ell'}^H_2 & S(\ell'^H_2) = \free{}[a]}}\\
\infer[{\tt unify}~c\langle\rangle]{\unify(S,T,\ell_1^H,\ell_2^H) = (S',T')}
        {\deduce{\edn(S,\ell_2^H) = \ell'^H_2\hskip 0.1in\unifyargs(S,T,\vec \ell^H, \vec \ell'^H) =(S',T')}
          {S(\ell'^H_1) = \func{c}{\vec \ell^H} & S(\ell'^H_2) = \func{c}{\vec \ell'^H} & \edn(S,\ell_1^H) = \ell'^H_1}}\\
\end{tabular}
\end{center}
\begin{center}
\begin{tabular}{cc}
\infer[\textsc{UA-Cons}]{\unifyargs(S,T,(\ell^H_1::\cdots::\ell_n^H),(\ell'^H_1::\cdots::\ell'^H_n)) = (S'',T'')}
        {\deduce{\unifyargs(S',T',(\ell^H_2::\cdots::\ell^H_n),(\ell'^H_2::\dots::\ell'^H_n)) = (S'',T'')}{\unify(S,T,\ell^H_1,\ell^H_2) = (S',T')}}
&\infer[\textsc{UA-Nil}]{\unifyargs(S,T,\epsilon,\epsilon) = (S,T)}{}
\end{tabular}
\end{center}}

The failure cases for unification are straightforward, but they are given here for completeness:
\begin{center}\footnotesize\begin{tabular}{cc}
\infer[\textsc{U}\bot1]{\unify(S,T,\ell^H_1,\ell^H_2) = \bot}
        {\deduce{\edn(S,\ell^H_1) = \ell'^H_1 \hskip 0.1in\edn(S,\ell^H_2) = \ell'^H_2}
        {S(\ell'^H_1) = \free{}[a] & \ell'^H_1 \in_S \ell^H_2}}&
\infer[\textsc{U}\bot2]{\unify(S,T,\ell^H_1,\ell^H_2) = \bot}
        {\deduce{\edn(S,\ell^H_1) = \ell'^H_1 \hskip 0.1in\edn(S,\ell^H_2) = \ell'^H_2}
          {S(\ell'^H_2) = \free{}[a] & \ell'^H_2 \in_S \ell^H_1}}\\
\infer[\textsc{U}\bot3]{\unify(S,T,\ell^H_1,\ell^H_2) = \bot}
        {\deduce{\edn(S,\ell^H_1) = \ell'^H_1 \hskip 0.1in\edn(S,\ell^H_2) = \ell'^H_2}
           {S(\ell'^H_1) = \func{c}{\vec w} & S(\ell'^H_2) = \func{c'}{\vec w'} & c \neq c'}}
&
\infer[\textsc{U}\bot4]{\unify(S,T,\ell^H_1,\ell^H_2) = \bot}
        {\deduce{\edn(S,\ell^H_1) = \ell'^H_1 \hskip 0.1in \edn(S,\ell^H_2) = \ell'^H_2 \hskip 0.1in S(\ell'^H_1) = \func{c}{\vec \ell^H} }
           {\unifyargs(S,T,\vec \ell^H, \vec \ell'^H) = \bot &  S(\ell'^H_2) = \func{c}{\vec \ell'^H}}}\\
\infer[\textsc{UA}\bot1]{\unifyargs(S,T,(\ell^H::\vec \ell^H),(\ell'^H::\vec \ell'^H)) = \bot}
        {\unify(S,T,\ell^H,\ell'^H) = \bot}&
\infer[\textsc{UA}\bot2]{\unifyargs(S,T,(\ell^H::\vec \ell^H),(\ell'^H::\vec \ell'^H)) = \bot}
        {\deduce{\unify(S,T,\ell^H,\ell'^H) = (S',T')}{\unifyargs(S',T',\vec\ell^H,\vec\ell'^H) = \bot}}
\end{tabular}
\end{center}
Lastly, if unification fails, due to the above failure rules, {\tt get\_val} tries backtracking.
If the trail is non-empty, it backtracks successfully, else execution stops and the query has failed.
\[\infer[\textsc{GetVal-BT}]{(T, S,R, \getval{r_1}{r_2}; I) \step m'}
{R(r_1) = \ell^H_1 & R(r_2) = \ell^H_2 & \unify(S,T,\ell^H_1,\ell^H_2) = \bot & \bt(S,T) = m'}\]
\[\infer[\textsc{GetVal-}\bot]{(T, S,R, \getval{r_1}{r_2}; I) \fails}
{R(r_1) = \ell^H_1 & R(r_2) = \ell^H_2 & \unify(S,T,\ell^H_1,\ell^H_2) = \bot &
\bt(S,T) =\bot}\]
\subsubsection{Term Constructors and Occurs Checks}
The {\tt put\_var} instruction immediately allocates a free variable.
Structures are constructed with a write spine, initiated by {\tt put\_str}.
For symmetry with {\tt get\_str}, we first allocate a free variable and replace it with a structure when the spine completes.
Within a write spine, {\tt unify\_var} allocates free variables while {\tt unify\_val} copies a value into the structure.
In the typical case, a write spine finishes when enough arguments have been computed (one for each constructor argument, i.e. $\arity(c)$) by replacing the free variable with a complete structure.
However, the formalism technically allows us to refer to the destination $r$ within its own write spine.
To prevent a cycle, we perform an occurs check ($\ell^H \notin_S \ell_i^H$) and backtrack on failure.
The choice to use an occurs check here was made for its resulting proof simplicity.
For implementation purposes, an equally correct and more efficient choice is to enforce a syntactic restriction that prohibits references to $r$ within its own write spine.
\vspace{0in}
\[\infer[\textsc{PutVar}\step]{(T,S,R,\putvar{a}{r};I) \step (T,\hext{S}{\ell^H}{\free{}[a]},\hups{R}{r \hook \ell^H},I)}{}\]
\[\infer[\textsc{PutStr}\step]{(T,S,R,\putstr{c}{r}; I) \step
\mwrite(T,
  \hext{S}{\ell^H}{\free{}[a]},\hup{R}{r}{\ell^H},I,c,\ell^H,\epsilon)}{\Sigma(c) = \vec a \to a}\]
\vspace{0.0in}
\[\infer[\textsc{UnifyVar}\step\textsc{W}]{\deduce{\mwrite(T, \hext{S}{\ell^H}{\free{}[a]},\hup{R}{r_s}{\ell^H},I,c,\ell^H_d,(\vec \ell^H::\ell^H))}{\mwrite(T,S,R,\unifyvar[a] r_s; I,c,\ell^H_d,\vec \ell^H) \step}}{}\]
\vspace{0.0in}
\[\infer[\textsc{UnifyVal}\step\textsc{W}]{\mwrite(T, S,R,\unifyval r_s; I,c,\ell^H_d,\vec \ell^H) \step
\mwrite(T,S,R,I,c,\ell^H_d,(\vec \ell^H::\ell^H))}{R(r_s) = \ell^H}\]
\[\infer[\textsc{Write}\step]{\mwrite(T,S,R,I,\ell^H,c,\vec \ell^H) \step
(T,\hup{S}{\ell^H}{c\args{\vec \ell^H}},R,I)}
{ S(\ell^H) = \free{}[a]  & \ell^H \notin_S  \ell^H_i & |{\vec \ell^H}| = \arity(c) }\]
\[\infer[\textsc{Write-BT}]{\mwrite(T,S,R,I,\ell^H,c,\vec \ell^H) \step m'}
{\ell^H \in_S \ell^H_i & |{\vec \ell^H}| = \arity(c) & \bt(S,T) = m'}\]
\[\infer[\textsc{Write-}\bot]{\mwrite(T,S,R,I,\ell^H,c,\vec \ell^H) \fails}
{\ell^H \in_S \ell^H_i & |{\vec \ell^H}| = \arity(c) & \bt(S,T) = \bot}\]

\subsubsection{Term Destructor {\tt get\_str}}
The instruction ${\tt get\_str}\ c,r$ starts a spine which unifies the content of register $r$ with a structure $\cwsh$ where each $w_i$ is provided by the $i$'th spinal instruction.
In the case where $r$ contains a free variable, this amounts to building a new structure, and thus the {\tt write} case of {\tt get\_str} simply reuses the {\tt write} spines of {\tt put\_str}.
\[\infer[\textsc{GetStr}\step\textsc{W}]{(T,S,R,\getstr {c}{r}; I) \step
\mwrite(T, S,R,I,c,\ell'^H,\epsilon)}
{R(r) = \ell^H & \edn(S,\ell^H) = \ell'^H & S(\ell'^H) = \free{}[a]}\]
When $r$ contains a structure, we perform structure-to-structure unification.
This can fail in two ways: either the head constructor $c$ does not match or one of the argument positions does not unify.
The first conditon is checked during {\tt get\_str} itself, the other during the ensuing spinal instructions.
In both cases, errors are handled by backtracking if possible:
\[\infer[\textsc{GetStr}\step\textsc{R}]{(T, S,R,\getstr{c}{r}; I) \step
\mread(T, S,R,I,\vec \ell^H)}
{R(r) = \ell^H & \edn(S,\ell^H) = \ell'_H & S(\ell'^H) = c\args{\ell^H_1,\ldots,\ell^H_n}}\]
\[\infer[\textsc{GetStr-BT}]{(T,S,R,\getstr {c}{r}; I) \step m'}
        {\deduce{S(\ell'^H)=c'\ells \hskip 0.1in c \neq c' \hskip 0.1in \bt(S,T) = m'}{R(r)= \ell^H &\edn(S,\ell^H) = \ell'^H}}\]
\[\infer[\textsc{GetStr-}\bot]{(T,S,R,\getstr {c}{r}; I) \fails}
 {\deduce{S(\ell'^H)=c'\ells \hskip 0.1in c \neq c' \hskip 0.1in \bt(S,T) = \bot}{R(r)= \ell^H & \edn(S,\ell^H) = \ell'^H}}
\]
\vspace{-0.05in}
\[\infer[\textsc{UnifyVar}\step\textsc{R}]{\mread(T,S,R,\unifyvar [a]r;I, (\ell^H::\vec \ell^H)) \step
\mread(T,S,\hup{R}{r}{\ell^H},I,\vec \ell^H)}{~\hskip 0.1in}\]
\[\infer[\textsc{UnifyVal}\step\textsc{R}]{\mread(T, S,R,\unifyval r; I, (\ell^H::\vec \ell^H)) \step
\mread(T',S',R,I, \vec \ell^H)}
{R(r) = \ell'^H & \unify(S,T,\ell^H,\ell'^H) = (S',T')}\]
\[\infer[\textsc{UnifyVal-BT}]{\mread(T,S,R,\unifyval r; I, (\ell^H::\vec \ell^H)) \step m'}
{R(r) = \ell'^H & \unify(S,T,\ell^H,\ell'^H) = \bot & \bt(T) = m'}\]
\[\infer[\textsc{UnifyVal-}\bot]{\mread(T,S,R,\unifyval r; I, (\ell^H::\vec \ell^H)) \fails}
{R(r) = \ell'^H & \unify(S,T,\ell^H,\ell'^H) = \bot & \bt(T) = \bot}\]
\subsection{Metatheory}
\label{sec:simple-metatheory}
%\label{sec:dep-met}
In both the SWAM and dependently-typed TWAM, the main metatheorems are progress and preservation.
\begin{thm*}[Progress] If $\SE m \ok$ then either $m \done$ or $m \fails$ or $m \step m'$.\end{thm*}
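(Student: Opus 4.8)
The plan is to argue by case analysis on the shape of $m$, inverting the appropriate machine-typing rule (\textsc{Mach}, \textsc{Mach-Read}, \textsc{Mach-Write}, \textsc{Mach-TWrite}) and then casing on the last rule of the instruction-sequence typing derivation. For an ordinary state $m = (T,S,R,I)$, inverting \textsc{Mach} gives $\SE S : (\Xi;\Psi)$, $\PE R : \Gamma$, $\GE I \ok$, and $S \ent T \ok$; we then case on $\GE I \ok$. For \textsc{Succeed} we have $m \done$; for \textsc{Mov}, \textsc{PutVar}, \textsc{PushBT}, \textsc{Close}, \textsc{PutTuple} the machine steps unconditionally, the only side conditions being that the registers named by the instruction lie in $\dom{R}$ (immediate from $\PE R : \Gamma$) and, for \texttt{push\_bt}/\texttt{close}, that $R(r_e)$ is some word (likewise). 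For \textsc{Proj} and the two forms of \textsc{Jmp} we use a \emph{canonical forms} lemma relating $\Psi$-types to heap contents: $\PE w : \cross{\vec\tau}$ forces $S(w) = \args{w_1,\ldots,w_n}$ with $n = |\vec\tau|$, so \textsc{Proj}$\step$ applies; $\PE w : \neg\Gamma'$ forces $S(w)$ to be a $\close(w_{env},\ell^C)$, so with \textsc{Code-Sec} (which makes $S(\ell^C)$ a $\code[\Gamma''](I'')$ value) the register form of \texttt{jmp} steps, and the code-location form uses \textsc{Code-Sec} directly. Finally \textsc{GetStr} and \textsc{PutStr} enter write/read mode: \textsc{PutStr}$\step$ always applies, while \texttt{get\_str} dispatches on $\edn(S,R(r))$, for which we first need a lemma that $\edn$ is total on locations of atomic type and that its result is either a $\free{}$ cell (enter write mode) or a structure cell $c'\args{\cdots}$ (enter read mode when $c' = c$, backtrack when $c' \neq c$).

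For the spinal states we invert the relevant \textsc{Mach-*} rule and case on the spine-typing derivation ($:_s$ or $:_t$). The length invariants built into \textsc{Reads}, \textsc{Writes}, and \textsc{TWrites} guarantee that the pending argument list is empty exactly when the spine-typing derivation has reached its postcondition; hence in every non-terminal case the next instruction is one of $\unifyvar r$, $\unifyval r$, $\setval r$ and the argument list is nonempty, so one of the per-element stepping rules applies (those for \texttt{unify\_var} and \texttt{set\_val} unconditionally), while in the terminal case we use the spine-completion rule (\textsc{Write}$\step$, \textsc{TWrite}$\step$, or the read-mode spine-end rule). The only operations that can fail are unification --- invoked by \textsc{GetVal} and by read-mode \texttt{unify\_val} --- and the occurs check performed when a write spine completes. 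Here we appeal to totality lemmas: $\unify(S,T,\ell^H_1,\ell^H_2)$ either returns some $(S',T')$, in which case the machine steps, or returns $\bot$; likewise exactly one of $\ell^H \in_S \ell^H_i$, $\ell^H \notin_S \ell^H_i$ holds. When unification or the occurs check fails we invoke $\bt(S,T)$; the trail-typing invariant $S \ent T \ok$ (via \textsc{Trail-Cons}) ensures that for each frame $\unwind(S,t)$ is defined and $C(\ell^C)$ is a code value with the expected \texttt{env} type, so $\bt(S,T)$ yields a valid state when $T \neq \epsilon$ (the machine steps) and yields $\bot$ only when $T = \epsilon$ (and then $m \fails$).

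The crux of the argument is this family of totality lemmas for the mutually recursive auxiliary judgements $\edn$, $\in_S$/$\notin_S$, $\unify$, and $\unifyargs$ --- that on well-typed stores each is defined (possibly as $\bot$) rather than merely underivable. These rest entirely on acyclicity of the heap, which, as noted in Section~\ref{sec:simp-rep-inv}, is witnessed by the typing derivation $\mathcal D : (\SE H : \Psi)$, since \textsc{Heap-Cons} forces every cell to mention only previously added cells. I would extract from $\mathcal D$ the well-founded order in which \textsc{Heap-Cons} introduces the reachable locations and run the termination arguments by induction on it: following $\bound{\ell^H}$ pointers strictly descends, the occurs check recurses only into strictly smaller subterms, and $\unify$/$\unifyargs$ terminate by a lexicographic measure combining this order with the multiset of term sizes. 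A subsidiary obligation, discharged alongside, is that the machine invariants actually pin down the concrete shape of every cell that is touched --- e.g.\ that the destination of a write-mode state is literally a $\free{}$ cell, not merely a cell of atomic type --- which is exactly the canonical-forms reasoning above; care is needed here because $\Psi$ types only the reachable fragment of $H$.
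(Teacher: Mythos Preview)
Your proposal is correct and follows essentially the same approach as the paper: case analysis on the machine state (and on the head instruction), with appeals to Canonical Forms, End Correctness, Soundness/totality of \unify, totality of the occurs check, and Backtracking Totality to dispatch the unification- and failure-sensitive cases. Your write-up is in fact considerably more detailed than the paper's, which for the simply-typed system defers to the dependent version and there merely lists the lemmas invoked; the only ingredients you do not mention (Static Unification Totality, Operand Canonicalization/Preservation, Word Inversion) are needed only in the dependent system and are irrelevant to the SWAM statement you are proving.
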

\begin{thm*}[Preservation] If $\SE m \ok$ and $m \step m'$ then $\SE m' \ok$.\end{thm*}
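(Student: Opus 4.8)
The plan is to prove Progress and Preservation together by an outer case analysis on which of the four shapes $m$ takes --- a normal state $(T,S,R,I)$, or one of the spinal states $\mread(\cdots)$, $\mwrite(\cdots)$, $\twrite(\cdots)$ --- followed, after inverting the matching machine-typing rule (\textsc{Mach}, \textsc{Mach-Read}, \textsc{Mach-Write}, \textsc{Mach-TWrite}), by a case analysis on the head instruction of $I$ (for spinal states, also on whether the remaining PDL list / tuple count is empty). Inverting \textsc{Mach} hands us $\SE S : (\Xi;\Psi)$, $\PE R : \myG$, $\GE I \ok$, and $S \ent T \ok$; a further inversion of $\GE I \ok$ on the head instruction supplies exactly the side-conditions that instruction's operational rule needs, modulo the lemmas below. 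For Progress I then exhibit an applicable rule in each case; the only instructions with more than one possible outcome are the unification-bearing ones ({\tt get\_val}, read-mode {\tt unify\_val}) and the completion step of a {\tt put\_str}/{\tt get\_str} write spine, where I split on whether \unify\ (resp.\ the occurs check) returns $\bot$, and then on whether $\bt(S,T) = \bot$ --- yielding $m \step m'$ and $m \fails$ respectively --- while $\succeed$ gives $m \done$. For Preservation I reconstruct a \textsc{Mach}-style derivation for $m'$, choosing a fresh witness heap type $\Psi' \supseteq \Psi$ wherever the step allocated or refined heap cells.

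The supporting lemmas I would establish first are: (1) an inversion/canonical-forms lemma reading the shape of a heap value off its type (atomic $a$ forces $\cwsh$, $\free{}[a]$, or $\bound{\ell^H}$; $\neg\myG$ forces $\close(w,\ell^C)$; $\cross{\vec\tau}$ forces $\langle\vec w\rangle$), together with the fact that $\edn(S,\ell^H)$ always terminates at a $\free{}$ or structure cell of the same atomic type (using that \textsc{Heap-Cons} witnesses acyclicity); (2) heap weakening: the judgements $\SE H : \Psi$, $\PE w : \tau$, $\PE R : \myG$, $S \ent T \ok$ all survive extending $H$ by a fresh location, and survive overwriting a cell currently of type $\free{}[a]$ (or carrying no type at all) by a value of type $a$ --- the acyclicity side being the subtle part; (3) a unification lemma: if $\SE S : (\Xi;\Psi)$, $S \ent T \ok$, $\PE \ell^H_1 : a$, $\PE \ell^H_2 : a$, then (a) if $\unify(S,T,\ell^H_1,\ell^H_2) = (S',T')$ there is $\Psi' \supseteq \Psi$ with $\SE S' : (\Xi;\Psi')$, $S' \ent T' \ok$, and $\Psi'$ still types everything $\Psi$ did; and (b) if $\unify(S,T,\ell^H_1,\ell^H_2) = \bot$ then $\bt(S,T)$, when defined, is a well-typed state; (4) the trail/backtracking lemma (Lemma~\ref{lem:trail-up}): if $S \ent T \ok$ and $\bt(S,T) = (T',S',R',I')$ then $\SE (T',S',R',I') \ok$, exploiting that $\Psi$ need not cover the whole heap so the cells bound since the last {\tt push\_bt} can simply be dropped from the witness, while \textsc{Trail-Cons} already records a typing for the store produced by $\unwind$; (5) an occurs-check lemma: $\ell^H_1 \notin_S \ell^H_2$ guarantees that redirecting $\ell^H_1$'s end to $\ell^H_2$ leaves the heap acyclic, so \textsc{Heap-Cons} still orders it. For the spinal cases I additionally need the obvious preservation facts for the $:_s$ and $:_t$ judgements against \textsc{Reads}/\textsc{Writes}/\textsc{TWrites}, but each spinal step either allocates a fresh $\free{}[a]$, copies a register, or performs a \unify, so these follow from (1)--(3).

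The main obstacle is lemma (3), unification, and its entanglement with the trail. \unify\ is defined by mutual recursion with \unifyargs, it mutates the heap by turning $\free{}[a]$ cells into $\bound{\ell^H}$ pointers, and it threads the trail through \uptrail, so the proof must be an induction on the \unify/\unifyargs\ derivation with an induction hypothesis strong enough to talk about the intermediate stores and PDL lists that \unifyargs\ traverses. At each $\free{}$-binding step I have to discharge three obligations at once: that $\bound{\ell'^H_2}$ has the same atomic type as the overwritten $\free{}[a]$ (so register and heap typings are preserved); that the occurs-check premise keeps the heap acyclic so \textsc{Heap-Cons} still applies (lemma (5)); and that \uptrail\ preserves $S' \ent T' \ok$, which in turn relies on the invariant --- to be stated and maintained as part of the meaning of \textsc{Mach}, cf.\ Section~\ref{sec:trailing} --- that the current trail frame records \emph{every} variable bound since the last {\tt push\_bt}, so that $\unwind$ is genuinely the inverse of the bindings performed. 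Once (3) is in hand, the backtracking cases of both theorems (\textsc{GetVal-BT}, \textsc{Write-BT}, \textsc{UnifyVal-BT}) close via (3b) and (4), and everything remaining --- {\tt mov}, {\tt jmp} (using \textsc{Code-Sec} and \textsc{Op-}$\ell^C$ to obtain a well-typed target block, and \textsc{HV-Close} for the closure case), {\tt close}, {\tt push\_bt}, {\tt proj}, and the tuple spine --- is routine bookkeeping with lemmas (1)--(2).
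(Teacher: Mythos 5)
Your plan matches the paper's own development essentially step for step: preservation by cases on the step relation (with all the backtracking cases discharged at once via backtracking totality, Lemma~\ref{lem:bt-tot}), resting on canonical forms and weakening (Lemmas~\ref{lem:scf}, \ref{lem:sweak}), soundness of \unify{} threaded together with the trail invariant (Lemmas~\ref{lem:unify-sound}, \ref{lem:trail-up}), and the occurs-check-justified heap update (Lemma~\ref{lem:heap-update}). The only place you are thinner than the paper is that last ingredient: since heap typing derivations are \emph{ordered} witnesses of acyclicity, establishing $\SE \hup{H}{\ell^H_1}{\bound{\ell^H_2}} \colon \Psi$ is not merely a matter of ``\textsc{Heap-Cons} still orders it'' but requires exhibiting a new topological ordering of the derivation (the paper's Heap Reordering subclaim), which is where most of the actual work in the paper's proof lies.
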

%\begin{thm*}[Safety] If $\SE m \ok$ and $m \step^* m'$ then $m' \done$ or $m' \fails$ or $m' \step m''$.\end{thm*}
Where $m \fails$ mean that a Prolog query terminated normally, but the query had no solution.

In Section \ref{sec:dep-met}, the progress and preservation results for the TWAM will be strong enough to enable certifying compilation.
In the SWAM, progress and preservation amount to type and memory-safety.
Because the theorem of Section \ref{sec:dep-met} subsumes progress and preservation for SWAM,
we restrict ourselves here to the commonalities and present the differences in Section \ref{sec:dep-met}.
For the sake of readability, both this section and Section \ref{sec:dep-met} give proof sketches where the reader might find a detailed proof tedious.
For the sake of exhaustiveness, an extended proof for the dependent system is given in the electronic appendix, however.

The metatheory for SWAM begins with standard preliminary lemmas such as canonical forms and weakening.
This is followed with the heart of the metatheory: our treatment of the occurs check and unification.

The key lemma Heap Update (Lemma~\ref{lem:heap-update}) shows that binding free variables preserves the acyclic heap invariant when the occurs check passes, 
which gives us preservation for unification and thus every instruction that depends on unification.
\subsubsection{Preliminaries}
\begin{lem}[Canonical Forms] 
\label{lem:scf}
Canonical forms consists of a subclaim for each relevant class of values.
\begin{itemize}
  \item \emph{Code Values:} If $\PE v^C : \tau$ then $\tau = \neg \Gamma$ and $v^C = \code[\Gamma](I)$
  \item \emph{Word Values:} If $\PE_{\Sigma;\Xi} w : \tau$ and $(C,H) : (\Xi;\Psi)$ then $w$ has form $\ell^H$ or $\ell^C$ where $\ell^H \in\dom{H}$ or $\ell^C\in\dom{C}$.
  \item \emph{Heap Values:} If $\PE v^H : \tau$ then
  \begin{itemize}
    \item Either $\tau = \cross{\vec \tau}$ and $v = \args{\vec w}$ and $\PE \vec w : \vec \tau$
    \item Or $\tau = a$ and either $v = \bound{w}$ or $v = \free{}[a]$ or $v = \func{c}{\vec w}$ where $\PE \vec w : \vec{a}$ and $\Sigma(c)=\vec{a}\to a$.
    \item Or $\tau = \neg \Gamma$ and  $v = \close(w,\ell^C)$ where $\PE w :\neg\myG({\tt env})$.
  \end{itemize}
\end{itemize}
\end{lem}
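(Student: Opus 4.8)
The plan is to prove the three subclaims separately, in each case by inversion on the supplied value-typing derivation. The guiding observation is that the value-typing judgements are syntax-directed up to value shape: for a given shape of value there is exactly one typing rule whose conclusion can match it, and that rule also pins down the shape of the type. So after naming the last rule of the derivation, each subclaim follows by simply reading off its premises; no induction on typing derivations is needed, except for a single one-line bookkeeping step in the word-value case. For \emph{code values}, inversion is immediate: the only rule concluding $\PE v^C : \tau$ types a code value $\code[\myG](I)$, forcing $\tau = \neg\myG$ and $v^C = \code[\myG](I)$ (the residual premise $\myG \ent I \ok$ plays no role in the statement).

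For \emph{heap values} I would case-analyse the last rule of $\PE v^H : \tau$ among \textsc{HV-Tup}, \textsc{HV-Bound}, \textsc{HV-Free}, \textsc{HV-Str}, and \textsc{HV-Close}; these conclusions are pairwise non-overlapping, so the case split is exhaustive. \textsc{HV-Tup} delivers the tuple alternative $\tau = \cross{\vec\tau}$, $v^H = \args{\vec w}$ with $\PE \vec w : \vec\tau$. \textsc{HV-Bound}, \textsc{HV-Free}, and \textsc{HV-Str} each give $\tau = a$, with $v^H$ respectively a pointer $\bound{\ell^H}$, a free variable $\free{}[a]$, or a structure $\cwsh$ together with $\Sigma(c) = \vec a \to a$ and $\PE \vec\ell^H : \vec a$; these are the three atomic alternatives. \textsc{HV-Close} gives $\tau = \neg\myG$ with $v^H = \close(w_{env},\ell^C)$, where $w_{env}$ has precisely the type that $\ell^C$ expects in register \texttt{env}. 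Each case thus matches the statement verbatim.

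For \emph{word values}, a SWAM word is a heap location $\ell^H$ or a code location $\ell^C$, and the applicable rules are \textsc{WV-}$\ell^H$, whose premise $\Psi(\ell^H) = \tau$ forces $\ell^H \in \dom{\Psi}$, and its analogue for code locations, whose premise $\Xi(\ell^C) = \tau$ forces $\ell^C \in \dom{\Xi}$. Inversion therefore yields $w = \ell^H \in \dom{\Psi}$ or $w = \ell^C \in \dom{\Xi}$; what remains is to transport these memberships to the concrete store $(C,H)$. From $(C,H) : (\Xi;\Psi)$ and \textsc{Code-Sec} I read off $\dom{\Xi} = \dom{C}$ directly, and for heaps I would argue by a short induction on the derivation of $\SE H : \Psi$: \textsc{Heap-Nil} has $\Psi = \{\}$, and \textsc{Heap-Cons} extends $\Psi$ by a location $\ell^H$ only after exhibiting $H(\ell^H) = v^H$, so $\dom{\Psi} \subseteq \dom{H}$. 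Combining gives $\ell^H \in \dom{H}$ or $\ell^C \in \dom{C}$, as claimed. The step I expect to count as the ``main'' one is exactly this last transport: there is no deep obstacle anywhere, since all three subclaims are pure inversion, but well-typedness of a word in isolation only locates it in the heap/code \emph{type}, and landing it in the actual store requires the additional appeal to well-typedness of the store.
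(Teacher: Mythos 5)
Your proposal is correct and follows essentially the same route as the paper, whose entire proof is that each claim follows by inversion on the typing rules. The extra detail you supply in the word-value case (transporting $\dom{\Psi}\subseteq\dom{H}$ and $\dom{\Xi}=\dom{C}$ from store well-typedness) is a harmless elaboration of that same inversion argument.
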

\begin{proof}
  Each claim is by inversion on the typing rules.
\end{proof}
\begin{lem}[Weak Unicity of Heap Value Typing] For any value $v$, at most one of the following holds:
\begin{enumerate}
\item $\PE v : a$
\item $\PE v : \cross{\vec \tau}$
\item $\PE v : \neg \myG$
\end{enumerate}
\end{lem}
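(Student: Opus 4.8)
The statement is "weak unicity": a single heap value $v$ cannot be given two types from different syntactic classes (atomic $a$, tuple $\cross{\vec\tau}$, continuation $\neg\myG$). The natural proof is by inversion on the typing derivations using Canonical Forms (Lemma~\ref{lem:scf}). The key observation is that the canonical-forms lemma already tells us that the \emph{shape} of $v^H$ determines which class of type it can have: a structure $\func{c}{\vec w}$ or free variable $\free{}[a]$ or bound pointer $\bound{w}$ has atomic type, a tuple $\args{\vec w}$ has tuple type, and a closure $\close(w,\ell^C)$ has continuation type. Since $v$ has exactly one syntactic form, it falls into exactly one of these cases, so at most one of the three typing judgements can hold.

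\textbf{Key steps.} First I would argue that if $\PE v^H : a$ for some atomic $a$, then by the Heap Values clause of Lemma~\ref{lem:scf}, $v^H$ is one of $\bound{w}$, $\free{}[a]$, or $\func{c}{\vec w}$. Second, if $\PE v^H : \cross{\vec\tau}$, the same lemma forces $v^H = \args{\vec w}$. Third, if $\PE v^H : \neg\myG$, the lemma forces $v^H = \close(w,\ell^C)$. These three sets of syntactic forms are pairwise disjoint (they are distinct productions of the grammar for $v^H$), so no single $v^H$ can satisfy two of the three hypotheses simultaneously. One could alternatively do a direct inversion on the heap-value typing rules (\textsc{HV-Bound}, \textsc{HV-Str}, \textsc{HV-Free}, \textsc{HV-Tup}, \textsc{HV-Close}), noting that each rule's conclusion fixes both the form of the value and the class of the type, and that the forms in the conclusions of rules producing atomic types are disjoint from those producing tuple types and those producing continuation types; but invoking Canonical Forms is cleaner.

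\textbf{Main obstacle.} There is essentially no difficulty here: the lemma is an immediate corollary of Canonical Forms, since that lemma already performs the relevant inversions and the grammar productions for $v^H$ are syntactically disjoint. The only thing to be slightly careful about is that the statement quantifies over arbitrary values $v$, not just heap values $v^H$ — but the three listed typing judgements are all of the form $\PE v : \tau$ with $\tau$ a heap-value type, and by the syntactic categories of the system only heap values $v^H$ can be assigned such types (word values are heap locations, whose typing goes through $\Psi$, and the ``value'' $v$ in all three clauses is implicitly a $v^H$ as it is being typed by the $\PE v : \tau$ judgement whose rules are exactly \textsc{HV-}$\ast$). So the proof is just: \emph{by inversion (Lemma~\ref{lem:scf}), each of the three hypotheses pins down $v$ to a disjoint set of syntactic forms, hence at most one can hold.}
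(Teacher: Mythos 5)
Your proof is correct and takes essentially the same route as the paper: the paper argues by cases on the typing rules, observing that each rule fixes the syntactic form of $v$ and that the forms force a unique type class, which is exactly the inversion your appeal to Canonical Forms (Lemma~\ref{lem:scf}) packages. The only cosmetic difference is that the paper also explicitly covers code values $\code[\myG](I)$ under the $\neg\myG$ case, which your remark about the $\PE v : \tau$ judgement handles implicitly.
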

\begin{proof}
By cases on $\PE v : \tau$. 
Each rule requires $v$ to have a specific form.
If $\PE v : a$ then $v = \free{}[a],\bound{\ell^H},$ or $\cwsh$. 
In each case the only rules that apply produce type $a$ (the type annotation enforces unicity for $\free{}[a])$.
If $\PE v : \cross{\vec \tau}$ then $v = \args{\vec w}$ and the only rule that applies produces type $\cross{\vec\tau}$. 
Otherwise $\PE v : \neg\myG$ and the rules force $v = \code[\myG]$ or $v = \closure(w_{env}, \ell^C)$. 
In either case the only rule that applies produces $\PE v : \neg\myG$, because the type is restricted by either the annotation $\myG$ or by $\Xi(\ell^C)$.
\end{proof}
\begin{lem}[Weakening]
\label{lem:sweak}
In SWAM we need weakening for word and heap value typing and occurs check:
\begin{itemize}
\item \emph{Word Values:} If $\PE w : \tau$ then $\thext{\Psi}{\ell_H}{\tau'} \vdash w : \tau.$
\item \emph{Occurs Check:} For fresh $\ell^H,$
 (a) If $\ell^H_1 \in_H \ell^H_2,$ then $\ell^H_1 \in_{\hext{H}{\ell^H}{v}} \ell^H_2$ and
 (b) If $\ell^H_1 \notin_H \ell^H_2,$ then $\ell^H_1 \notin_{\hext{H}{\ell^H}{v}} \ell^H_2$.
\item\emph{Heap Typing:} For all fresh $\ell^H$ and even ill-typed $v^H$, if $\SE H:\Psi$ then $\SE \hext{H}{\ell^H}{v^H}:\Psi$.
\end{itemize}
\begin{proof} By induction on the derivation $\PE w : \tau$, $\ell^H_1 \in_H \ell^H_2$, $\ell^H_1 \notin_H \ell^H_2$, or $\SE H:\Psi$ respectively, using the fact that for fresh $\ell^H,$ $\hext{H}{\ell^H}{v}(\ell'^H) = H(\ell'^H)$ for all $\ell'^H \in \dom{H}$. Heap Typing weakening uses the fact that $\SE H:\{\}$ for all $H$, i.e. heaps may contain inaccessible values not assigned types by $\Psi$.\end{proof}
\end{lem}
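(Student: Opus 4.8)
The plan is to prove all three parts by routine inductions on derivations, each resting on the single observation that a fresh heap location $\ell^H$ occurs in neither $H$ nor $\Psi$, so that extending $H$ to $\hext{H}{\ell^H}{v}$, or $\Psi$ to $\thext{\Psi}{\ell^H}{\tau'}$, leaves every lookup appearing in a pre-existing derivation unchanged. The \emph{Word Values} part is then immediate: in SWAM every word is a heap location, so the only applicable rule is \textsc{WV-}$\ell'^H$, whose sole premise is $\Psi(\ell'^H) = \tau$; since the extension $\thext{\Psi}{\ell^H}{\tau'}$ is well-formed only when $\ell^H \notin \dom{\Psi}$, we have $\ell^H \ne \ell'^H$, hence $\thext{\Psi}{\ell^H}{\tau'}(\ell'^H) = \Psi(\ell'^H) = \tau$, and the rule reapplies. (Via \textsc{RF} this lifts immediately to register files, although the lemma states only the word-value form.)

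For the \emph{Occurs Check}, I would induct on the derivation of $\ell^H_1 \in_H \ell^H_2$, and symmetrically on $\ell^H_1 \notin_H \ell^H_2$. The base rule (matching $\ell^H_1 = \ell^H_2$) has no premise mentioning $H$ and carries over verbatim. Every other rule proceeds by inspecting $H(\ell^H_2)$ — a structure $\cwsh$, a $\bound{\ell'^H}$ pointer, a tuple, or a closure — and recursing on its immediate sub-locations; such a derivation only ever looks up locations already present in $H$, and $\ell^H$ is fresh, so each such lookup agrees in $\hext{H}{\ell^H}{v}$ and the same rule applies with the recursive premises supplied by the induction hypothesis. The $\notin$ rules are entirely analogous.

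For \emph{Heap Typing}, I would induct on the derivation $\SE H : \Psi$. The \textsc{Heap-Nil} case is immediate, since $\SE H' : \heap{}$ holds for every heap $H'$. In the \textsc{Heap-Cons} case, write $\Psi = \thext{\Psi'}{\ell'^H}{\tau'}$ with premises $\SE H : \Psi'$, $H(\ell'^H) = v'^H$, $\PE v'^H : \tau'$, and $\ell'^H \notin \dom{\Psi'}$. The induction hypothesis gives $\SE \hext{H}{\ell^H}{v} : \Psi'$; freshness of $\ell^H$ together with $\ell'^H \in \dom{H}$ forces $\ell^H \ne \ell'^H$, so $\hext{H}{\ell^H}{v}(\ell'^H) = v'^H$; and the heap-value judgement $\PE v'^H : \tau'$ consults only $\Psi'$ and $\Sigma$, never $H$, so it is untouched. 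Re-applying \textsc{Heap-Cons} yields $\SE \hext{H}{\ell^H}{v} : \thext{\Psi'}{\ell'^H}{\tau'} = \Psi$.

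The argument is routine; the only subtlety worth flagging is the bookkeeping about which heap type the premises of \textsc{Heap-Cons} are taken against — the already-built prefix $\Psi'$ rather than the full $\Psi$ — together with the fact, read off directly from the heap-value typing rules, that $\PE v^H : \tau$ never inspects the heap $H$. That last point is precisely what lets the \textsc{Heap-Cons} step go through without re-typing $v'^H$, and it is also why even ill-typed values $v$ may be added without disturbing $\SE H : \Psi$.
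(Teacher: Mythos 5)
Your proof is correct and follows essentially the same route as the paper: induction on each derivation, using the fact that a fresh $\ell^H$ leaves all lookups at existing locations unchanged, and (for Heap Typing) the fact that the \textsc{Heap-Nil} case holds for any heap, which is exactly why even ill-typed values may be added. The details you supply (e.g.\ that heap-value typing consults only $\Psi$ and $\Sigma$, never $H$, in the \textsc{Heap-Cons} step) are the intended ones.
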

\begin{lem}[Register File Subtyping]\label{lem:reg-sub} If $\GE I \ok$ and $\SE \Gamma \leq \Gamma'$ then $\Gamma' \ent I \ok$. \end{lem}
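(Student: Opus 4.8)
The plan is to prove this by a routine \emph{weakening} argument: induction on the derivation of $\GE I \ok$, carried out \emph{simultaneously} with the two analogous statements for the auxiliary spine judgements, namely that $\GE I :_s J$ and $\SE \Gamma \leq \Gamma'$ imply $\Gamma' \ent I :_s J$, and likewise for $:_t$. The simultaneous formulation is forced by the shape of the rules: the $\ok$ derivations for the compound-data instructions (\textsc{GetStr}, \textsc{PutStr}, \textsc{PutTuple}) contain spine sub-derivations, while \textsc{(TSpine-End)} and its Prolog-spine analogue resume the $\ok$ judgement, so the three judgement forms are mutually recursive and must be generalized together.

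Before the induction I would record five monotonicity facts about the register-file order, all immediate from its pointwise definition ($\SE \Gamma \leq \Gamma'$ means $\dom{\Gamma}\subseteq\dom{\Gamma'}$ and the two maps agree on $\dom{\Gamma}$): (i) $\leq$ is transitive; (ii) if $\SE \Gamma \leq \Gamma'$ and $\Gamma(r)$ is defined then $\Gamma'(r) = \Gamma(r)$; (iii) if $\SE \Gamma \leq \Gamma'$ then $\SE \thup{\Gamma}{r}{\tau} \leq \thup{\Gamma'}{r}{\tau}$ for every $r,\tau$; (iv) if $\SE \Gamma \leq \Gamma'$ then $\SE \Gpluss{\Gamma}{\Gamma_0} \leq \Gpluss{\Gamma'}{\Gamma_0}$ for every $\Gamma_0$; and (v) operand typing is stable under widening: if $\Gamma \vdash_{\Sigma;\Xi} op : \tau$ and $\SE \Gamma \leq \Gamma'$ then $\Gamma' \vdash_{\Sigma;\Xi} op : \tau$, by inversion on the two \textsc{Op} rules, using (ii) for the register case and nothing for the code-location case. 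Fact (iii) deserves one line of care: the register updated by $\thup{\Gamma}{r}{\tau}$ and by $\thup{\Gamma'}{r}{\tau}$ receives the same type $\tau$ in both, so domain inclusion is preserved, and on all other registers the update is the identity, so agreement is inherited; fact (iv) is the analogous observation for the override operation $\Gpluss{-}{\Gamma_0}$.

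The induction then proceeds rule by rule, following a uniform pattern. For \textsc{Succeed} the conclusion is immediate. For the rules whose premises recurse with the context unchanged, possibly after reading some registers (\textsc{GetVal}, \textsc{PushBT}, \textsc{GetStr}, \textsc{PutTuple}, \textsc{UnifyVal}, \textsc{TSpine-SetVal}), I would transport each side condition $\Gamma(r)=\tau$ to $\Gamma'(r)=\tau$ via (ii) and apply the induction hypothesis to the unchanged subject. For the rules whose premises recurse with an updated register (\textsc{PutVar}, \textsc{Mov}, \textsc{Close}, \textsc{Proj}, \textsc{UnifyVar}, and the leading update of \textsc{PutStr}), I would use (ii) for any reads and (iii) to lift $\SE \Gamma \leq \Gamma'$ to the updated contexts before invoking the induction hypothesis; for \textsc{(TSpine-End)} and the Prolog-spine end rule I would use (iv) in exactly the same way. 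The one case with any real content is \textsc{Jmp}: from its premises $\Gamma \vdash op : \neg\Gamma''$ and $\SE \Gamma'' \leq \Gamma$, fact (v) gives $\Gamma' \vdash op : \neg\Gamma''$, and transitivity (i) of $\SE \Gamma'' \leq \Gamma$ with $\SE \Gamma \leq \Gamma'$ re-establishes the side condition $\SE \Gamma'' \leq \Gamma'$ for the wider context.

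I expect no real obstacle: the argument touches neither the heap, the trail, nor the signatures, and collapses to the pointwise monotonicity computations (i)--(v) packaged by the induction. The only thing needing attention is bookkeeping—setting up the simultaneous induction over the three judgement forms correctly, and verifying in each update and override case that the auxiliary map operations genuinely preserve $\leq$ in the direction the induction hypothesis requires.
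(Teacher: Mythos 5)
Your proposal is correct and matches the paper's proof, which is simply stated as ``by induction on the derivation $\GE I \ok$''; your elaboration (generalizing simultaneously to the spine judgements, the pointwise monotonicity facts, and the transitivity step in the \textsc{Jmp} case) fills in exactly the details that induction requires.
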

\begin{proof}
By induction on the derivation $\GE I \ok$.
\end{proof}
\subsubsection{Occurs Check}
The following theory of occurs checks is used in the theory of unification.
\begin{lem}[Occurs is a total function]
\label{lem:occ-tot}
 If $\SE H : \Psi$ and $\SPE \ell^H_2:a$ then:
  \begin{itemize}
  \item\emph{Total:} For all $\ell^H_1,$ either $\ell^H_1 \in_H \ell^H_2$ or $\ell^H_1 \notin_H \ell^H_2$.
  \item\emph{Function:} If $\ell^H_1 \notin_H \ell^H_2$ is derivable, then $\ell^H_1 \in_H \ell^H_2$ is not derivable.
  \end{itemize}
% for all $w_1$, exactly one of  $w_1 \in_H w_2$ or $w_1 \notin_H w_2$ is derivable.
\end{lem}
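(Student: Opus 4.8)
The plan is to prove \emph{Total} and \emph{Function} together, by a single well-founded induction on $\ell^H_2$ ordered by the topological sorting of $\dom{\Psi}$ that the heap-typing derivation $\mathcal D : (\SE H : \Psi)$ induces. Recall from Section~\ref{sec:simp-rep-inv} that $\mathcal D$ witnesses acyclicity of $H$: in each \textsc{Heap-Cons} step the newly added value is typed against a strictly smaller heap type, so it can mention only strictly-earlier locations. Thus I fix $\ell^H_2$ and may assume both claims for every location occurring strictly below $\ell^H_2$ in that order; well-foundedness is exactly what guarantees this is a legitimate induction.

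For the inductive step, since $\SPE \ell^H_2 : a$ with $a$ atomic, Canonical Forms (Lemma~\ref{lem:scf}) forces $H(\ell^H_2)$ to be $\free{}[a]$, $\bound{\ell'^H}$, or $c\langle\ell'^H_1,\ldots,\ell'^H_n\rangle$, and I analyze these three. If $H(\ell^H_2) = \free{}[a]$, the only applicable rules are the reflexivity rule for $\in_H$ (side condition $\ell^H_1 = \ell^H_2$) and the rule $\notin\free{}$ (side condition $\ell^H_1 \neq \ell^H_2$), which are jointly exhaustive and mutually exclusive: this settles both claims at the base. If $H(\ell^H_2) = \bound{\ell'^H}$, inverting \textsc{HV-Bound} (and invoking weakening, Lemma~\ref{lem:sweak}) gives $\SPE \ell'^H : a$ with $\ell'^H$ strictly below $\ell^H_2$, so the IH applies to $\ell'^H$; since the only rules able to conclude $\ell^H_1 \in_H \ell^H_2$, resp.\ $\ell^H_1 \notin_H \ell^H_2$, are the ones that follow the $\bound{\ell'^H}$ pointer, both claims transfer verbatim from the IH at $\ell'^H$. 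If $H(\ell^H_2) = c\langle\ell'^H_1,\ldots,\ell'^H_n\rangle$, inverting \textsc{HV-Str} (with weakening) gives $\SPE \ell'^H_i : a_i$ for each $i$, each $\ell'^H_i$ strictly below $\ell^H_2$; for \emph{Total}, the IH gives $\ell^H_1 \in_H \ell'^H_i$ or $\ell^H_1 \notin_H \ell'^H_i$ for each $i$, so either some $i$ yields $\in_H$ (and the structure $\in_H$-rule applies) or all yield $\notin_H$ (and the structure $\notin_H$-rule applies); for \emph{Function}, any derivation of $\ell^H_1 \notin_H \ell^H_2$ must use the structure $\notin_H$-rule and hence supplies $\ell^H_1 \notin_H \ell'^H_i$ for every $i$, whence the \emph{Function} part of the IH forbids $\ell^H_1 \in_H \ell'^H_i$ for every $i$, so no instance of the structure $\in_H$-rule can derive $\ell^H_1 \in_H \ell^H_2$.

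The only genuinely load-bearing point is the set-up in the first paragraph: extracting from $\mathcal D$ a well-founded order in which \emph{every} sublocation visited by the recursive occurs-check rules is strictly smaller. This is precisely where acyclicity of the heap is used; without it the mutual recursion of $\in_H$ and $\notin_H$ need neither terminate nor be complementary. Everything after that is a mechanical case analysis mirroring the rule-by-rule pairing of $\in_H$ with $\notin_H$ by the shape of the stored value (the tuple and closure rules are handled exactly like the structure case, though they do not arise here because $a$ is atomic). This lemma is what subsequently licenses treating the occurs check as a well-behaved, decidable side condition in the unification judgement, and through it in Heap Update (Lemma~\ref{lem:heap-update}).
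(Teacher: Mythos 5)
Your proof is correct and takes essentially the same approach as the paper's: the same use of acyclicity as witnessed by the heap-typing derivation, the same appeals to canonical forms and weakening to expose the shape of $H(\ell^H_2)$, and the same rule-by-rule case analysis pairing each $\in_H$ rule with its $\notin_H$ counterpart. The only difference is bookkeeping: you run a single simultaneous well-founded induction on the topological order extracted from $\mathcal{D} : (\SE H : \Psi)$, whereas the paper proves totality by induction on the derivation $\SE H : \Psi$ and functionhood by a separate structural induction on the derivation of $\ell^H_1 \notin_H \ell^H_2$; both measures support the identical case analysis.
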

\begin{proof}
Totality is by induction on typing derivation of $\SE H : \Psi$, appealing to Lemmas~\ref{lem:sweak} and \ref{lem:scf}. Functionhood is by induction on the derivation $\ell^H_1 \notin_H  \ell^H_2$.
In the cases $H(\ell^H_2) = \free{}[a]$, $H(\ell^H_2) = \bound{w},$ and $\close(w_{env},\ell^C),$ clearly no rules apply for $\ell^H_1 \in_H \ell^H_2$. Consider the case $\cwsh$ (the tuple case is symmetric):

\case{\infer[\notin{}c\langle\rangle]{\ell^H_1 \notin_H \ell^H_2}{H(\ell^H_2) = c\args{\ell'^H_1,\ldots,\ell'^H_n} & \ell^H_1 \notin_H \ell'^H_i (\forall 1 \leq i \leq n)}}
By the IH, $\forall i.$($\ell^H_1 \in_H \ell'^H_i$ is not derivable), so $\neg \exists i.( \ell^H_1 \in_H \ell'^H_i$ is derivable). But because
$H(\ell_2) = \cwsh$, the only rule that might apply requires $\exists i. (\ell_1 \in_H \ell'_i$ is derivable). 
\end{proof}
\begin{lem}[Transitivity of Occurs]\label{lem:occ-trans} If $\SE H : \Psi, \ell^H_1 \in_H \ell^H_2, $ and $\ell^H_2 \in_H \ell^H_3,$  then $\ell^H_1 \in_H \ell^H_3$.\end{lem}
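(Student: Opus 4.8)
The plan is to prove this by induction on the derivation of $\ell^H_2 \in_H \ell^H_3$. The statement has the form ``if $\ell^H_1$ occurs in $\ell^H_2$ and $\ell^H_2$ occurs in $\ell^H_3$ then $\ell^H_1$ occurs in $\ell^H_3$'', so fixing the hypothesis $\ell^H_1 \in_H \ell^H_2$ and inducting on how $\ell^H_2$ sits inside $\ell^H_3$ is the natural structure: each rule for $\cdot \in_H \cdot$ peels one layer off the right-hand (containing) term, so the recursion is on that containing structure.

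First I would handle the base case, rule $(\in{=})$, where $\ell^H_2 = \ell^H_3$. Then the hypothesis $\ell^H_1 \in_H \ell^H_2$ is literally $\ell^H_1 \in_H \ell^H_3$, so we are done with no appeal to the IH. For the inductive cases, I would go through rule $(\in\bound{})$, rule $(\in c\langle\rangle)$, and rule $(\in\langle\rangle)$ (the tuple case), and also $(\in\close)$. In each, the premise gives some $\ell'^H_3$ (a pointer target, or a constructor argument $\ell'^H_i$, or a tuple component, or the closure environment) with $\ell^H_2 \in_H \ell'^H_3$ and $H(\ell^H_3)$ of the corresponding shape. Applying the IH to $\ell^H_1 \in_H \ell^H_2$ and $\ell^H_2 \in_H \ell'^H_3$ yields $\ell^H_1 \in_H \ell'^H_3$, and then re-applying the same rule (using the same fact about $H(\ell^H_3)$) gives $\ell^H_1 \in_H \ell^H_3$. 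The hypothesis $\SE H : \Psi$ is carried along unchanged and is only needed to invoke the IH (and, strictly, it ensures acyclicity so the induction is on a genuinely well-founded structure).

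I don't expect a serious obstacle here: the proof is a routine structural induction where each inductive rule for $\in_H$ is reconstructed after pushing the IH through one argument. The one point that needs a moment's care is that the induction must be on the derivation of $\ell^H_2 \in_H \ell^H_3$ rather than on $\ell^H_1 \in_H \ell^H_2$ — inducting on the left hypothesis would not give a usable IH, since the rules for $\in_H$ decompose the \emph{containing} term. A secondary technicality is that $\in_H$ is defined over all heap values, including closures and tuples, so the case analysis must cover $(\in\close)$ and $(\in\langle\rangle)$ as well as the Prolog-term rules; all of these are handled uniformly by the ``apply IH to the sub-location, then re-apply the rule'' pattern.
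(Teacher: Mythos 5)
Your proposal is correct and matches the paper's proof, which is exactly an induction on the derivation of $\ell^H_2 \in_H \ell^H_3$: the base case $(\in{=})$ is immediate, and each inductive rule is re-applied after pushing the IH through the sub-location. Your remarks about choosing the right induction hypothesis and covering the tuple/closure rules are sensible elaborations of the same argument.
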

\begin{proof}
By induction on the derivation $\ell^H_2 \in_H \ell^H_3$.
\end{proof}
\begin{lem}[Occurs Strengthening] If $\SE H : \Psi, \ell^H_2 \in \dom{H}, \ell'^H \notin \dom{H},$ and
$\ell^H_1 \notin_{\hext{H}{\ell'^H}{v'}} \ell^H_2$ then $\ell^H_1 \notin_H \ell^H_2$. \end{lem}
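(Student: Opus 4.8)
The plan is to prove this by induction on the derivation of $\ell^H_1 \notin_{H'} \ell^H_2$, where I write $H' = \hext{H}{\ell'^H}{v'}$; it is the strengthening dual of the weakening fact for the occurs check in Lemma~\ref{lem:sweak}(b). The single observation that drives every case is that $\ell^H_2 \neq \ell'^H$ --- immediate from $\ell^H_2 \in \dom{H}$ together with $\ell'^H \notin \dom{H}$ --- which gives $H'(\ell^H_2) = H(\ell^H_2)$, so the rule at the root of the given derivation inspects exactly the same heap value whether it is read off $H'$ or off $H$.

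I would then split on that root rule. In the base case $\notin\free{}$ we have $H(\ell^H_2) = H'(\ell^H_2) = \free{}[a]$ together with $\ell^H_1 \neq \ell^H_2$, so the very same rule instance concludes $\ell^H_1 \notin_H \ell^H_2$ with no recursion. In the remaining cases --- $\notin\bound{}$, the structure rule $\notin c\langle\rangle$, and the analogous tuple and closure rules --- the value $H(\ell^H_2)$ exposes one or more sub-locations $\ell'^H_i$ and the premises are sub-derivations $\ell^H_1 \notin_{H'} \ell'^H_i$; I would apply the induction hypothesis to each to obtain $\ell^H_1 \notin_H \ell'^H_i$ and then re-apply the same rule over $H$. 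The tuple and closure cases are word-for-word the structure case with a different value former, so spelling out one of them suffices.

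The one place that needs care is the appeal to the induction hypothesis, which requires each $\ell'^H_i \in \dom{H}$ (the other side condition, $\ell'^H \notin \dom{H}$, is unchanged). Since $\ell'^H_i$ occurs inside $H(\ell^H_2)$ --- a value that genuinely lives in $H$, not the freshly inserted cell $v'$ --- this is exactly the standing well-formedness invariant on heaps, namely that a heap value refers only to locations in its own domain; equivalently, a $\notin$-derivation over $H'$ that starts from an in-domain location never actually consults the new cell $v'$. I expect this bookkeeping --- keeping the recursion inside $\dom{H}$ rather than letting it escape to $\ell'^H$ --- to be the only real content; everything else is the mechanical mirror image of the weakening proof. (If one additionally knew $\SPE \ell^H_2 : a$, there would be a slicker non-inductive route: by totality, Lemma~\ref{lem:occ-tot}, either $\ell^H_1 \in_H \ell^H_2$ or $\ell^H_1 \notin_H \ell^H_2$; the former would give $\ell^H_1 \in_{H'} \ell^H_2$ by Lemma~\ref{lem:sweak}(a), contradicting functionhood against the hypothesis. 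But as stated the lemma assumes only $\ell^H_2 \in \dom{H}$, so I would not rely on that.)
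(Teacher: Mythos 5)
Your proposal matches the paper's proof: the paper argues by induction on the derivation $\ell^H_1 \notin_{\hext{H}{\ell'^H}{v'}} \ell^H_2$, appealing to Lemma~\ref{lem:scf} for exactly the bookkeeping you isolate, namely that the sub-locations exposed by $H(\ell^H_2)$ stay inside $\dom{H}$ so the induction hypothesis applies and the same rule can be replayed over $H$. Your observation that $\ell^H_2 \neq \ell'^H$ forces $H'(\ell^H_2) = H(\ell^H_2)$ is the same driving fact, so the argument is correct and essentially identical in structure.
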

\begin{proof}
By induction on the derivation $\ell^H_1 \notin_{\hext{H}{\ell'^H}{v'}} \ell^H_2$, appealing to Lemma~\ref{lem:scf}.
\end{proof}

% \begin{lem}[Occurs Weakening]
% \begin{align*}
% (a) &\text{ If $\ell_1 \in_H \ell_2,$ then $\ell_1 \in_{\hext{H}{\ell}{v}} \ell_2$}.\\
% (b) &\text{ If $\ell_1 \in_H \ell_2,$ then $\ell_1 \in_{\hext{H}{\ell}{v}} \ell_2$}.
% \end{align*}
% \end{lem}
% \begin{proof}
% By induction on the derivations $\ell_1 \in_H \ell_2$ and $\ell_1 \not \in_H \ell_2$.

% \begin{lem}[Occurs Progress] If $\DE H : \Psi$ and $\SE (\Delta,\mu) : H$ and $\DE \ell_1 : \sing(x : A)$ and
% $\DE \ell_2 : \sing(M : A)$ then let $\edn(H,\ell_1) \squig \ell_1'$.
% Now if  $x \in M$ then $\ell_1' \in_H \ell_2$ and if $x \notin M$ then $\ell_1' \notin_H \ell_2$.  \end{lem}
% \begin{proof}
% By lexicographic induction on the derivation $x \in M$ or $x \notin M$ and
% on $|\Psi|$. In the free variable case, use the assumption that variables are mapped uniquely to heap addresses.
% \end{proof}

% As mentioned previously, it is key that we never introduce cycles in the heap. We
% prove that if the occurs check passes, then binding a free variable
% results in a heap which is well-typed, which by the typing 
% invariants is guaranteed to be acyclic. Not only is this property essential
% for logical soundness, but it also provides an inductive principle for heaps.

\subsubsection{Heap Modification}
The simply-typed metatheory culminates in the treatment of heap modification.
We begin with a strengthening lemma:
\begin{lem}[Heap Value Strengthening]
\label{lem:heap-str}
If $\thext{\Psi}{\ell^H_1}{\tau_1} \vdash v_2:\tau_2$ and $\ell^H_1 \notin_H \ell^H_2$, then $\PE v_2:\tau_2 $.\end{lem}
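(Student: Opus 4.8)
The plan is to argue by a case analysis on the form of $v_2$, which fixes the last rule of the derivation $\thext{\Psi}{\ell^H_1}{\tau_1} \vdash v_2:\tau_2$. I assume, as is implicit throughout this subsection, that $\SE H:\Psi$ and that $v_2$ is the value $H(\ell^H_2)$, so that the judgement $\ell^H_1 \notin_H \ell^H_2$ genuinely inspects $v_2$. The key observation is that \emph{no} heap-value typing rule (\textsc{HV-Free}, \textsc{HV-Bound}, \textsc{HV-Str}, \textsc{HV-Tup}, \textsc{HV-Close}) recurses into $H$: via \textsc{WV-}$\ell^H$ it consults the typing context only at the finitely many locations that occur \emph{immediately} inside $v_2$ (the argument locations of a structure or tuple, the environment word of a closure, the target of a $\bound{\cdot}$ pointer), and it never refers to $\ell^H_1$ in any other way. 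Since $\thext{\Psi}{\ell^H_1}{\tau_1}$ and $\Psi$ agree everywhere except at $\ell^H_1$, it therefore suffices to show that each such immediate child location $\ell'$ of $v_2$ satisfies $\ell' \neq \ell^H_1$; then every \textsc{WV-}$\ell^H$ premise is unchanged and the same derivation reassembles with $\Psi$ in place of $\thext{\Psi}{\ell^H_1}{\tau_1}$, which gives $\PE v_2:\tau_2$.

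Carrying this out: if $v_2 = \free{}[a]$ then by inversion $\tau_2 = a$ and \textsc{HV-Free} re-derives $\PE \free{}[a]:a$ with no context lookups at all. In each remaining case ($v_2$ a $\bound{\cdot}$ pointer, a structure $\cwsh$, a tuple, or a closure) I invert $\ell^H_1 \notin_H \ell^H_2$: because $H(\ell^H_2) = v_2$ has a fixed outermost shape, exactly one negative occurs rule can have concluded the judgement, and its premises supply $\ell^H_1 \notin_H \ell'$ for every immediate child $\ell'$ of $v_2$. I then rule out $\ell' = \ell^H_1$: if they were equal, rule $\in=$ would give $\ell^H_1 \in_H \ell'$, and composing with the matching positive occurs rule for the shape of $v_2$ would give $\ell^H_1 \in_H \ell^H_2$, contradicting $\ell^H_1 \notin_H \ell^H_2$ by the functionhood part of Lemma~\ref{lem:occ-tot} (whose hypotheses $\SE H:\Psi$ and well-typedness of $\ell^H_2$ are available here). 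With $\ell' \neq \ell^H_1$ known for each child, I replay the inverted premises verbatim under $\Psi$ (any code-location lookups live in $\Xi$ and are untouched by the change to $\Psi$) and close the case with the same final rule.

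I expect the only delicate point to be the step $\ell' \neq \ell^H_1$. The occurs check is presented as an inductively defined relation with no negative counterpart to the rule $\in=$, so the clean way to exclude $\ell' = \ell^H_1$ is to derive a contradiction from $\ell^H_1 \notin_H \ell^H_1$ using $\in=$ and Lemma~\ref{lem:occ-tot}, rather than by a naive inversion on the occurs derivation. Everything else is routine: in the simply-typed setting the heap-value typing derivations have depth at most two, so no induction is required, and each case amounts to mechanical inversion followed by re-assembly.
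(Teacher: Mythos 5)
Your proposal is correct and follows essentially the same route as the paper's proof: a case analysis on the typing derivation of $v_2$, ruling out $\ell' = \ell^H_1$ for each immediate child location by composing rule $\in=$ with the shape-matching positive occurs rule to obtain $\ell^H_1 \in_H \ell^H_2$ and contradicting $\ell^H_1 \notin_H \ell^H_2$ via functionhood (Lemma~\ref{lem:occ-tot}), then reassembling the derivation under $\Psi$. You additionally make explicit the hypotheses ($\SE H:\Psi$ and $v_2 = H(\ell^H_2)$) that the paper leaves implicit, which is a fair and harmless clarification.
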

\begin{proof}
By cases on $\thext{\Psi}{\ell^H_1}{\tau_1} \vdash v_2:\tau_2$.
The case $\cwsh$ is representative:

\case{\infer[\textsc{HV-Str}]{\thext{\Psi}{\ell^H_1}{\tau_1} \vdash \cwsh : \tau_2 }{\Sigma(c) = \vec a \to \tau_2 & \Sigma;\thext{\Psi}{\ell^H_1}{\tau_1} \vdash \ell'^H_i : a_i }}
Then for each $\ell'^H_i$ the typing derivation has form \infer[\textsc{WV-}\ell^H]{\thext{\Psi}{\ell^H_1}{\tau_1} \vdash \ell'^H_i : \sing(M_i : a_i) }{\thext{\Psi}{\ell^H_1}{\tau_1}(\ell'^H_i) = \sing(M_i: a_i)}.
Note $\ell'^H_i \neq \ell^H_1$. Otherwise we would have
\[\infer[\in{}c\langle\rangle]{\ell^H_1 \in_H \ell^H_2}{\infer[\in=]{\ell^H_1 \in_H \ell'^H_i (\exists i)}{\ell'^H_i = \ell^H_1} & H(\ell^H_2)=c\langle \ell'^H_1,\ldots\ell'^H_n\rangle}\]
Which would contradict $\ell^H_1 \notin_H \ell^H_2$ because occurs is a function (Lemma~\ref{lem:occ-tot}). Because $\ell^H_1 \neq \ell'^H_i$ and $\ell'^H_i \in \dom{\thext{\Psi}{\ell^H_1}{\tau_1}}$ we have
$\ell'^H_i \in \dom{\Psi}$ and
\[\infer[\textsc{HV-Str}]{\SPE \func{c}{\ell^H_1, \ldots, \ell^H_n} : \tau_2 }{\Sigma(c)= a_1 \cdots \to a_n \to \tau_2 & \infer[\textsc{WV-}\ell^H]{\SPE \ell'^H_i : a_i }{\Psi(\ell'^H_i) = \sing(M_i : a_i) &(\forall i)}}\]
\end{proof}

\begin{lem}[Heap Update]\label{lem:heap-update} If $\SE H \colon \Psi$ and $\Psi(\ell^H_1) = a$ then
\begin{align*}
  (a)&\ \text{If $\Psi(\ell^H_2) = a$ and $\ell^H_1 \notin_H \ell^H_2,$ (the occurs check passes)}&\text{then $\SE \hup{H}{\ell^H_1}{\bound{\ell^H_2}} \colon \Psi.$}\\
  (b)&\ \text{If for all $i$, $\Psi(\ell'^H_i) = a_i$ and $\ell^H_1 \notin_H \ell'^H_i$ and $\Sigma(c) = \vec a \to a$,}&\text{then $\SE \hup{H}{\ell^H_1}{\func{c}{\ell'^H_1,\ldots,\ell'^H_n}} \colon \Psi.$}
  \end{align*} \end{lem}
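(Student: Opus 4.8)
The plan rests on the paper's own observation that a derivation of $\SE H : \Psi$ is exactly a topological sort of $\dom{\Psi}$ under the ``points-to'' relation $\squig$ --- write $\ell \squig \ell'$ when $\ell'$ occurs syntactically in $H(\ell)$ --- together with the requirement that, processing $\dom{\Psi}$ in that order, each $H(\ell)$ checks with type $\Psi(\ell)$ against the heap type accumulated so far. Accordingly, to prove $\SE H' : \Psi$ for $H' = \hup{H}{\ell^H_1}{w}$ --- taking $w = \func{c}{\ell'^H_1,\ldots,\ell'^H_n}$ for clause (b), with clause (a) the identical argument for $w = \bound{\ell^H_2}$ and a single referent $\ell^H_2$ --- it suffices to establish: (i) the $\squig$-relation of $H'$, restricted to $\dom{\Psi}$, is acyclic; and (ii) along any topological order for it, each $H'$-value checks with the intended type against the locations preceding it.

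Claim (ii) is the easy half. Since $H'$ and $H$ differ only at $\ell^H_1$, and the only referents of $H'(\ell^H_1) = w$ are the $\ell'^H_i$, each of which has $\Psi(\ell'^H_i) = a_i$ with $\Sigma(c) = \vec a \to a$ by hypothesis, the check at $\ell^H_1$ is just \textsc{HV-Str} (resp.\ \textsc{HV-Bound}); at every other $\ell$ the value $H(\ell)$ was already well-typed with $\Psi(\ell)$ in the original derivation, so it re-checks against any heap type that agrees with $\Psi$ on the locations $H(\ell)$ names --- and those are all $\squig$-earlier than $\ell$, hence present in the prefix. (This last point is the trivial fact that the typing of a value inspects only the locations it names; one may also invoke Heap Value Strengthening, Lemma~\ref{lem:heap-str}.)

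Claim (i) is the heart of the matter and the place the occurs-check hypothesis is spent. The relation $\squig$ for $H$ is acyclic because $\SE H : \Psi$ is derivable (an easy induction on that derivation, matching the paper's topological-sort remark), and $\squig$ for $H'$ differs only in that $\ell^H_1$'s outgoing edges are now exactly $\ell^H_1 \squig \ell'^H_i$. So any $\squig$-cycle of $H'$ must pass through $\ell^H_1$; a simple such cycle leaves $\ell^H_1$ via some $\ell^H_1 \squig \ell'^H_i$ and returns to $\ell^H_1$ along edges all present in $H$, exhibiting a $\squig$-path of $H$ from $\ell'^H_i$ to $\ell^H_1$. But unfolding the occurs-check rules shows $\ell^H_1 \in_H \ell'^H_i$ holds precisely when such a path exists ($\in_H$ is the reflexive--transitive closure of $\squig$, traversing constructors, pointers, tuples, and closures), contradicting $\ell^H_1 \notin_H \ell'^H_i$ --- where I also use that occurs is a function (Lemma~\ref{lem:occ-tot}) to turn non-derivability of the one judgement into outright contradiction, which in particular rules out the degenerate self-loop $\ell'^H_i = \ell^H_1$. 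Hence $H'$ is acyclic on $\dom{\Psi}$, and assembling a \textsc{Heap-Cons} derivation along any topological order of it completes the proof.

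The main obstacle is making (i) precise: stating and proving the two routine-but-load-bearing sub-claims, ``$\SE H : \Psi$ implies $\squig$ is acyclic on $\dom{\Psi}$'' and ``$\ell^H_1 \in_H \ell'^H_i$ iff there is a $\squig$-path from $\ell'^H_i$ to $\ell^H_1$'', each by induction over a derivation and a careful traversal of the shapes of heap value. Everything after that is inversion on the typing and occurs-check rules, plus bookkeeping about which location carries which type in which restriction of $\Psi$.
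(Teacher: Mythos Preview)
Your proof is correct and takes a genuinely different route from the paper's. The paper proves a \emph{Heap Reordering} subclaim: given $\mathcal{D}:(\SE H:\Psi)$ with $H(\ell^H_1)=\free{}[a]$ and $\ell^H_1\notin_H\ell^H_2$, it explicitly constructs a new derivation $\mathcal{D}'$ in which $\ell^H_2$ precedes $\ell^H_1$, by a lexicographic induction on $|\mathcal{D}|$ and $|\succ(\mathcal{D},\ell^H_1)|$ with a six-case analysis that shuffles proof terms using Heap Value Strengthening and Weakening; claim~(a) then follows directly and claim~(b) by iterating the subclaim once per $\ell'^H_i$. Your argument instead works semantically: you identify the derivation with a topological order of the points-to graph, argue that the update can introduce a cycle only through $\ell^H_1$, and discharge that possibility via the occurs-check hypothesis, then re-assemble a derivation along any topological order of the updated graph.

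What each approach buys: the paper's construction is entirely syntactic and stays inside the proof-term calculus it sets up, but at the cost of a somewhat intricate case split and a bespoke induction metric. Your approach is shorter and more conceptual, and does not need the side condition $H(\ell^H_1)=\free{}[a]$ that the paper's subclaim carries; on the other hand, it front-loads two auxiliary facts (typing implies acyclicity of $\squig$ on $\dom{\Psi}$; $\in_H$ coincides with $\squig$-reachability) that the paper never states explicitly. Both are routine inductions, and you flag them correctly as the load-bearing work. One small sharpening: in claim~(ii), the clean justification that a value re-checks against a different prefix is Strengthening (Lemma~\ref{lem:heap-str}) to drop unneeded locations followed by Weakening (Lemma~\ref{lem:sweak}) to add the new ones, rather than Strengthening alone.
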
 
The simple statement of Heap Update belies the complexity of its proof.
Recall that heaps and heap types are unordered (identified up to permutation),
while heap typing derivations are ordered, serving as a witness that the heap is acyclic.
The proof of Heap Update must show that no cycles are introduced,
which requires exhibiting a new acyclic ordering in, 
e.g. the derivation of $\SE \hup{H}{\ell^H_1}{\bound{\ell^H_2}} \colon \Psi.$
\paragraph{Heap Typing Proof Terms}
In the interest of rigor, we introduce proof term notation for heap typing derivations,
which allows us to give a concise, explicit construction of the topological orderings required by Heap Update.
The reader may wish to skip this section on a first reading, as it introduces significant proof machinery that is not needed elsewhere.
Recall the typing rules for heaps:
\begin{center}
\begin{tabular}{cc}
\infer[\textsc{HT-Nil}]{\SE H\colon{}\heap{}}{}
&\infer[\textsc{HT-Cons}]{\SE H\colon{}\thext{\Psi}{\ell^H}{\tau}}
   {\SE H\colon{}\Psi & H(\ell^H) = v^H & \PE v^H\colon{}\tau & \ell^H \notin\dom{\Psi}}
\end{tabular}
\end{center}
These rules result in list-structured proof terms:
\[\mathcal{D} ::= nil_H\ |\ \mathcal{D};d_{\ell^H}\]
We write $\mathcal{D}:(\SE H:\Psi)$ when $\mathcal{D}$ is a proof term of $\SE H:\Psi$.
In this notation $nil_H$ is the proof term for HT-Nil applied to heap $H$ 
and $\mathcal{D};d_{\ell^H}$ is the proof term for HT-Cons applied to subderivations
$\mathcal{D}:(\SE H\colon{}\Psi)$ and $d_{\ell^H}:(\PE v^H\colon{}\tau)$ and it is a proof of $\SE H\colon{}\thext{\Psi}{\ell^H}{\tau}$.
To state the key lemma precisely, we exploit several functions over proof terms.

The notation $\pred(\mathcal{D}, \ell^H)$ denotes the set of heap locations assigned types by $\mathcal{D}$ that appear before $\ell^H$ within $\mathcal{D}$ and $\succ(\ell^H)$ denotes the set of locations that appear after it.
The notation $\elems(\mathcal{D})$ denotes all locations assigned types by $\mathcal{D}$.
They can be defined recursively by:
\begin{align*}
\elems(nil_H) = \emptyset &&\elems(\mathcal{D};d_{\ell^H}) =\{\ell^H\} \cup \elems(\mathcal{D})&&\\
\pred(nil_H, \ell^H) = \emptyset &&\pred((\mathcal{D};d_{\ell^H}), \ell^H) = \elems(\mathcal{D})&& \pred((\mathcal{D};d_{\ell'^H}), \ell^H)=\pred(\mathcal{D}, \ell^H)\\
\succ(nil_H, \ell^H) = \emptyset &&\succ((\mathcal{D};d_{\ell^H}), \ell^H) = \emptyset && \succ((\mathcal{D};d_{\ell'^H}), \ell^H)=\succ(\mathcal{D},\ell^H) \cup \{\ell'^H\}
\end{align*}
We now have the machinery to state a subclaim which entails both claims of Heap Update.
\begin{subclaim}[Heap Reordering]
  If $\mathcal{D}\colon{}(\SE H:\Psi), H(\ell^H_1)=\free{}[a], \ell^H_1 \in
  \dom{\Psi}, \ell^H_2 \in \dom{\Psi}, \ell^H_1 \notin_H \ell^H_2$ then exists
  $\mathcal{D'}\colon{}(\SE H:\Psi)$ where $\succ(\mathcal{D'},
  \ell^H_1) \subseteq \succ(\mathcal{D}, \ell^H_1)$ and
  $\pred(\mathcal{D'}, \ell^H_2) \subseteq \pred(\mathcal{D}, \ell^H_2)$
  and $\ell^H_2 \in \pred(\mathcal{D'}, \ell^H_1)$
\end{subclaim}
\begin{proof}
  By lexicographic induction on $|\mathcal{D}|$ and $|\succ(\mathcal{D}, \ell^H_1)|$.
  We give an explicit construction of the proof term $\mathcal{D}'$ as a function of $\mathcal{D},\ell^H_1,$ and $\ell^H_2$ in functional pseudocode, then show the construction obeys the desired properties in each case.
  Here $str(d_{\ell^H})$ and $weak(d_{\ell^H})$ refer to the typing derivations that result from appeals to the heap value strengthening and weakening lemmas, respectively (Lemmas~\ref{lem:heap-str} and \ref{lem:sweak}).

\newcommand{\DD}{\mathcal{D}}
\newcommand{\CA}{\Rightarrow}
\newcommand{\str}{\tt str}
\newcommand{\weak}{\tt weak}
\newcommand{\intwo}{\hspace{0.2in}}
\begin{tabular}{ll}
Case & $\DD'(\DD,\ell_1,\ell_2)=$\\\hline
  &${\tt case}~\DD~{\tt of}$\\
1 &\ \ ${\tt nil}~\CA~\DD$\\
2 &${\tt|\ nil'}, d_\ell~\CA~\DD$\\
3 &${\tt|\ }\DD_1;d_{\ell};d_{\ell 1}~\CA~\DD$\\
4 &${\tt|\ }\DD_1;d_{\ell 1};d_{\ell 2}~\CA~\DD_1;\str(d_{\ell 2});\weak(d_{\ell 1})$\\
5 &${\tt|\ }\DD_1;d_{\ell};d_{\ell 2}~{\tt where}~\ell\neq \ell_1~\CA$\\
     &\indentone{${\tt if}~(\ell_1' \notin \ell_2)~{\tt then}$}\\
5a   &\indenttwo{$\DD'((\DD_1;\str(d_{\ell 2})),\ell_1,\ell_2);\weak(d_{\ell})$}\\
  &\indentone{} ${\tt else}$\\
5b   &\indenttwo{${\tt let}~\DD_2=\DD'((\DD_1;d_{\ell}),\ell_1,\ell)~{\tt in}$}\\
     &\indenttwo{$\DD'((\DD_2;d_{\ell 2}),\ell_1,\ell_2)$}\\
  &${\tt|\ }\DD_1;d_\ell;d_{\ell'}~{\tt where}~ \ell'\neq \ell_1,\ell'\neq \ell_2~\CA$\\
  &\indentone $\DD'((\DD_1;d_\ell),\ell_1,\ell_2);d_{\ell'}$\\
\end{tabular}
% \begin{verbatim}
% Case  D'(D, l1, l2)=
%         case D of
% 1         nil => D
% 2       | nil', dl => D
% 3       | D1;dl; dl1 => D
% 4       | D1;dl1;dl2 => D1;str(dl2);weak(dl1)
% 5       | D1;dl; dl2 where l != l1 =>
%             if (l' notin l2) then
% 5a              D'((D1;str(dl2)),l1,l2);weak(dl)
%             else
% 5b             let D2 = D'((D1;dl),l1,l) in 
%                D'((D2;dl2),l1,l2)
%         | D1;dl;dl' where l'!=l1,l'!=l2 =>
% 6           D'((D1;dl),l1,l2);dl'
% \end{verbatim}
  \begin{itemize}
  \item Cases 1 and 2 hold vacuously because our preconditions only hold for  $|\mathcal{D}| \geq 2$.
  \item Case 3: In this case $\ell^H_2 \in \pred(\mathcal{D}, \ell^H_1)$ (either $\ell'^H = \ell^H_2$
or $\ell_2^H$ appears elsewhere in $H$), so there is no work to be done.
  \item Case 4: By the assumption that $\ell^H_1 \notin_H \ell^H_2$, we can apply Lemma~\ref{lem:heap-str}, yielding $\SE \hext{H'}{\ell^H_2}{v_2} : \thext{\Psi'}{\ell^H_2}{\tau_2}$.
By Lemma~\ref{lem:sweak}, $\thext{\Psi'}{\ell^H_1}{v_1} \vdash v_1 : \tau_1,$ so $\SE \hexts{H'}{\ell^H_2 \hook v_2, \ell^H_1 \hook v_1} : \hexts{\Psi'}{\ell^H_2 : \tau_2, \ell^H_1 : \tau_1}$, which satisfies the requirements.
  \item Case 5: By Lemma~\ref{lem:occ-tot}, either $\ell'^H \in_H \ell^H_2$ or $\ell'^H \notin_H \ell^H_2$.
  \item Case 5a: By Lemma~\ref{lem:heap-str}, $\hext{H}{\ell^H_2}{v_2} : \hext{\Psi}{\ell^H_2}{\tau_2}$ so we can apply the IH on $\hext{H'}{\ell^H_2}{v_2}$ giving a derivation $\mathcal{D}_1$. The result follows in combination with Lemma~\ref{lem:sweak} on $\ell^H$.
  \item Case 5b:
By Lemma~\ref{lem:occ-tot}, $\ell^H_1 \in_H \ell'^H$ or $\ell^H_1 \notin_H \ell'^H$. In this case $\ell^H_1 \notin_H \ell'^H$. 
Otherwise by Lemma~\ref{lem:occ-trans}
$\ell^H_1 \in_H \ell^H_2$, but we assumed $\ell^H_1 \notin_H \ell^H_2$ and this is a contradition because occurs is a function (Lemma~\ref{lem:occ-tot}).
Thus we can apply the IH on $\hext{H'}{\ell'^H}{v'}$ (because $|\hext{H'}{\ell'^H}{v'}| < |H|$) to swap $\ell^H_1$ with $\ell'^H$ resulting in a derivation $\mathcal{D}_2$.
The IH tells us $|\succ(\mathcal{D}_2, \ell^H_1)| < |\succ(\mathcal{D}, \ell^H_1)|$,
allowing us to apply the IH a second time on the derivation $\mathcal{D}_2;d_2$.
The second IH implies the result.
  \item Case 6: This case is direct by the IH.
  \end{itemize}
\end{proof}

Given Heap Reordering, the first claim of Heap Update follows directly with the following derivation
for $\bound{\ell^H_2}$
\[\infer[\textsc{HV-Bound}]{\Psi'\ent \bound{\ell^H_2} : \tau}
 {\infer[\textsc{WV-}\ell^H]{\Psi'\ent \ell^H_2 : \tau}
 {\Psi'(\ell^H_2)=\tau}}\]
where $\Psi'$ is the heap type assigned by $\mathcal{D}'$ to the prefix of $\ell^H_2$, which must
contain $\ell^H_1$.

The second claim follows by iterating Heap Reordering, and because Heap Reordering preserves the
predecessors of $\ell^H_1$.

\subsubsection{Trails}
\begin{lem}[Trail Update]\label{lem:trail-up} Introducing and binding free variables both preserve the validity of the trail:
  \begin{align*}
    (a)&\text{If $S \ent T \ok$ and $S(\ell^H)=\free{}[a]$ then $\hup{S}{\ell^H}{w}\ent \uptrail(\ell^H:a,T) \ok$.}\\
    (b)&\text{If $S \ent T \ok$ and $\ell^H$ is fresh then $\hext{S}{\ell^H}{\free{}[a]} \ent T \ok$}.
  \end{align*}
\end{lem}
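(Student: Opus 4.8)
The plan is to treat the two parts independently: part (a) needs only a one-level unfolding of the trail and no induction, while part (b) is a short induction on the trail. In both cases the real content is a small algebraic identity about $\unwind$, after which everything is reassembly of premises obtained by inverting \textsc{Trail-Cons}.

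For part (a) I would split on the shape of $T$. If $T = \epsilon$, then $\uptrail(\ell^H:a,\epsilon) = \epsilon$ and \textsc{Trail-Nil} finishes it. If $T = (t, w_{env}, \ell^C)::T'$, then $\uptrail(\ell^H:a,T) = ((\ell^H:a)::t, w_{env}, \ell^C)::T'$, and I rebuild a \textsc{Trail-Cons} derivation. Inverting the hypothesis $S\ent T\ok$ yields $S'$ with $\unwind(S,t)=S'$, $\ent S'\colon(\Xi;\Psi')$, $\Psi'\ent w_{env}\colon\tau$, $\Psi'\ent\ell^C\colon\neg\heap{\texttt{env}\colon\tau}$, and $S'\ent T'\ok$. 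The key computation is
\[\unwind(\hup{S}{\ell^H}{w},(\ell^H:a)::t)=\unwind(\hup{S}{\ell^H}{\free{}[a]},t)=\unwind(S,t)=S',\]
where the middle step uses the hypothesis $S(\ell^H)=\free{}[a]$, so that re-binding $\ell^H$ to $\free{}[a]$ is a no-op; this is precisely the sense in which $\uptrail$ undoes a prospective $\unwind$. Since the unwound store is literally the same $S'$, the remaining premises of \textsc{Trail-Cons} are exactly those obtained by inversion, and the case closes.

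For part (b) I would induct on the derivation of $S\ent T\ok$. The \textsc{Trail-Nil} case is immediate. In the \textsc{Trail-Cons} case, $T = (t, w_{env}, \ell^C)::T'$ and inversion gives $\unwind(S,t)=S'$, $\ent S'\colon(\Xi;\Psi')$, $\Psi'\ent w_{env}\colon\tau$, $\Psi'\ent\ell^C\colon\neg\heap{\texttt{env}\colon\tau}$, and $S'\ent T'\ok$. Since $\ell^H$ is fresh it lies outside $\dom{S}$, hence outside every location list recorded in the trail (those locations belong to $\dom{S}$, which is exactly what makes $\unwind(S,t)$ meaningful in the given derivation), so $\unwind$ commutes with the fresh extension:
\[\unwind(\hext{S}{\ell^H}{\free{}[a]},t)=\hext{S'}{\ell^H}{\free{}[a]}.\]
Because $\unwind$ does not change the heap domain, $\ell^H$ is still fresh for $S'$, so the Heap Typing clause of Weakening (Lemma~\ref{lem:sweak}) gives $\ent\hext{S'}{\ell^H}{\free{}[a]}\colon(\Xi;\Psi')$; the typings of $w_{env}$ and $\ell^C$ against $\Psi'$ are unaffected; and the induction hypothesis on $S'\ent T'\ok$ (with $\ell^H$ fresh for $S'$) yields $\hext{S'}{\ell^H}{\free{}[a]}\ent T'\ok$. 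Reassembling via \textsc{Trail-Cons} gives $\hext{S}{\ell^H}{\free{}[a]}\ent T\ok$.

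The hardest part — still light — will be pinning down the two facts about $\unwind$: that re-binding an already-free location is the identity on the store, and that $\unwind$ commutes with extension by a location disjoint from the trail frame. Both hinge on the observation that a well-typed trail records only heap locations that already exist, so that freshness of $\ell^H$ genuinely keeps it clear of the trail. Everything else is bookkeeping over the inverted \textsc{Trail-Cons} premises.
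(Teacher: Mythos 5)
Your proof is correct and follows essentially the same route as the paper: part (a) rests on the observation that the updated store with the extended trail frame unwinds to the very same $S'$ (since re-binding $\ell^H$ to $\free{}[a]$ is a no-op when $S(\ell^H)=\free{}[a]$), and part (b) rests on the subclaim that $\unwind$ commutes with extension by a fresh location, plus store weakening. You are merely more explicit than the paper (empty-trail case in (a), induction over the trail derivation in (b), and the remark that trail locations lie in $\dom{S}$), which is fine.
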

First consider the typing rule for nonempty trails:
\[\infer[\textsc{Trail-Cons}]{S \ent(t,w_{env},\ell^C)::T' \ok}
{\SE S' \colon{}(\Xi,\Psi') \hskip 0.1in \Psi' \ent w_{env} \colon{} \tau
&\unwind(S,t)=S' & S' \ent T' \ok
&\Psi' \ent \ell^C\colon{}\neg\heap{\texttt{env}\colon{}\tau}}\]
Trails are well-typed so long as unwinding (as used in backtracking) results in a well-typed state.
%The typing rule for trails is based entirely on properties of the state we enter after backtracking.
%: showing that an update preserves validity consists simply
%of showing that it does not change the result of backtracking (modulo perhaps introducing unused values).

Claim (a) says binding a free variable $X$ to a term represented by word value $w$, then adding $X$ to the trail, results in a well-typed trail.
$S(\ell^H)=\free{}[a]$ then $\hup{S}{\ell^H}{w}\ent \uptrail(\ell^H:a,T) \ok$ iff unwinding results in a well-typed store.
Because it unwinds to the same store as does $T,$ this is true by assumption.

Claim (b) is weakening principle for trails, which follows from weakening for stores.
This claim shows that the trail does not need to be modified when a fresh variable is allocated, only when it is bound to a term. 
Claim (b) follows from the following subclaim:
\begin{subclaim}$\unwind(\hext{S}{\ell^H}{\free{}[a]},t) = (\hext{S'}{\ell^H}{\free{}[a]})$\end{subclaim}
The subclaim holds by induction on $t$, completing the proof of Lemma~\ref{lem:trail-up}.
\begin{lem}[Backtracking Totality]\label{lem:bt-tot} For all trails $T,$ if $\cdot\ent S:(\Xi;\Psi)$ and $S \ent_{\Sigma;\Xi} T \ok$ then
either $\bt(S,T)=m'$ and $\SE m' \ok$ or $\bt(S,T)=\bot$\end{lem}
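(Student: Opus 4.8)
The plan is to argue by cases on the last rule of the derivation $S \ent_{\Sigma;\Xi} T \ok$; no induction is needed, since $\bt$ inspects only the topmost trail frame and the hypothesis one would otherwise recurse on is already handed to us as a premise of \textsc{Trail-Cons}. If the derivation ends in \textsc{Trail-Nil}, then $T = \epsilon$ and \textsc{BT-Nil} gives $\bt(S,\epsilon) = \bot$, which is the second disjunct.

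So suppose the derivation ends in \textsc{Trail-Cons}, with $T = (t,w_{env},\ell^C)::T'$. Inversion supplies a store $S'$ and a heap type $\Psi'$ with $\unwind(S,t) = S'$, $\SE S':(\Xi;\Psi')$, $\Psi' \ent w_{env}:\tau$, $\Psi' \ent \ell^C:\neg\heap{\texttt{env}:\tau}$, and $S' \ent T' \ok$. To fire \textsc{BT-Cons} I must show that $\ell^C$ denotes a code block of the expected shape. Since $\unwind$ only rewrites heap cells, $S$ and $S'$ share the same code section $C$; from $\Psi' \ent \ell^C:\neg\heap{\texttt{env}:\tau}$ we read off $\Xi(\ell^C) = \neg\heap{\texttt{env}:\tau}$, and then $\SE S':(\Xi;\Psi')$ yields $\SE C(\ell^C) : \neg\heap{\texttt{env}:\tau}$. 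Canonical Forms for code values (Lemma~\ref{lem:scf}) forces $C(\ell^C) = \code[\heap{\texttt{env}:\tau}](I)$ for some $I$, and inversion on that typing derivation produces the premise $\heap{\texttt{env}:\tau} \ent I \ok$. Hence \textsc{BT-Cons} applies and $\bt(S,T) = (T',S',\heap{\texttt{env}\hook w_{env}},I) =: m'$.

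It remains to derive $\SE m' \ok$ by rule \textsc{Mach}. The store premise $\SE S':(\Xi;\Psi')$ is one of the facts from inversion; the register-file premise $\Psi' \ent \heap{\texttt{env}\hook w_{env}} : \heap{\texttt{env}:\tau}$ follows from $\Psi' \ent w_{env}:\tau$ by \textsc{RF}; the block premise $\heap{\texttt{env}:\tau} \ent I \ok$ is exactly the one extracted above; and the trail premise $S' \ent T' \ok$ is again given by inversion. Assembling these with \textsc{Mach} gives $\SE m' \ok$, closing the case. The one mildly delicate point is the passage from the abstract type $\neg\heap{\texttt{env}:\tau}$ recorded for $\ell^C$ in the trail to the concrete code value $\code[\heap{\texttt{env}:\tau}](I)$ sitting in $C$, and thence to $\heap{\texttt{env}:\tau}\ent I\ok$: this is where well-typedness of the code section and Canonical Forms do the real work, while everything else is bookkeeping over hypotheses the trail-typing derivation already supplies.
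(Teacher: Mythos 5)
Your proof is correct and follows the same route as the paper, which simply argues by cases on the derivation of $S \ent T \ok$ (Trail-Nil giving $\bot$, Trail-Cons supplying exactly the premises needed to fire \textsc{BT-Cons} and assemble $\SE m' \ok$ via \textsc{Mach}). Your write-up just spells out the inversion, canonical-forms, and \textsc{RF} bookkeeping that the paper leaves implicit.
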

\begin{proof}
By cases on $S\ent T \ok$.
\end{proof}

\subsubsection{Dynamic Unification}
Unification uses a simple lemma on pointer following:
\begin{lem}[End Correctness]\label{lem:end-corr}  If $\PE \ell^H :a$ and $\SE H : \Psi$
then $\edn(H,\ell^H)=\ell_H'$ and $\PE \ell'^H :a$ and either $H(\ell'^H) = \free{}[a]$
or $H(\ell'^H)=\func{c}{\vec w}$\end{lem}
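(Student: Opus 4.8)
The plan is to prove this by strong induction on $|\dom{\Psi}|$, exploiting the fact that a heap typing derivation $\mathcal{D}\colon(\SE H\colon\Psi)$ is a witness for acyclicity: the value stored at a location is typed only against the \emph{prefix} of heap types introduced before that location, so a $\bound{}$ pointer can only target a strictly earlier location. This is exactly what makes $\edn$ terminating, and the induction records that fact as a side effect.

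First I would invert $\PE \ell^H\colon a$ with rule \textsc{WV-}$\ell^H$ to get $\Psi(\ell^H) = a$. Since $\ell^H\in\dom{\Psi}$, there is a unique \textsc{Heap-Cons} step of $\mathcal{D}$ that introduces $\ell^H$; its premises supply a subderivation $\mathcal{D}_0\colon(\SE H\colon\Psi_0)$ with $\ell^H\notin\dom{\Psi_0}$, $H(\ell^H)=v^H$, and $\Psi_0\ent v^H\colon a$ (later extensions only add fresh locations, so $\ell^H$ retains type $a$). By Canonical Forms (Lemma~\ref{lem:scf}, heap-value case with $\tau=a$), together with the SWAM convention that words are heap locations, $v^H$ is one of $\free{}[a]$, $\cwsh$ with $\Sigma(c)=\vec a\to a$, or $\bound{\ell'^H}$ where inversion on \textsc{HV-Bound} and \textsc{WV-}$\ell^H$ gives $\Psi_0\ent\ell'^H\colon a$.

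In the $\free{}[a]$ and $\cwsh$ cases, the $\edn$ rules for free variables and for structures give $\edn(H,\ell^H)=\ell^H$, so taking $\ell'^H := \ell^H$ works: $\PE\ell'^H\colon a$ is the hypothesis, and $H(\ell'^H)$ has the required shape. In the $\bound{\ell'^H}$ case, $\Psi_0\ent\ell'^H\colon a$ forces $\ell'^H\in\dom{\Psi_0}$, hence $|\dom{\Psi_0}| < |\dom{\Psi}|$, so the induction hypothesis applied to $\mathcal{D}_0$ yields $\edn(H,\ell'^H)=\ell''^H$ with $\Psi_0\ent\ell''^H\colon a$ and $H(\ell''^H)$ a free variable or a structure. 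The $\edn$ rule for $\bound{}$ chains then gives $\edn(H,\ell^H)=\ell''^H$, and since $\Psi$ extends $\Psi_0$ — equivalently, by iterating Weakening (Lemma~\ref{lem:sweak}) — we lift $\Psi_0\ent\ell''^H\colon a$ to $\PE\ell''^H\colon a$, closing the case. Because the $\bound{}$ case strictly decreases the measure, this same argument shows $\edn(H,\ell^H)$ is in fact defined.

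The only real obstacle is the bookkeeping around the heap-typing order: one must make precise that the target of a $\bound{}$ pointer lives in the prefix $\Psi_0$ (which simultaneously delivers termination of $\edn$ and a decreasing induction measure) and then re-establish the conclusion's typing judgement against the full $\Psi$. Both points are immediate from the list structure of heap typing derivations and from weakening, so beyond choosing the induction principle correctly there is no substantive difficulty.
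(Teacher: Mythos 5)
Your proof is correct, and it is in substance the argument the lemma needs; the difference from the paper is mainly in which derivation carries the induction. The paper's proof is the one-line sketch ``by induction on the derivation $\PE \ell^H : a$,'' which read literally is problematic: that judgement is derived by the single rule \textsc{WV-}$\ell^H$ with no subderivations, so it supplies no decreasing structure for chasing $\bound{}$ pointers. The well-foundedness really comes from the acyclicity witnessed by $\SE H : \Psi$, and your proposal makes that explicit by inducting on $|\dom{\Psi}|$, peeling off the \textsc{Heap-Cons} step that introduces $\ell^H$, observing that the stored value is typed against the strict prefix $\Psi_0$ (so a $\bound{\ell'^H}$ target lies in $\dom{\Psi_0}$, strictly shrinking the measure), and then restoring the conclusion against the full $\Psi$ by iterated weakening (Lemma~\ref{lem:sweak}) together with canonical forms (Lemma~\ref{lem:scf}) to enumerate the three shapes $\free{}[a]$, $\cwsh$, $\bound{\ell'^H}$. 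What your route buys is a genuinely well-founded measure and, as you note, termination of $\edn$ as a by-product; what the paper's phrasing buys is brevity, at the cost of leaving the actual induction principle implicit in the ordered structure of the heap-typing derivation. The only bookkeeping worth double-checking in your write-up is the identification $\tau = a$ for the type assigned at the \textsc{Heap-Cons} step, which you justify correctly by the freshness side condition on later extensions.
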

  \begin{proof}
    By induction on the derivation $\PE \ell^H:a$.
  \end{proof}
Runtime unification is total and results in a well-typed store and trail.
\begin{lem}[Soundness of \unify]\label{lem:unify-sound} If $\SE H: \Psi,
\label{lem:unify-soundness} 
 \Psi \vdash w_1 : a, \Psi \vdash w_2 : a,$ then $\unify(S,T,w_1,w_2) = \bot$ or
$\unify(S,T,w_1,w_2) = (S', T')$ where $\SE S' :  (\Psi;\Xi)$ and $S'\ent T'\ok$\end{lem}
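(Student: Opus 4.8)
The plan is to prove the statement jointly with the corresponding claim for the mutually recursive judgement $\unifyargs$: \emph{if $\SE H:\Psi$, $S\ent T\ok$, and $\Psi\vdash\ell^H_i:a_i$ and $\Psi\vdash\ell'^H_i:a_i$ for every $i$, then $\unifyargs(S,T,\vec\ell^H,\vec\ell'^H)$ is either $\bot$ or some $(S',T')$ with $\SE S':(\Xi;\Psi)$ and $S'\ent T'\ok$.} I take $S\ent T\ok$ as an implicit side hypothesis; it holds at every call site of the lemma (via the machine typing invariant) and is needed even to state the trail conclusion. Because unification binds free variables, structural recursion on terms does not close the induction, so I would instead induct on the lexicographic measure $(\nu(H),\,\sigma)$, where $\nu(H)$ is the number of heap cells currently holding $\free{}[\cdot]$ and $\sigma$ is the aggregate size of the terms rooted at the unificands (in the $\unifyargs$ case, plus the length of the argument lists). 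Both quantities are well defined because the heap typing derivation $\mathcal{D}:(\SE H:\Psi)$ witnesses acyclicity, which also makes $\edn$ total (End Correctness, Lemma~\ref{lem:end-corr}); and every recursive call either binds a fresh free variable, strictly lowering $\nu$, or descends into a proper subterm of an acyclic structure, leaving the store and hence $\nu$ fixed while strictly lowering $\sigma$. Whenever the rule applied concludes $\unify(\dots)=\bot$ or $\unifyargs(\dots)=\bot$ the goal holds by its first disjunct, so only the success rules carry content; and since $\unify$ alters only $H$, the code section and its typing at $\Xi$ are preserved automatically, leaving only $H$ and $T$ to track.

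For $\unify$ I would case on the last rule. The reflexive rule gives $S'=S,\ T'=T$ with nothing to prove. For the structure rule, End Correctness gives $\edn(S,\ell^H_j)=\hat\ell_j$ with $\Psi\vdash\hat\ell_j:a$; since $\SE H:\Psi$ forces $H(\hat\ell_1)=c\args{\vec\ell^H}$ to be typeable at $\Psi(\hat\ell_1)=a$, Canonical Forms (Lemma~\ref{lem:scf}, the \textsc{HV-Str} case) yields $\Sigma(c)=\vec a\to a$ and $\Psi\vdash\vec\ell^H:\vec a$, and symmetrically $\Psi\vdash\vec\ell'^H:\vec a$ from $H(\hat\ell_2)$, so the $\unifyargs$ induction hypothesis applies to the strictly smaller call $\unifyargs(S,T,\vec\ell^H,\vec\ell'^H)$ and delivers the conclusion. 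The essential rule is the free-variable case: there $\edn(S,\ell^H_2)=\hat\ell_2$ with $S(\hat\ell_2)=\free{}[a]$, the occurs check $\hat\ell_2\notin_S\ell^H_1$ succeeds, $S'=\hup{S}{\hat\ell_2}{\bound{\ell^H_1}}$, and $T'$ is $T$ with $\hat\ell_2$ (at type $a$) recorded by $\uptrail$. End Correctness and inversion of the \textsc{WV-}$\ell^H$ rule give $\Psi(\hat\ell_2)=a$, and the hypothesis $\Psi\vdash w_1:a$ with $w_1=\ell^H_1$ gives $\Psi(\ell^H_1)=a$; hence Heap Update (Lemma~\ref{lem:heap-update}(a)), whose occurs-check hypothesis is precisely the premise supplied by the operational rule, yields $\SE\hup{H}{\hat\ell_2}{\bound{\ell^H_1}}:\Psi$, and Trail Update (Lemma~\ref{lem:trail-up}(a)), applicable because $S(\hat\ell_2)=\free{}[a]$, yields $S'\ent T'\ok$.

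For $\unifyargs$, the empty case is immediate. In the cons case I would apply the $\unify$ induction hypothesis to the head pair, getting $(S',T')$ with $\SE S':(\Xi;\Psi)$ and $S'\ent T'\ok$; since word typing in SWAM depends only on $\Psi$, not on the store, and $\Psi$ is unchanged, the remaining $\ell^H_i,\ell'^H_i$ are still well typed at $a_i$ over $S'$, so the $\unifyargs$ induction hypothesis applies to the (smaller) tail call and finishes. The main obstacle is the free-variable case: one must align the operational occurs-check premise with the exact hypothesis of Heap Update so that acyclicity is provably preserved across the binding --- this is where the real content of the lemma sits, and it is already discharged by the Heap Reordering subclaim in the proof of Heap Update. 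A smaller but genuine point is the well-foundedness of the measure: binding a variable can \emph{enlarge} the sizes of other terms (a former $\free{}$ cell now points into a large structure) yet can never increase $\nu$, which is exactly why $\nu$ must dominate the lexicographic order, and the list-length summand in $\sigma$ is what makes peeling a pair off even a one-element argument list count as a decrease.
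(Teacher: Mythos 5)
Your proposal follows essentially the same route as the paper's proof: a simultaneous induction of the lemma with the $\unifyargs$ subclaim, case analysis on the forms produced by $\edn$ via End Correctness (Lemma~\ref{lem:end-corr}) and Canonical Forms (Lemma~\ref{lem:scf}), Heap Update (Lemma~\ref{lem:heap-update}(a)) together with Trail Update (Lemma~\ref{lem:trail-up}(a)) in the variable-binding case, and the $\unifyargs$ induction hypothesis in the structure case, with the failure rules discharged immediately. Your observations that $S \ent T \ok$ must be carried as a (tacit) hypothesis, and that $\Psi$ is unchanged by unification so the tail unificands remain well typed after the head call, are exactly what the paper's argument relies on as well.

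The one point where you genuinely depart from the paper is the termination measure (the paper inducts on the size of $\Psi$ for $\unify$ and on the argument lists for $\unifyargs$), and your measure as stated has a flaw. On the step from $\unify$ applied to two structures of arity $n$ down to $\unifyargs$ on their argument lists, the store is unchanged, so $\nu$ does not move, and $\sigma$ goes from $2+\Sigma+\Sigma'$ (two constructor nodes plus the argument sizes) to $\Sigma+\Sigma'+n$ (argument sizes plus your list-length summand); this is not a strict decrease once $n \geq 2$. The repair is easy: keep the list length as a \emph{third} lexicographic component (taken to be $0$ for $\unify$) instead of folding it into $\sigma$. Then the structure step drops the term-size component by $2$; peeling the head off a singleton list ties on term size but drops the list length; and the remaining steps behave as you argued. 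As written, however, the well-foundedness argument --- which is precisely the piece of content you add beyond the paper's sketch --- does not go through.
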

\begin{proof}
We prove the claim by simultaneous induction with the following subclaim:
\begin{claim} 
\label{claim:unify-args-soundness}
For all argument lists (push-down lists) $(w_1::\cdots::w_n)$ and $,(w_1'::\cdots::w_n'),$
if for all $i$, $\Psi \vdash w_i : a_i$ and $\Psi \vdash w_i' : a_i$ then $\unifyargs(S,T,(w_1::\cdots::w_n),(w_1'::\cdots::w_n'), \ldots, (w_n,w_n')) = \bot$ or\\
$\unifyargs(S,T,(w_1::\cdots::w_n),(w_1'::\cdots::w_n')) = (S',T')$ where $\SE S' : (\Xi;\Psi)$ and $S'\vdash T'\ok$.
\end{claim}
\begin{proof}
Lemma \ref{lem:unify-soundness} is by induction on the size of the type $\Psi$
and Claim \ref{claim:unify-args-soundness} is by induction on the argument lists $(w_1::\cdots::w_n),(w_1'::\cdots::w_n')$.
The inductive case proceeds by cases on the form of $\edn(S,\ell^H_1)$ and $\edn(S,\ell^H_2)$ using Lemma~\ref{lem:end-corr}.
We present only a few of the success cases here, for the remaining cases are straightforward.

\case{$(\free{}[a], \free{}[a]):$}
Case on  $\edn(S,\ell^H_1) = \edn(S,\ell^H_2)$ holds. If it does, we apply the first rule, else by case assumption we have $\edn(S,\ell^H_2) \notin_S \ell^H_1$ and apply the second, then apply Lemma~\ref{lem:heap-update} to get $\hup{S}{\ell^H_1}{\bound{\ell^H_2}} : (\Psi;\Xi)$ and Lemma~\ref{lem:trail-up} to get $\hup{S}{\ell^H_1}{\bound{\ell^H_2}} \ent T' \ok$.
:
{\footnotesize\begin{center}
\begin{tabular}{cc}
\infer[\unify=]{\unify(S,T,\ell^H_1,\ell^H_2) = (S,T)}
        {\edn(\ell^H_1) = \ell^H & \edn(\ell^H_2) = \ell^H}&
\infer[\unify\ \free{}]{\unify(S,T,\ell^H_1,\ell^H_2) = (\hup{S}{\ell'^H_2}{\bound{\ell^H_1}},T')}
        {\deduce{\Psi(\ell'^H_1) = a \hskip 0.1in \ell'^H_2 \notin_S \ell^H_1\hskip 0.1in \uptrail(T,(\ell'^H_2:a))=T'}
          {\edn(S,\ell^H_1) = \ell'^H_1 & \edn(S,\ell^H_2) = \ell'^H_2 & S(\ell'^H_2) = \free{}[a]}}
\end{tabular}
\end{center}}

\case{$(c\langle\ell^H_1,\ldots,\ell^H_n\rangle,c\langle\ell'^H_1,\ldots,\ell'^H_n\rangle)$}
We apply the IH on the subclaim for $\unifyargs$ and the result follows immediately (and similarly if $\unifyargs$ were to fail).
\[\infer[\unify\ c\langle\rangle]{\unify(S,T,\ell^H_1,\ell^H_2) = (S',T')}
        {\deduce{\edn(S,\ell^H_2) = \ell'^H_2\hskip 0.1in\unifyargs(S,T,\vec \ell^H, \vec \ell'^H) =(S',T')}
          {S(\ell'^H_1) = \func{c}{\vec \ell^H} & S(\ell'^H_2) = \func{c}{\vec \ell'^H} & \edn(S,\ell^H_1) = \ell'^H_1}}\]
\end{proof}
\end{proof}
The essential cases of progress and preservation for the simply-typed system are instructions such as {\tt get\_val} and {\tt unify\_val} that rely on unification.
Those cases of progress and preservation follow from the unification soundness lemma above.
Moreover, progress and preservation for the simply-typed system are subsumed by their dependently-typed equivalents.
Most of the proofs above carry over readily. Any important differences are covered in Section \ref{sec:dep-met}.

\section{Typed Compilation in Proof-Passing Style}
\label{sec:lf-encoding}
Our certification approach is based on specifying the semantics of a T-Prolog program as an LF signature.
Before we can certify the correctness of compilation, we give a mechanical translation from T-Prolog programs to LF signatures (see Figure \ref{lftrans} for example):
\begin{itemize}
\renewcommand\labelitemi{--}
\item A type $a$ in T-Prolog translates to an LF constant $a \colon \type$.
\item A constructor $c \colon \vec a \to a$ translates to an LF constant of the same type.
\item A predicate $p \colon \vec a \to \prop$ translates to an LF constant $p \colon \vec a \to \type$.
\item {A clause $C$ of form $G \verb+ :- + \mathrm{SG}_1, \ldots,
  \mathrm{SG}_n.$ translates to an LF constant (of dependent function type) $C\colon\Pi\ \Delta. \Pi\
  \vec{\mathrm{SG}}. G$ where $\Delta$ consists of the free variables
  of $\Pi\ \vec{SG}.G$.}
\item Executing a query \verb+?-+$G.$ translates to searching for a proof of $G$.
\end{itemize}

\begin{exmp}[T-Prolog Program with its LF Signature]\label{lftrans}
%\scriptsize
%\figrule
\begin{center}
\begin{tabular}{ll}
\verb+nat:type.+                                                        &\verb+nat:type.+\\
\verb+zero:nat.+                                                        &\verb+zero:nat.+\\
\verb+succ:nat +$\to$\verb+ nat.+                                      &\verb+succ:nat +$\to$\verb+ nat.+\\
\verb+plus:nat +$\to$\verb+ nat +$\to$\verb+ nat +$\to$\verb+ prop.+ &\verb+Plus:nat +$\to$\verb+ nat +$\to$\verb+ nat +$\to$\verb+ type.+\\
\verb+plus(zero,X,X).+                                                   &\verb+Plus-Z:+$\Pi$\verb+X:nat. plus zero X X.+\\
\verb+plus(succ(X),Y,succ(Z)) :-+                                       &\verb+Plus-S:+$\Pi$\verb+X:nat. +$\Pi$\verb|Y:nat.| $\Pi$\verb+Z:nat.+\\
\verb+  plus(X,Y,Z).+                                                    &\: \: $\Pi$\verb+D:plus X Y Z. +\\
                                                                           & \: \: \: \: \verb|plus (succ X) Y (succ Z).|
\end{tabular}
\end{center}
\end{exmp}

Now that we have defined the semantics of a Prolog program in LF, we can describe our certification approach.
The TWAM certification approach can be summed up in the following slogan:
\begin{center}
  Proof-Carrying Code + Programming As Proof Search = Proof-Passing Style
\end{center}

Proof-carrying code is the technique of packaging compiled code with a formal proof that the code satisfies
some property. Previous work \cite{necula1998design} has used proof-carrying code to build certifying compilers
which produce proofs that the programs they output are memory-safe. Our insight is that by combining this technique
with the programming-as-proof-search paradigm that underlies logic programming, our compiler can produce proofs of
a much stronger property: partial dynamic correctness.

The programming-as-proof-search paradigm tells us that partial dynamic correctness consists of the following theorem, stated
informally here and formally in Section \ref{sec:dep-met}:

\textbf{Theorem 1: } If a query \verb+?-+$G.$ succeeds, there exists a proof $M$ of $G$ in LF.

Our compiler need only output enough information that the TWAM typechecker can reconstruct the proof of Theorem 1.
This requires statically proving that whenever \textit{any} proof search procedure $p$ would return, the corresponding predicate $P$ would have a proof in LF. 
This proof boils down to accumulating facts when unification succeeds and annotating all return points with the resultant LF proof terms.
This \textit{proof-passing} style of programming is essential to the type system of the TWAM.
It is worth noting also that proof-passing style is needed only at compile-time, because TWAM also supports \textit{proof-erasure}.
Thus the only runtime cost of certifying compilation with TWAM is the cost of using SWAM vs. other variants of the WAM;
TWAM introduces no additional overheads compared to SWAM.

\section{Dependently-Typed WAM}
\label{sec:dependent-wam}
The simply-typed system presented in Section \ref{sec:simple-wam} is insufficient to prove that compiled programs implement a given LF signature. 
The \verb+succeed+ instruction
\[\infer{\GE \succeed; I \ok}{~\vspace{0.08in}}\]
trivially typechecks in any context, but we wish to prove that a program only succeeds if 
a proof $M$ of some query $A$ exists in LF. We begin our dependently-typed development by requiring
exactly that in the typing rule:
\[\infer{\DGE \succeed[M\colon{}A];I \ok}{\DE M \colon{} A}\]
Yet if this was the only change we made, we could never compile meaningful programs because the premise would be too difficult to fulfill.
We make this premise easier to meet by introducing the ability for continuations to accept LF proof terms as arguments.
Because the \verb+succeed[M:A]+ instruction generally occurs in the top-level success continuation for a query, we can make this continuation  accept a proof of $M$ as an argument $x$ and supply $x$ as the proof term for \texttt{succeed}, passing the burden of proof onto the caller.

In this way, we can decompose the proof argument for $M$ into one argument for each basic block of the proof search algorithm.
This too is nontrivial: whether the proof $M$ exists for a given query $A$ cannot be known until $A$ is executed at runtime, but certification occurs at compile-time.
In order to reason statically about runtime proof search, the type system must connect LF terms with runtime constructs such as registers and heap values.
Whenever a unification succeeds at runtime (i.e. we learn that we can apply a particular rule), we need some way to say the same terms should be unified in the statics.
Without a mechanism for translating between the runtime values and the static LF terms, we have no mechanism by which to learn new facts during proof search, and thus no way to construct nontrivial LF proofs.

The simplest possible relation between an LF term $M$ and a heap value $v$ is the notion of equality. Since heap values can also contain pointers
and can exhibit sharing structures not visible in the LF term, we might 
more accurately think of this equality relation as ``$v$ encodes $M$''.
We add this notion to our type system by introducing
\textit{singleton types} $\sing(M \colon a)$ for values that encode an
LF term $M$ which itself has type $a$. This is the only fundamentally new
value type, though other aspects of the type system will change as well to accommodate
the presence of proofs.

In particular, we introduce a context $\Delta$ that contains the types of all LF variables in scope, 
which are introduced either in the parameters of a code value or by the \verb+put_var+ instruction.
Furthermore, we introduce a notion of static unification $M_1 \sqcap M_2$ which allows us to import
knowledge learned from runtime unification into an LF proof.

To see the interaction between runtime and static unification concretely, 
consider the zero case of \verb+plus+, which (using the proof terms of Example~\ref{lftrans}) compiles to

%TODO: Typeset nice
\begin{exmp}[TWAM Compilation]
\label{ex:twam-comp}~\\
\begin{tabular}{l}
\verb|plus-zero/3|$~\mapsto~\code[\Pi~X,Y,Z:nat.
     \heap{r_1:\sing(X), r_2:\sing(Y), r_3:\sing(Z), r_4:\Pi \_:({\tt Plus~X~Y~Z}).~\neg\heap}]($\\
  \indentone{${\tt put\_tuple}~r_4, 4;$}\\
  \indenttwo{${\tt set\_val}~r_1;$}\\
  \indenttwo{${\tt set\_val}~r_2;$}\\
  \indenttwo{${\tt set\_val}~r_3;$}\\
  \indenttwo{${\tt set\_val}~{\tt ret};$}\\
\verb|# plus-succ/3, not shown, has the same three natural number|\\
\verb|# parameters X, Y, Z, so we pass them in when constructing|\\
\verb|# the failure continuation|\\
\indentone{${\tt push\_bt}~r_4, {\tt (\verb|plus-succ/3|~X~Y~Z)};$}\\
~\\
\indentone{${\tt get\_str}~r_1, {\tt zero/0};$}\\
\indentone{${\tt get\_var}~r_2, r_3;$}\\
\indentone{${\tt jmp~(ret~(\verb|Plus-Z|~Y))};$}\\
)
\end{tabular}
\end{exmp}
The syntax \verb+code+$[\Delta.\Gamma](I)$ denotes a code value with body $I$ which expects the register file to
have type $\Gamma$ and where $\Gamma$ may refer to the LF variables in $\Delta$. The line \verb+jmp (r4 (Plus-Z Y))+
is an example of proof passing in action. Here the success continuation \verb|r4| expects a proof of the relevant predicate:
in this case \verb|plus X Y Z|. The \verb+jmp+ instruction constructs a proof \verb+Plus-Z Y+ to satisfy this requirement.
The proof \verb+Plus-Z Y+ has type \verb+plus zero Y Y+, so this code only typechecks
if $X = zero$ and $Y = Z$, which is exactly what we learn when \verb|get_str| and \verb+get_var+ succeed, respectively.

\subsection{Instruction Statics and Dynamics}
\label{sec:istat-dyn}
In this section we detail the type system changes made to support LF terms and singleton types.
The instruction set and dynamic judgements are fundamentally identical to that of the SWAM, but both are augmented with additional annotations as needed by the addition of LF.
For example, because the {\tt put\_var} instruction introduces an LF variable $x$, we now write $\dputvar{r}{x:a.}~I$ instead of $\putvar{a}{r}; I$ to indicate that there is an LF variable $x:a$ in scope for the remaining instructions $I$, a feature we will use to write proof terms.
Complete dynamics for the dependent TWAM are given in the electronic appendix.
In the following sections we detail the judgements that differ significantly from the simply-typed system.
\begin{table}
\begin{tabular}{ccc}
\label{tab:judgements}
Judgement(s) & Meaning                & Defined In\\\hline
$\DGE_{\Sigma;\Xi} I \ok$   & Basic Block Well-Typed & \ref{sec:istat-dyn}\\
$\DGE {I :_s}_{\Sigma;\Xi}~J, {I :_t}_{\Sigma;\Xi}~J$ & Spine Well-Typed & \ref{sec:dep-spine-stat}, \ref{sec:dep-env-stat}\\ 
$\DE M_1\sqcap M_2 = \sigma, \bot$ & Static Unification & \ref{sec:dep-stat-unif}\\
$\DE M_1\in M_2, M_1\notin M_2$ & Static Occurs Check & \ref{sec:dep-stat-unif}\\
$\DGE op :\tau$ & Operand Well-Typed & \ref{sec:word-oper}\\
$\DE M : A$ & LF Term Well-Typed         & \cite{Harper93aframework}\\
$\DE A : K$ & LF Type Family Well-Kinded & \cite{Harper93aframework}\\ 
\hline
$\DGE v :\tau$ & Heap Value Well-Typed & \ref{sec:dep-hvtype}\\
$\DE  H :\Psi$ & Heap File Well-Typed & \ref{sec:simp-rep-inv}\\
$\DGE w :\tau$ & Word Value Well-Typed & \ref{sec:word-oper}\\
$\DPE R :\myG$ & Register File Well-Typed & \ref{par:rftype}\\
$\SE (\Delta;\mu) \colon{} H$ & Heap Mapping Unique&\ref{par:con-map}\\
$\DE  T \ok$ & Trail Well-Typed &\ref{par:trail-dep-inv}\\
$\DPE \vec \ell^H \reads J_s$ & Prolog Read Spine Invariant& \ref{par:mach-state-inv}\\
$\DPE (\vec \ell^H, \ell^H,c) \writes J_s$ & Prolog Write Spine Invariant& \ref{par:mach-state-inv}\\
$\DPE(n,r,\vec\ell^H)\writes J_t$ & Tuple Spine Invariant& \ref{par:mach-state-inv}\\
$\SE  m\ok$ & Machine Well-Typed &\ref{par:mach-state-inv}\\\hline
$R\ent op\eval w$ & Operand Evaluation& \ref{sec:word-oper}\\
$R\ent op\squig w$ & Operand Resolution& \ref{sec:word-oper}\\
$w\eval w'$ & Word Evaluation& \ref{sec:word-oper}\\
$w \path, w \canon$ & Word Canonical Forms &\ref{sec:word-oper}\\
$\edn(S,\ell^H)$ & Pointer Following & \ref{sec:trailing}\\
$\ell^H_1 \in_S \ell^H_2$ & Dynamic Occurs Check & \ref{sec:trailing} \\
$\unify(S,T,\ell^H_1,\ell^H_2)$ & Dynamic Unification & \ref{sec:trailing}\\
$\unifyargs(S,T,\vec \ell^H, \vec \ell'^H)$ & Dynamic Unification & \ref{sec:trailing}\\
$\uptrail(x@\ell^H:a,T)=T'$ & Trail Update & \ref{par:trail-dep-inv}\\
$\unwind(S,\Delta,t) = (\Delta,S)$ & Trail Unwinding & \ref{par:trail-dep-inv}\\
$\bt(S,T)=m,\bot$ & Backtracking & \ref{sec:cont-bt}\\
$m\step m',m\fails,m\done$ & Stepping & \ref{sec:simp-op}\\\hline
\end{tabular}
\caption{Index of Typing and Evaluation Judgements}
\end{table}
A listing is given in Table 1.
\subsubsection{LF Terms}
The typing rule for {\tt succeed} is as given in Section \ref{sec:dependent-wam}:
\[\infer[\textsc{Succeed}]{\DGE \succeed[M\colon{}A];I \ok}{\DE M : A}\]
Here $M$ is an LF term and $A$ is an LF type (we write $A$ for arbitrary LF type families and $a$ for types corresponding specifically to Prolog terms).
Thus we extend the syntax of TWAM with the syntax of LF (here $c$ stands for type family constants and term constants):
%\cite{Harper93aframework}
%.
%,necula1998design
\begin{center}
\begin{tabular}{ll}
  LF Kinds & $K ::= {\tt type}\ |\ \Pi x:A.K$\\
  LF Type Families &  $A ::= c\ |\ \Pi x:A.A\ |\ A\ M$\\
  LF Terms & $M ::= x\ |\ c\ |\ M\ M\ |\ \Pi x:A.M$\\
\end{tabular}
\end{center}
Note that the TWAM need not be instrumented with LF proof terms at runtime: LF proofs are merely given as type annotations as an aid to establishing the metatheorem of Section~\ref{sec:dep-met}.
LF terms make numerous appearances in TWAM.
For example, because \textsc{Putvar} introduces a free variable at runtime, it introduces an LF variable $x:a$ in the statics as well:
\[\infer[\textsc{Putvar}]{\DGE \dputvar{r}{x:a.}~I \ok}
     {\Delta,x\colon{}a;\hups{\Gamma}{r:\sing(x:a)}\ent I \ok}\]
\subsubsection{Words and Operands}
\label{sec:word-oper}
The typing rules \textsc{Jmp} and \textsc{Mov} appear as before:
\begin{center}
\begin{tabular}{cc}
\infer[\textsc{Jmp}]{\DGE_{\Sigma;\Xi}\jmp op, I\ok}{\Xi(\ell^C) = \neg\myG' & \SE \myG' \leq \myG}&
\infer[\textsc{Mov}]{\DGE \mov{r_d}{op}; I \ok}
  {\DGE op : \tau & \Delta;\thup{\Gamma}{r_d}{\tau} \ent I \ok}
\end{tabular}
\end{center}
However, both instructions rely on \textit{operands}.
In TWAM, we generalize operands (and word values) so they can accept LF terms as arguments:
\begin{center}
\begin{tabular}{cc}
  operands & $op ::= \ell^C\ |\ r\ |\ op\ M\ |\ \lambda x:A.~op$\\
  word values & $w ::=\ell^C\ |\ \ell^H\ |\ w\ M\ |\ \lambda x:A.~w$\\
\end{tabular}
\end{center}
With this change we also update the statics and (big-step) dynamics for operands and word values.
The main dynamic judgement is still $R \ent op \eval w$ (Operand Evaluation),
but we add auxilliary judgements $R\ent op\squig w$ (Operand Resolution) and $w\eval w'$ (Word Evaluation).
\begin{center}{\footnotesize
\begin{tabular}{cccc}
 \infer[\ell^C\eval]{\ell^C \eval \ell^C}{}
&\infer[\lambda\eval]{(\lambda x:A.~w) \eval (\lambda x:A.~w')}{w \eval w'}      
&\infer[\beta\eval]{op\ M \eval w''}{op \eval (\lambda x:A.~w') & [M/x]w' \eval w''}
&\infer[op\ M\eval]{op\ M \eval w\ M}{op \eval w & w \path}
\end{tabular}

\begin{tabular}{cccc}
\infer[op\ M\squig]{R \ent op\ M \squig w\ M}{R \ent op \squig w}&
\infer[\lambda\squig]{R \ent (\lambda x:A.~op) \squig (\lambda x:A.~w)}{R \ent op \squig w}&
\infer[r\squig]{R \ent r \squig w}{R(r) = w}&
\infer[op\eval]{R \ent op \eval w'}{R \ent op \squig w & w \eval w'}
\end{tabular}

\begin{tabular}{ccc}
\infer[op\ r]{\DGE r : \tau}{\myG(r) = \tau}
&\infer[op\ (w\ M)]{\DGE (w\ M) : [M/x] \tau}{\DGE w : \Pi x:A.~\tau & \DE M : A}
&\infer[op\ \lambda]{\DGE (\lambda x:A.~w) : (\Pi x:A.~\tau)}{\DE A : \type & \Delta,x:A;\myG\ent w : \tau}\\

\end{tabular}
\begin{tabular}{lll}
\infer[w\ \ell^H]{\DPE \ell^H : \tau}{\Psi(\ell^H) = \tau}
&\infer[w\ (w\ M)]{\DPE (w\ M) : [M/x] \tau}{\DPE w : \Pi x:A.~\tau & \DE M : A}
&\infer[w\ \lambda]{\DPE (\lambda x:A.~w) : (\Pi x:A.~\tau)}{\DE A : \type & \Delta,x:A;\Psi\ent w : \tau}
\end{tabular}}
\end{center}
In \textsc{Jmp}, the generalization of operands supports proof-passing.
In \textsc{Mov}, it supports tail-call optimization as used in Section~\ref{sec:tco}.
As in LF, we have a notion of canonical forms for words, written $w \canon$, with an auxilliary judgement $w \path$:
\begin{center}
\begin{tabular}{lllll}
&\infer[\ell\path]{\ell \path}{}
&\infer[w\ M\path]{w\ M \path}{w\path}
&\infer[w\canon]{w \canon}{w \path}
&\infer[\lambda\canon]{(\lambda x:A.~w) \canon}{w \canon}
\end{tabular}
\end{center}
To simplify the proofs, the typing invariants for machine states require canonicity.
However, because canonical forms always exist~\cite{Harper93aframework} and involve only static-level computation,
the choice of when to require canonical forms is irrelevant.
%Operands also enable \textsc{Close} to close over LF variables and are used in \textsc{Mov}
%to enable tail-call optimization.
\subsubsection{Continuations}
The rules \textsc{Close} and \textsc{BT} also use operands to track LF proof terms in closures,
but those operands are syntactically restricted to $\ell^C\ \vec{M}$ in order to avoid closures within closures,
which would needlessly complicate the dynamics.
Furthermore, we see in \textsc{Close} that the type of continuations has been generalized to $\Pi\vec{x}:\vec{A}.~\neg\myG'$: a continuation can take any number of LF terms, which may freely mix Prolog terms and proof terms. 
Here the terms $\vec{M}$ are a static component of the environment, stored in the closure, while the $\vec{x}$ are static arguments supplied by the caller:
\begin{center}
\begin{tabular}{ll}
 \infer[\textsc{Close}]{\DGE \close r_d, r_s, (\ell^C\ \vec M); I \ok}
  {\deduce{\DGE (\ell^C\ \vec M):\Pi \vec x:\vec A.~\neg
      \thup{\myG'}{\texttt{env}}{\tau}} {\Gamma(r_s)= \tau &
      \Delta;\thup{\myG}{r_d}{\Pi \vec x:\vec A.~\neg \myG'} \ent I \ok}}
  \infer[\textsc{BT}]{\DGE \branch r, (\ell^C\ \vec M);I \ok}{\DGE I \ok &
    \Gamma(r) = \tau & \DGE (\ell^C\ \vec M):
    \neg\heap{\texttt{env}:\tau}}
\end{tabular}
\end{center}
\subsubsection{Static Unification}
\label{sec:dep-stat-unif}
We arrive now at what is arguably the most novel and surprising technical result of the TWAM type system: Static unification as used in the TWAM type system is not only in harmony with Prolog-style runtime unification, but is strong enough to enable the type-checking of LF proofs.
Without our static unification mechanism, it would in general be impossible to show the proof-terms returned by a clause were well-typed (consider the {\tt jmp} in Example~\ref{ex:twam-comp}).

As before, {\tt get\_val} unifies two Prolog terms stored in registers $r_1, r_2$.
Thanks to the addition of singleton types, the type system now has access to LF terms $M_1, M_2:a$ describing the values of $r_1, r_2$.
The subtlety of static unification lies in the fact that because the exact values of $r_1$ and $r_2$ are unknown until runtime, the terms $M_1$ and $M_2$ cannot be the exact values of $r_1$ and $r_2$.
Rather, they will merely be some terms that \emph{unify} with the eventual values of $r_1$ and $r_2$.
What we find novel and surprising is that this partial knowledge represented by $M_1$ and $M_2$ 
is simultaneously strong enough to certify proof search, yet consistent with the actual behavior at runtime.

To typecheck {\tt get\_val}, we unify the terms $M_1, M_2$ at compile-time.
We write $\DE M_1\sqcap M_2 = \sigma$ to say they successfully unify with most-general unifier $\sigma$.
We apply the substitution $\sigma$ while typechecking the remaining instructions $I$.
The substitution notation $\sub{\sigma}\Delta$ indicates that $\sigma$ substitutes for an arbitrary set of variables from $\Delta$ (i.e. $\dom{\sigma} \subseteq\dom{\Delta}$) and that the replacees $\dom{\sigma}$ should be removed from $\Delta$ in the process: the need for this variant of substitution arises because the variables of a most-general unifier $\dom{\sigma}$ may appear at arbitrary positions throughout $\Delta$.

The rule \textsc{Getval-F} says it is also possible that we statically detect unification failure, written $\DE M_1\sqcap M_2 =\bot$, in which case the program is vacuously well-typed because unification will certainly fail at runtime, leading to backtracking.
In practice, this rule should not be necessary for useful programs, as it indicates the presence of dead code.
However, it is absolutely essential in the theory to ensure preservation in the presence of predicate calls.
{\small\begin{center}
\begin{tabular}{cc}
\infer[\textsc{Getval-S}]{\DGE \getval {r_1}{r_2}; I \ok} {\deduce{\DE M_1 \sqcap M_2
      = \sigma\hskip 0.1in \sub{\sigma}\Delta;[\sigma]\Gamma \ent
      [\sigma]I \ok} {\Gamma(r_1) = \sing(M_1:a) & \Gamma(r_2) =
      \sing(M_2:a)}}&
\infer[\textsc{Getval-F}]{\DGE \getval {r_1}{r_2}; I \ok} 
{\deduce{\Gamma(r_1) = \sing(M_1:a) \hskip 0.1in \Gamma(r_2) = \sing(M_2:a)}{\DE M_1 \sqcap M_2 = \bot}}
\end{tabular}
\end{center}}
All unification in T-Prolog and TWAM is first-order, thus the unification judgements $\DE M_1\sqcap M_2 = \sigma$ and $\DE M_1\sqcap M_2 = \bot$ 
correspond closely to standard algorithms in the literature \cite{Robinson:1965:MLB:321250.321253}.
As in dynamic unification, unification uses auxilliary occurs-check judgements $x\in M$ and $x\notin M$.
Substitutions in TWAM are capture-avoiding and simultaneous. For example, we write $[M_1/x_1,M_2/x_2]$ for a simulateous subtitution on $x_1$ and $x_2$ or $[\sigma_1,\sigma_2]$ for simultaneous composition of arbitrary substitutions $\sigma_1,\sigma_2$:
{\footnotesize\begin{center}
\begin{tabular}{ccccc}
\infer[\sqcap\cdot]{\DE x \sqcap x = \cdot}{}&
\infer[\sqcap{}x1]{\DE x \sqcap M = [M/x]}{x \not \in M}&
\infer[\sqcap{}x2]{\DE M \sqcap x = [M/x]}{x \not \in M}&
\infer[\notin{}x]{x \notin x'}{x \neq x'}&
\infer[\notin{}x\ \vec{M}]{x \notin c\ \vec M}{x \notin M_i (\forall i)}
\end{tabular}
\end{center}
\begin{center}
\begin{tabular}{cc}
\infer[\sqcap{}c]{\DE c\ M_1\ \ldots\ M_n \sqcap c\ M_1'\ \ldots\ M_n' = \sigma_n,\ldots,\sigma_1}
        {\deduce[\vdots]
          {[\sigma_{n-1},\ldots,\sigma_1]\Delta \ent [\sigma_{n-1},\ldots,\sigma_1]M_n \sqcap [\sigma_{n-1},\ldots,\sigma_1]M_n' = \sigma_n}
          {\DE M_1 \sqcap M_1' = \sigma_1}}&\infer[\in{}x]{x \in x}{}\\[0.15in]
\infer[\bot{}c1]{\DE c\ M_1\ \ldots\ M_n \sqcap c\ M_1'\ \ldots\ M_n' = \bot}
{\deduce[\vdots]{[\sigma_{i-1},\ldots,\sigma_1]\Delta \ent [\sigma_{i-1},\ldots,\sigma_1]M_i \sqcap [\sigma_{i-1},\ldots,\sigma_1]M_i' = \bot}{\DE M_1 \sqcap M_1' = \sigma_1}}&\infer[\in{}x\ \vec{M}]{x \in c\ \vec M}{x \in M_i (\exists i)}\\
\end{tabular}
\end{center}
\begin{center}
\begin{tabular}{ccc}
\infer[\bot{}c2]{\DE c\ M_1\ \ldots\ M_n \sqcap c'\ M_1'\ \ldots\ M_m' = \bot}{c \neq c'}&
\infer[\bot{}x1]{\DE x \sqcap M = \bot}{x \in M}&
\infer[\bot{}x2]{\DE M \sqcap x = \bot}{x \in M}
\end{tabular}
\end{center}}

%\begin{tabular}{c}
%\end{tabular}

\subsubsection{Spines}
\label{sec:dep-spine-stat}
Recall that a Prolog spine serves to unify some terms $M_1\sqcap M_2$, the distinction being that unlike in {\tt get\_val}, the outermost shape of $M_2$ is known statically.
As above, this unification must be made explicit in the type system.
As before, a spine type expresses a typing precondition on each unificand and a typing postcondition.
Previously the postcondition was trivial, but in our generalized \emph{dependent spine types}, the postcondition says that some unification problem $\DE M_1\sqcap M_2$ has succeeded.
We write dependent spine types as $\Pi
x_1:a_1. \cdots\Pi x_n:a_n.~(M_1 \sqcap M_2)$ to say that $M_1$ and
$M_2$ will be unified if the spine succeeds, where the $x_i$ stand for the unificand subterms associated
with each instruction of the spine.

In \textsc{Putstr}, we temporarily introduce a fresh LF variable $x$ for our new Prolog term, which is then unified with the concrete term resulting from the spine. 
In \textsc{Getstr}, the unificand is the existing term stored in $r$.
In \textsc{Unifyvar} we extend $\Delta$ with a fresh unification variable standing for the given argument (because this variable may be needed later in a proof term), 
while in \textsc{Unifyval} we do not extend $\Delta$ but rather supply an existing term as the spinal argument.
At the end of the spine, if the terms unify, then the rule $\sqcap\sigma$ applies the unifier $\sigma$ while typechecking $I$, 
else $\sqcap\bot$ says typechecking is vacuous because unification will fail at runtime.
As in {\tt get\_val}, if $\sqcap\bot$ applies at compile-time, it indicates the presence of dead code,
but it is of essential use in the preservation proof.
%As in the simply-typed setting, a spine type encodes the arguments to a unification problem and a postcondition
%on the state of the world. In the dependent setting, the postcondition of a spine states that the given
%unification has succeeded. We extend dependent spines with a binding structure that expresses the relationship
%between a unification's arguments and its postcondition. We write these spine types as 
{\small\begin{center}
\begin{tabular}{cc}
   \infer[\textsc{Putstr}]{\DGE \putstr {c} {r}; I \ok}
   {\deduce{\Delta,x:a;\thup{\Gamma}{r}{\sing(x:a)} \ent I:_s\Pi \vec
       x:\vec a.~(x \sqcap c\ \vec x)}{\Sigma(c) = \vec a \to a}}&
   \infer[\textsc{Getstr}]{\DGE \getstr {c} {r}; I \ok} {\deduce{\DGE I:_s\Pi \vec
       x:\vec a.~(M \sqcap c\ \vec x)}{\Sigma(c) = \vec a \to a &
       \Gamma(r) = \sing(M : a)}}\\[0.1in]
   \infer[\textsc{Unifyvar}]{\DGE \unifyvar r, x:a.~I:_s\Pi x:a.~J}
   {\Delta,x:a;\thup{\Gamma}{r}{\sing(x:a)} \ent I:_sJ}&
   \infer[\textsc{Unifyval}]{\DGE \unifyval r, x:a.~I:_s\Pi x:a.~J} 
     {\Gamma(r) = \sing(M:a) & \DGE [M/x] I:_s[M/x]J}\\[0.1in]
   \infer[\sqcap\sigma]{\DGE I:_s M_1 \sqcap M_2}
   {\DE M_1 \sqcap M_2 = \sigma & \sub{\sigma}\Delta;[\sigma]\Gamma
     \ent [\sigma]I \ok}&
       \infer[\sqcap\bot]{\DGE I:_s (M_1 \sqcap M_2)}
   {\DE M_1 \sqcap M_2 = \bot}
\end{tabular}
\end{center}}

\subsubsection{Environments}
\label{sec:dep-env-stat}
The typing rules for environment tuples are unchanged, since tuples are orthogonal to Prolog terms and LF in general:
{\begin{center}
\[\infer[\textsc{PutTuple}]{\DGE\puttuple r_d, n; I \ok}{\DGE I:_t (\vec\tau\to \{r_d:\cross{\vec\tau}\})& \text{(where $n = |\vec\tau|$)}}\]
\begin{tabular}{ccc}
 \infer[\textsc{Proj}]{\DGE\proj r_d, r_s, i; I \ok}{\myG(r_s) = \cross{\vec\tau} & \thup{\myG}{r_d}{\tau_i}\vdash I\ok & \text{(where $i \leq |\vec\tau|$)}}
&\infer[\textsc{SetVal}]{\DGE\setval r; I :_t (\tau\to J)}{\myG(r) = \tau & \DGE I:_t J}
\end{tabular}
\end{center}}
%\cite{Robinson:1965:MLB:321250.321253}. 

\subsection{Code and Heap Value Typing Invariants}
\label{sec:dep-hvtype}
In dependent TWAM, code values can accept LF terms as arguments, as reflected in \textsc{Code}.
Furthermore, because heap values can now have dependent types, the heap value typing judgement is now
$\DGE v^H:\tau$, where the added context $\Delta$ contains an LF variable for each free variable on the heap.
The rule \textsc{Close} is generalized to close over LF terms, while the rules $\free{},\bound{},$ and $c\langle\rangle$ are generalized to singleton types. 
As with register files, tuples and closures enforce that words are canonical for simplicity:
\begin{center}
\begin{tabular}{l}
\infer[\textsc{Close}]{\DPE \close(w_{env}, \ell^C\ \vec{M}) :\Pi\vec{x}:\vec{A}.~\neg \myG}
        {\DPE w_{env} : \tau &  \DPE\ell^C\ \vec{M}:\Pi\vec{x}:\vec{A}.~\neg\thup{\myG}{\texttt{env}}{\tau} & w_{env}\canon}\\
\infer[\langle\rangle]{\DPE\langle w_1,\ldots,w_n\rangle:\cross{\tau_1,\ldots,\tau_n}}
       {\DPE w_1 : \tau_1&w_1\canon&\cdots&\DPE w_n : \tau_n&w_n\canon}
\end{tabular}

 \begin{tabular}{cc}
    \infer[\free{}]{\DPE \free{}[x:a] : \sing(x:a)}{\Delta(x) = a}&
    \infer[\bound{}]{\DPE \bound{\ell^H} : \sing(M:a)}{\DPE \ell^H : \sing(M:a)}
  \end{tabular}

  \begin{tabular}{cc}
\infer[c\langle\rangle]{\DPE_{\Sigma;\Xi} c\args{\ell^H_1,\ldots,\ell^H_n} : \sing(c\ \vec M : a)}
    {\Sigma(c) = \vec a \to a & \DPE \ell^H_i : \sing(M_i :a_i) }&
\infer[\textsc{Code}]{\cdot \ent \code[\vec x:\vec A.~\myG](\lambda \vec x:\vec A.~I) : \Pi \vec x : \vec A.~\neg \myG}
        {(\vec x:\vec A);\GE I \ok}

  \end{tabular}
\end{center}

\subsection{Machine Typing Invariants}
The runtime behavior of a TWAM program does not depend on type information, i.e. TWAM is easily executed by first type-erasing it to SWAM and then executing the SWAM program.
However, just as TWAM adds typing and proof term annotations to instructions, our theoretical presentation
of the machine states is annotated with LF variables and proof terms, as well.
\paragraph{LF Contexts and Mappings}
\label{par:con-map}
When we prove soundness for TWAM (Theorem~\ref{dep:sound}) in Section~\ref{sec:dep-met}, we will show that for each successful execution trace, an LF proof term exists \emph{in some context} $\Delta$.
For convenience, we make that context an additional field of the machine state, but this is not strictly necessary because it contains one variable for each free variable in the heap $H$ and could thus be computed as a function of $H$.
For Theorem~\ref{dep:sound} to be meaningful, it is essential that $\Delta$ only contains Prolog terms and not arbitrary LF propositions.
Otherwise, if we wished to find a proof term for some query $A$, we could simply add $A$ to the context with {\tt put\_var} and obtain a trivial ``proof''.
Consider the following example (which assumes we have successfully defined the Riemann Hypothesis in Prolog):
\begin{verbatim}
put_var r1, x:Riemann_hypothesis.
succeed[x:Riemann_hypothesis]
\end{verbatim}
Luckily, we easily enforce that $\Delta$ contains only Prolog terms by adding a syntactic restriction in {\tt put\_var}.

The addition of LF variables affects the heap as well: free variables are now annotated as $\free{}[x:a]$ because they are in correspondence with LF variables $x$.
As a technical device to support our progress and preservation theorems, we maintain the invariant that this correspondence is unique with a mapping
$\mu$ between each variable and its unique location on the heap.
 \[\text{LF Mappings} \: \mu ::= \cdot\ |\ x@(\ell^H:a),\mu\]
The syntax $x@(\ell^H:a)$ says the LF variable $x$ has type $a$ and is located
at $\ell^H$.  The judgement $\SE (\Delta;\mu) \colon{} H$ says that $\mu$ correctly mediates $\Delta$ and $H$ (i.e. assigns a unique location in $H$ to each variable of $\Delta$):
\begin{center}
  \begin{tabular}{cc}
    \infer[\mu{}\textsc{-Nil}]{\cdot \ent (\Delta;\mu) : \heap{}}{\Delta=\cdot&\mu=\cdot}&
    \infer[\mu{}\textsc{-Skip}]{\cdot \ent (\Delta;\mu) : \hext{H}{\ell^H}{v}}
    {\cdot \ent (\Delta;\mu) : H & v \neq \free{}[x:a]}  \end{tabular}
\end{center}
\begin{center}
  \begin{tabular}{c}
\infer[\mu{}\textsc{-Cons}]{\cdot \ent (\Delta,x:a;\mu::(x@\ell^H:a)) : \hext{H}{\ell^H}{\free{} [x:a]}}
        {\cdot \ent (\Delta;\mu) : H}
  \end{tabular}
\end{center}
\paragraph{Trails}
\label{par:trail-dep-inv}
Trails are generalized in two straightforward ways.
First, failure continuations are now allowed to close over LF terms.
Second, trail typing annotations $\ell^H:a$ are now generalized to remember the corresponding LF variable name ($x@\ell^H:a$)
so that $\Delta$ can be updated accordingly in unwinding:
{\footnotesize\begin{center}
\begin{tabular}{cc}
\infer[\textsc{Trail-Cons}]{\Delta;S \ent(t,w,\ell^C\ \vec M)::{}T' \ok}
        {\deduce
          {\SE (\Delta';\mu') : H' \hskip 0.1in \Delta' \ent S' : (\Xi,\Psi')
            \hskip 0.1in \Delta';\Psi' \ent \ell^C\ \vec M :\neg\heap{\texttt{env}:\tau}}
          {\unwind(S,\Delta,t)=(\Delta';S') & \Delta';S' \ent T' \ok & \Delta';\Psi' \ent w : \tau}}&
\infer[\textsc{Trail-Nil}]{\Delta;S \ent \epsilon \ok}{}
\\[0.1in]
 $\unwind(S,\Delta,(x@\ell^H:a)::t) = \unwind(\hup{S}{\ell^H}{\free{}[x:a]},(\Delta,x:a),t)$ & $\unwind(S,\Delta,\epsilon) = (\Delta,S)$\\
$\uptrail(x@\ell^H:a,(t,w_{env},\ell^C)::T) = ((x@\ell^H:a)::t,w_{env},\ell^C)::T$&$\uptrail((x@\ell^H:a),\epsilon) = \epsilon$
\end{tabular}
\end{center}}
\paragraph{Register File Types}
\label{par:rftype}
Register file typing now requires that words are canonical, for the sake of simplicity:
\[\infer[\textsc{RF}]{\DPE \heap{r_1 \hook w_1, \ldots, r_n \hook w_n} : \heap{r_1 :\tau_1, \ldots, r_n : \tau_n}}
        {\DPE w_1 : \tau_1 & w_1 \canon & \cdots & \DPE w_n : \tau_n & w_n \canon}\]
\paragraph{Machine States}
\label{par:mach-state-inv}
The machine state typing invariants are updated to use the dependent forms of existing judgements in addition to the new invariant $\SE(\Delta,\mu):H$.
As before, spinal states each appeal to an auxilliary invariant.
\begin{center}{
\begin{tabular}{c}
\infer[\textsc{Mach}]{\cdot \ent (T,\Delta,(C,H),R,I) \ok}
{\deduce{\Delta \ent (C,H) : (\Xi;\Psi) \hskip 0.1in \DPE R : \Gamma \hskip 0.1in \DGE I \ok}
{\Delta;(C,H) \ent T \ok & \cdot \ent (\Delta,\mu) : H}}\\[0.1in]
\infer[\textsc{Mach-TWrite}]{\cdot \ent \twrite(T,\Delta,(C,H),R,I, \vec w,r,n) \ok}
        {\deduce{\Delta \ent (C,H) : (\Xi;\Psi) \hskip 0.1in  \DGE I:_t J \hskip 0.1in \DPE (\vec w, r, n) \writes J }
          {\Delta;(C,H) \ent T \ok & \cdot \ent (\Delta,\mu) : H&\DPE R : \Gamma }}\\[.25cm]
  \infer[\textsc{Mach-Read}]{\cdot \ent \mread(T,\Delta,(C,H),R,I, \vec \ell^H) \ok}
  {\deduce{\Delta \ent (C,H) : (\Xi;\Psi)
      \hskip 0.1in \DGE I:_s J \hskip 0.1in \DPE \vec \ell^H \reads J}
    {\Delta;(C,H) \ent T \ok & \cdot \ent
      (\Delta,\mu) : H & \DPE R : \Gamma}}\\[0.1in]
  \infer[\textsc{Mach-Write}]{\cdot \ent \mwrite(T,\Delta,(C,H),R,I,c,\ell^H, \vec \ell^H)
    \ok}
  {\deduce{\Delta \ent (C,H) : (\Xi;\Psi)
      \hskip 0.1in \DGE I:_s J \hskip 0.1in \DPE (\vec \ell^H, \ell^H,c)
      \writes J} {\Delta;(C,H) \ent T \ok & \cdot
      \ent (\Delta,\mu) : H & \DPE R : \Gamma}}
\end{tabular}}
\end{center}
% In read mode, each spinal instruction unifies
%%%one position in $M_1$ with the corresponding position in $M_2$, the postcondition being that $M_1$ and $M_2$ are equal. In the write-mode case, $M_1$ i a free variable
%which is bound to $M_2$ once all arguments are known.
However, the auxilliary invariants for Prolog spines have become more complex. 
The read spine invariant considers a term sequence $\vec{M}$ for the arguments already read and a second sequence $\vec{M'}$ for those remaining.
The invariant holds if (a) every $x_i$ can still unify with $M'_i$ and (b) every $\ell^H_i$ has the type expected by the spine type.
The write spine invariant requires that (a) the destination $\ell^H$ matches the result type of the constructor $c$, (b) the existing arguments $\vec{\ell^H}$ match the initial argument types, and (c) the remainder of the spine matches the remaining argument types.
\begin{center}{\footnotesize
\[\infer[\textsc{TWrites}]{\DPE (n,r,\vec\ell^H) \writes (\vec \tau_2 \to \{r:\cross{\vec\tau_1 \vec \tau_2})\}}
{\DPE \vec \ell^H : \vec \tau_1 & |\vec \tau_2| = n}\]
\begin{tabular}{cc}
  \infer[\textsc{Reads}]{\DPE \vec \ell^H \reads \Pi \vec x:\vec A.(c\ \vec M\ \vec M'
    \sqcap c\ \vec M\ \vec x)} {\deduce{\DPE \ell^H_i :
    \sing(M'_i:[M'_1,\ldots,M'_{i-1}/x_1,\ldots,x_{i-1}]A_i)}{\DE c\ \vec M\
    \vec x \sqcap c\ \vec M\ \vec M' = \sigma}}&
  \infer[\textsc{Writes}]{\DPE (\vec \ell^H, \ell^H,c) \writes \Pi \vec x : \vec
    a_2. x' \sqcap c\ \vec M\ \vec x}
  {\deduce{\Psi(\ell^H) = \sing(x' : a)}{\Sigma(c) = \vec a_1 \to \vec a_2 \to a & \DPE \vec \ell^H :
    \sing(\vec M : \vec a_1)} }\\[.25cm]
\end{tabular}}
\end{center}

\subsection{Metatheory}
\label{sec:dep-met}
In the dependent setting, we show our primary result that all TWAM 
programs are sound proof search procedures in the following sense:
\begin{thm}[Soundness]\label{dep:sound}
  If $\SE m \ok$ and $m \step^* m'$ and $m' \done$ then \\
  $m' = (T,\Delta,S,R,\succeed[M:A];I)$ and $\DE M : A$.
\end{thm}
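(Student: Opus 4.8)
The plan is to obtain soundness from a \emph{preservation} theorem for the dependent TWAM together with two easy inversions. Assuming preservation---if $\SE m \ok$ and $m \step m'$ then $\SE m' \ok$---a straightforward induction on the length of $m \step^* m'$ gives $\SE m' \ok$. Since $m' \done$, inversion on the (unique) rule for $\done$ forces $m'$ to be a \emph{normal} state of the form $(T,\Delta,S,R,\succeed[M\colon A];I)$ (in particular it cannot be a spinal $\mread$/$\mwrite$/$\twrite$ state). Inverting the machine typing rule \textsc{Mach} on $\SE m' \ok$ then yields $\DGE \succeed[M\colon A];I \ok$ for the register-file type $\Gamma$ with $\DPE R : \Gamma$, and inverting \textsc{Succeed} yields exactly $\DE M : A$, the desired conclusion. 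Hence the whole theorem reduces to preservation plus routine inversion.

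The substance is therefore the preservation proof, carried out by case analysis on the step $m \step m'$. For the instructions whose dependent annotations play no computational role---\textsc{Mov}, \textsc{Proj}, \textsc{PutTuple}/\textsc{SetVal}/\textsc{TWrite}, \textsc{Close}, \textsc{PushBT}, \textsc{PutVar}---I would reuse the simply-typed arguments essentially verbatim, relying on the dependent analogues of the SWAM preliminaries: canonical forms (Lemma~\ref{lem:scf}), weakening (Lemma~\ref{lem:sweak}), register-file subtyping (Lemma~\ref{lem:reg-sub}), the trail lemmas (Lemma~\ref{lem:trail-up}), backtracking totality (Lemma~\ref{lem:bt-tot}), and above all Heap Update (Lemma~\ref{lem:heap-update}), whose Heap Reordering subclaim re-exhibits the acyclicity witness after a variable binding. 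These all carry over with singleton types $\sing(M\colon a)$ in place of bare atomic types $a$, plus an added appeal to a TWAM-level substitution lemma (TWAM typing is stable under capture-avoiding LF substitution). The genuinely new cases are the unification-driven instructions---\textsc{GetVal}, the read/write spines entered by \textsc{GetStr}/\textsc{PutStr}, and \textsc{UnifyVal}/\textsc{UnifyVar}---together with \textsc{Jmp} into a continuation that expects proof arguments.

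The crux is a \emph{harmony lemma} linking dynamic to static unification: if $\unify(S,T,\ell^H_1,\ell^H_2) = (S',T')$ succeeds on heap locations typed by $\sing(M_1\colon a)$ and $\sing(M_2\colon a)$, then the static unification $\DE M_1 \sqcap M_2$ also succeeds with a most-general unifier $\sigma$, and for $\Delta' = \sub{\sigma}\Delta$, $\Psi' = [\sigma]\Psi$ (and an appropriately updated mapping $\mu'$) we still have $\SE (\Delta';\mu') : H'$, $\Delta' \vdash S' : (\Xi;\Psi')$, and $\Delta';\Psi' \vdash [\sigma]R : [\sigma]\Gamma$. Intuitively, binding the free variable underlying $x$ to the term encoded by the other unificand is precisely the effect of the substitution that rule \textsc{Getval-S} applies, so statics and dynamics stay in lockstep; when runtime unification fails, the machine backtracks and Lemma~\ref{lem:bt-tot} applies, while the vacuous typing rules \textsc{Getval-F} and $\sqcap\bot$ are never reached on a successful step (though they are still needed for preservation across predicate calls). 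Given the harmony lemma, \textsc{GetVal} follows by feeding the $[\sigma]I$ derivation from \textsc{Getval-S} into a reassembled \textsc{Mach} derivation; the spine cases are analogous, threading the accumulating partial unifier through the \textsc{Reads}/\textsc{Writes} invariants; and \textsc{Jmp} uses operand typing (the $op\,M$ and $\lambda$ rules) together with the accumulated $\sigma$ to discharge the $\Pi\vec{x}:\vec{A}$ premises of the target code value.

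I expect this harmony lemma to be the main obstacle: one must reconcile the \emph{heap}-level union--find representation---with $\bound{\ell^H}$ indirection nodes and arbitrary sharing---against the \emph{term}-level most-general unifier $\sigma$, while simultaneously preserving the acyclicity witness required by Lemma~\ref{lem:heap-update} and the uniqueness invariant $\SE (\Delta;\mu) : H$. The occurs-check theory is what makes the two sides agree: a runtime occurs-check success ($\ell^H_1 \notin_S \ell^H_2$) is exactly the side condition $x \notin M$ demanded by the static rules $\sqcap x1$/$\sqcap x2$, and a runtime occurs-check failure corresponds to $\bot x1$/$\bot x2$---which is why the ``dead-code'' rules \textsc{Getval-F} and $\sqcap\bot$ must exist even though no useful program triggers them.
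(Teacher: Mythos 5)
Your proposal matches the paper's proof: the paper derives soundness as an immediate corollary of progress and preservation by inversion on the \textsc{Succeed} typing rule, and your ``harmony lemma'' is exactly the paper's Soundness of \unify{} lemma (together with the Substitution lemma and the dead-code rules \textsc{Getval-F} and $\sqcap\bot$), which carry the weight of the preservation cases just as you describe. The only cosmetic difference is that you correctly observe progress is not strictly needed given the $m' \done$ hypothesis, whereas the paper cites both theorems.
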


This theorem is an immediate corollary of progress and preservation, by inversion on the typing rule for \verb|succeed|. 

Thus it suffices to show progress and preservation and their supporting lemmas.
We present here only the lemmas that are new or significantly different from the simply-typed versions.
%TODO: Electronic appendix? Write appendix
A detailed proof for the dependently-typed system is in the electronic appendix.

\subsubsection{Static Occurs Check}
\begin{lem}[Static Occurs Check Totality]
\label{lem:stat-occ-tot}
For all terms $M$ and all variables $x$, either $x \in M$ or $x \notin M$\end{lem}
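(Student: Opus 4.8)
The plan is to proceed by structural induction on the term $M$, mirroring the argument used for the dynamic occurs check in the totality half of Lemma~\ref{lem:occ-tot}. Since the static occurs-check judgements $x\in M$ and $x\notin M$ are defined only on (encodings of) first-order Prolog terms, the relevant shapes of $M$ are a variable $x'$ or a constructor applied to a spine of arguments $c\ M_1\ \cdots\ M_n$, so there are two cases.

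First I would handle the base case $M = x'$. Decidable equality on variables gives either $x = x'$ or $x \neq x'$; in the former case $x \in x'$ holds by rule $\in{}x$, and in the latter $x \notin x'$ holds by rule $\notin{}x$. Next, for the inductive case $M = c\ M_1\ \cdots\ M_n$, I would apply the induction hypothesis to each $M_i$, obtaining for every $i$ that $x \in M_i$ or $x \notin M_i$. If there is some $i$ with $x \in M_i$, then $x \in c\ \vec M$ by rule $\in{}x\ \vec M$. Otherwise, no such $i$ exists, and combining this with the per-subterm dichotomy from the induction hypothesis yields $x \notin M_i$ for \emph{every} $i$, whence $x \notin c\ \vec M$ by rule $\notin{}x\ \vec M$.

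The only delicate point — exactly as in Lemma~\ref{lem:occ-tot} — is the step that converts ``there is no $i$ with $x \in M_i$'' into ``for all $i$, $x \notin M_i$''. This inference is not available for free: it works precisely because the induction hypothesis already supplies, for each $i$, one of the two alternatives, so ruling out $x \in M_i$ forces $x \notin M_i$. I expect this to be the main (mild) obstacle; everything else is routine case analysis on the four inference rules. Note also that, unlike the dynamic occurs check, no heap or well-typedness hypotheses are needed here, since the judgements are purely syntactic on $M$.
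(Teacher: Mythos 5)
Your proposal is correct and matches the paper's proof, which is exactly a structural induction on $M$ with the variable case resolved by decidable equality of variables and the constructor case resolved by applying the induction hypothesis to each argument. The point you flag about turning the absence of any $x \in M_i$ into $x \notin M_i$ for all $i$ via the IH dichotomy is the right (and only) subtlety, so nothing is missing.
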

\begin{proof}
By induction on the structure of $M$.
\end{proof}

\subsubsection{Static Unification}
\begin{lem}[Static Unification Totality]
\label{lem:stat-unif-tot}
 For all LF terms $M_1, M_2,$ if
$\DE M_2 : A$ and $\DE M_1 : A$ then $\DE M_1 \sqcap M_2 = \sigma$ or
$\DE M_1 \sqcap M_2 = \bot$.\end{lem}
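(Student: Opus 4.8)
The plan is to prove the statement by a single well-founded induction whose measure is the lexicographically-ordered pair $(V,S)$, where $V$ is the number of distinct free variables of $\Delta$ occurring in $M_1$ or $M_2$, and $S$ is the combined structural size of $M_1$ and $M_2$. To make the constructor case go through cleanly I would actually establish a mild generalization simultaneously: that for any \emph{system} of equations $[(N_1,N_1'),\ldots,(N_k,N_k')]$ whose two sides are well-typed at matching types — exactly the shape of the premise stack of rule $\sqcap c$ — the sequential unification process either yields a composite substitution or yields $\bot$. The lemma as stated is the $k=1$ instance. Throughout I would use that the terms in question are first-order Prolog terms in canonical form (consistent with the paper's remark that all unification in T-Prolog and TWAM is first-order), so that by canonical forms for LF~\cite{Harper93aframework} each $N_i$ is either a variable or an applied constant $c\ \vec M$ whose arguments are again canonical of atomic type.

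First I would dispatch the base case (the empty system, solved by the empty substitution). For a nonempty system headed by $(N_1,N_1')$ I case-split on the shapes of $N_1$ and $N_1'$. If both sides are the same variable $x$, the equation is solved by $\cdot$ (rule $\sqcap\cdot$) and we recurse on the tail, whose measure has the same $V$ and strictly smaller $S$. If $N_1=x$ and $N_1'\neq x$, then by Static Occurs Check Totality (Lemma~\ref{lem:stat-occ-tot}) either $x\in N_1'$ — in which case $\bot x1$ makes the whole system fail — or $x\notin N_1'$, in which case $\sigma_1=[N_1'/x]$ solves the equation (rule $\sqcap x1$) and we recurse on $[\sigma_1]$ applied to the tail; the symmetric case $N_1'=x$ uses $\sqcap x2$ / $\bot x2$. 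If $N_1=c\ \vec M$ and $N_1'=c'\ \vec M'$ with $c\neq c'$, rule $\bot c2$ applies. If $N_1=c\ \vec M$ and $N_1'=c\ \vec M'$ share a head, well-typedness at a common type forces $|\vec M|=|\vec M'|=n$, and we replace the equation by the $n$ equations $(M_1,M_1'),\ldots,(M_n,M_n')$ prepended to the tail; this leaves $V$ unchanged ($\fv(c\ \vec M)=\bigcup_i \fv(M_i)$) and strictly decreases $S$ (the outer application nodes disappear), so the inductive hypothesis applies. Each step appeals only to the syntax-directed rules for $\sqcap$ and for $\in/\notin$, so the appropriate rule is always available.

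The only case that genuinely needs the variable-count component $V$ of the measure is the substitution step $N_1=x$, $x\notin N_1'$: there $[\sigma_1]$ is applied to \emph{every} remaining equation, so the variable $x$ (which occurs in the current system, as $N_1$) is eliminated from the residual system, while $\fv(N_1')$ is already contained in the original variable set and so contributes nothing new; hence $V$ strictly decreases even though $S$ may increase, since substitution can enlarge terms. This is precisely the classical termination argument for first-order unification, and it is the one real subtlety here: a naive induction on the structure or size of $M_1$ fails exactly because substitution in rule $\sqcap c$ is not size-decreasing. With that observation in place the argument is otherwise routine case analysis, and the lemma follows from the $k=1$ instance of the generalization.
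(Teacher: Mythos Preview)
Your proof is correct and follows essentially the same strategy as the paper's: a lexicographic induction whose first component counts variables and whose second component is a structural measure, with Lemma~\ref{lem:stat-occ-tot} handling the variable cases and a decomposition argument for the constructor case. The differences are cosmetic. The paper uses $(|\Delta|,\text{structure of }M_1)$ as the measure and handles the constructor case via an inline subclaim iterating over the argument positions; you use $(V,S)$ with $V$ the variables actually occurring in the terms and handle the constructor case by an explicit generalization to systems of equations. Your measure is the textbook one for first-order unification; the paper's works because the convention $\sub{\sigma}\Delta$ removes substituted variables from the context, so $|[\sigma]\Delta|$ strictly drops whenever $\sigma$ is nonidentity. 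Either packaging is fine, and the key observation---that a nontrivial substitution step may enlarge terms but always eliminates a variable---is identical in both.
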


\begin{proof}
By lexicographic induction on $|\Delta|$ and the structure of $M_1$.
The base cases hold by Lemma~\ref{lem:stat-occ-tot}.
The inductive case $M_1 = c\ \vec M$, $M_2 = c\ \vec M'$ (where $|M| = |M'|$) relies on a subclaim:
\begin{claim}For each $1 \leq i \leq |\vec M|,$ consider $\sigma = [\sigma_{i-1},\ldots,\sigma_1]$. Then $[\sigma]\Delta \ent [\sigma]M_i \sqcap M_i' = \sigma_i$ for some
$\sigma_i$ or $[\sigma]\Delta \ent [\sigma]M_i \sqcap [\sigma]M_i' = \bot$.\end{claim}
\begin{proof} By cases on $i$.

\case{$i=1$} By IH because  $M_1$ is structurally smaller than $c\ \vec M$.

\case{$i>1$} By IH: if $\sigma = \cdot$ then $M_i$ is structurally smaller than $c\ \vec M$, else $\sigma = [\vec M''/\vec x]$ where $x_i \in \dom{\Delta}$ and thus $|[\sigma]\Delta| < \Delta$.
\end{proof}
\end{proof}
\begin{lem}[Static Unification Correctness] If $\DE M : A$ and $\DE M' : A$ and $\DE M \sqcap M' = \sigma$, then
\label{lem:static-unif}
\begin{itemize}
\item $[\sigma]M = [\sigma]M'$
\item For all substitutions $\sigma',$ if $[\sigma']M = [\sigma']M'$ then there exists some $\sigma^*$ such that $\sigma' \equiv_{\alpha} \sigma^*,\sigma$.
\end{itemize}\end{lem}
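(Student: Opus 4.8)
The plan is to prove both bullets simultaneously by induction on the derivation of $\DE M \sqcap M' = \sigma$, with one case for each of the rules $\sqcap\cdot$, $\sqcap{}x1$, $\sqcap{}x2$, $\sqcap{}c$. The variable rules are essentially immediate. For $\sqcap\cdot$ we have $\sigma = \cdot$ and both claims are trivial (take $\sigma^* = \sigma'$). For $\sqcap{}x1$ we have $M = x$, $\sigma = [M'/x]$, and the side condition $x \notin M'$ from the static occurs check; soundness is then $[M'/x]x = M' = [M'/x]M'$, where the second equality uses precisely $x \notin M'$. For most-generality in this case, given $\sigma'$ with $[\sigma']x = [\sigma']M'$, the witness $\sigma^* = \sigma'$ works: composing, $\sigma',[M'/x]$ sends $x$ to $[\sigma']M' = [\sigma']x = \sigma'(x)$ and every other variable $y$ to $\sigma'(y)$, so $\sigma' = \sigma',[M'/x]$ already on the nose. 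The rule $\sqcap{}x2$ is symmetric.

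The real work is the $\sqcap{}c$ case. Write the produced unifier as $\sigma = \sigma_n,\ldots,\sigma_1$, let $\rho_i$ denote the accumulated prefix $\sigma_i,\ldots,\sigma_1$ (so $\rho_0 = \cdot$ and $\rho_n = \sigma$), so that the $i$-th premise reads $\sub{\rho_{i-1}}\Delta \ent [\rho_{i-1}]M_i \sqcap [\rho_{i-1}]M_i' = \sigma_i$. To invoke the induction hypothesis on this premise I must discharge its well-typedness side conditions, namely that $[\rho_{i-1}]M_i$ and $[\rho_{i-1}]M_i'$ have a common LF type in $\sub{\rho_{i-1}}\Delta$; this follows from a routine substitution lemma for the LF typing judgement $\DE M : A$ together with the typing of the constant $c$, and I will simply cite it. For soundness, the IH on premise $i$ gives $[\rho_i]M_i = [\rho_i]M_i'$; applying the remaining unifiers $\sigma_{i+1},\ldots,\sigma_n$ to both sides and using the composition law $[\tau]([\rho]N) = [\tau,\rho]N$ yields $[\sigma]M_i = [\sigma]M_i'$ for every $i$, hence $[\sigma](c\ \vec M) = [\sigma](c\ \vec M')$.

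For most-generality in the $\sqcap{}c$ case, suppose $\sigma'$ unifies $c\ \vec M$ and $c\ \vec M'$, i.e.\ $[\sigma']M_i = [\sigma']M_i'$ for all $i$. I peel off the arguments left to right, maintaining the invariant that after stage $i$ there is a substitution $\tau_i$ with $\sigma' \equiv_{\alpha} \tau_i,\rho_i$, starting from $\tau_0 = \sigma'$. At stage $i$, the invariant gives $[\tau_{i-1}]([\rho_{i-1}]M_i) = [\tau_{i-1},\rho_{i-1}]M_i = [\sigma']M_i = [\sigma']M_i' = [\tau_{i-1}]([\rho_{i-1}]M_i')$, so $\tau_{i-1}$ unifies $[\rho_{i-1}]M_i$ and $[\rho_{i-1}]M_i'$; the most-generality part of the IH on the $i$-th premise then produces $\tau_i$ with $\tau_{i-1} \equiv_{\alpha} \tau_i,\sigma_i$, whence $\sigma' \equiv_{\alpha} \tau_i,\sigma_i,\rho_{i-1} = \tau_i,\rho_i$, restoring the invariant. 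After $n$ stages we obtain $\sigma' \equiv_{\alpha} \tau_n,\rho_n = \tau_n,\sigma$, so $\sigma^* = \tau_n$ is the desired witness. This peeling is a finite induction on the argument index, in the style of the subclaim of Lemma~\ref{lem:stat-unif-tot}.

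I expect the main obstacle to be exactly this $\sqcap{}c$ bookkeeping: carrying the accumulated substitutions $\rho_i$ correctly, verifying at each stage that the carried-over unifier $\tau_{i-1}$ still unifies the freshly substituted argument pair (where $[\tau,\rho]N = [\tau]([\rho]N)$ is invoked repeatedly), and arranging the telescoping chain of $\alpha$-equivalences. A secondary point is discharging the LF typing premises of the IH inside the $\sqcap{}c$ case via a standard substitution lemma for $\DE M : A$, which is a basic LF fact not spelled out in the excerpt. The $\alpha$-equivalence in the conclusion is the usual artifact of capture-avoiding, simultaneous substitution and causes no genuine difficulty, and the occurs-check conditions in $\sqcap{}x1$/$\sqcap{}x2$ are precisely what make the variable cases of soundness go through.
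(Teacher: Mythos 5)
Your proof is correct: it is the standard Robinson-style argument for soundness and most-generality of the computed unifier, with the usual accumulated-substitution bookkeeping in the constructor case, and the variable cases discharged via the occurs check. The paper gives no proof of this lemma at all---it simply says ``analogous to standard results from the literature''---so your argument is exactly the standard result it defers to, spelled out in full; the only point worth noting is that the LF-typing side conditions you cite for the inductive premises should be discharged by basic LF substitution (as in Harper et al.) rather than by the paper's Lemma~\ref{lem:subst}, whose proof itself depends on this lemma.
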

\begin{proof} Analogous to standard results from the literature. \end{proof}

As a technical device for Lemma~\ref{lem:doom} we introduce a judgement $M_1\sqsubset M_2$ meaning ``$M_1$ is a strict substructure of $M_2$'':
\begin{center}
\begin{tabular}{cc}
  \infer[\sqsubset\textsc{-Base}]{M_i \sqsubset c\ \vec M}{} &
  \infer[\sqsubset\textsc{-Ind}]{M \sqsubset c\ \vec M}{M \sqsubset M_i}
\end{tabular}
\end{center}
The following lemmas support the proof of Lemma~\ref{lem:doom}.
\begin{lem}[Occurs to Substructure]\label{lem:occ-substr}If $x \in M$ and $x \neq M$ then $x \sqsubset M$\end{lem}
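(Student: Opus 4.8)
The plan is to proceed by induction on the derivation of $x \in M$ (equivalently, by structural induction on $M$). Only two rules can conclude $x \in M$: the axiom $(\in x)$, which forces $M$ to be the variable $x$ itself, and the rule $(\in x\ \vec M)$, which forces $M$ to have the form $c\ \vec M$ with $x \in M_i$ for some index $i$. So there are really just these two cases to consider.

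First I would dispatch the axiom case: if $x \in M$ was derived by $(\in x)$ then $M = x$, which directly contradicts the hypothesis $x \neq M$, so this case is vacuous. It is worth pausing to note that this is precisely why the side condition $x \neq M$ appears in the statement — the base rule of the occurs judgement is exactly the excluded configuration, and everything else is built from the structural rule. Thus the only real work is in the remaining case, where $M = c\ \vec M$ and the subderivation establishes $x \in M_i$ for some $i$. Here I would split on whether the subterm $M_i$ is syntactically the variable $x$. If $M_i = x$, then rule $\sqsubset\textsc{-Base}$ gives $x = M_i \sqsubset c\ \vec M$, which is the goal. If $M_i \neq x$, then the induction hypothesis applies to the subderivation $x \in M_i$ together with $M_i \neq x$, yielding $x \sqsubset M_i$; one application of $\sqsubset\textsc{-Ind}$ then lifts this to $x \sqsubset c\ \vec M$, again the goal.

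I do not expect any genuine obstacle: the statement is a one-line structural induction. The only minor point is the decidable syntactic test $M_i = x$ used in the inductive case, which is unproblematic since $x$ is a variable and equality on LF terms is decidable (indeed it is the same comparison that already underlies the static occurs check and unification). Accordingly the write-up will be short, essentially the case analysis above.
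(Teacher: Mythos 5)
Your proof is correct and matches the paper's own argument, which is exactly an induction on the derivation of $x \in M$ (the paper leaves the two cases implicit, which you spell out: the axiom case is vacuous by $x \neq M$, and the $\in{}x\ \vec{M}$ case splits on whether $M_i = x$, using $\sqsubset\textsc{-Base}$ or the IH plus $\sqsubset\textsc{-Ind}$).
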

\begin{proof}
By induction on the derivation $x \in M$.
\end{proof}

\begin{lem}[Substructure to Occurs]\label{lem:substr-occ}If $x \sqsubset M$ then $x \in M$. \end{lem}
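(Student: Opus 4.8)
The plan is to argue by a straightforward induction on the derivation of $x \sqsubset M$, exploiting the fact that the two rules of the substructure judgement line up one-for-one with the two rules of the static occurs check. Both $\sqsubset\textsc{-Base}$ and $\sqsubset\textsc{-Ind}$ force $M$ to have the form $c\ \vec M$, so in every case the concluding step will be an application of the occurs-check rule $\in x\ \vec M$, which derives $x \in c\ \vec M$ from $x \in M_i$ for some index $i$.

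First I would dispatch the base case, where the derivation ends in $\sqsubset\textsc{-Base}$: then $M = c\ \vec M$ and $x$ is literally one of the arguments, say $x = M_i$. The occurs axiom $\in x$ gives $x \in M_i$ (that is, $x \in x$), and rule $\in x\ \vec M$ immediately lifts this to $x \in c\ \vec M = M$. Next I would handle the inductive case, where the derivation ends in $\sqsubset\textsc{-Ind}$: then $M = c\ \vec M$ with premise $x \sqsubset M_i$ for some $i$. Applying the induction hypothesis to that (strictly smaller) derivation yields $x \in M_i$, and once more rule $\in x\ \vec M$ concludes $x \in c\ \vec M = M$.

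I do not expect any real obstacle here: the only point requiring care is observing that the case analysis on the last rule of $x \sqsubset M$ is exhaustive, because a variable is never itself of the form $c\ \vec M$, so $\sqsubset\textsc{-Base}$ and $\sqsubset\textsc{-Ind}$ are indeed the only possibilities. Together with Lemma~\ref{lem:occ-substr} (Occurs to Substructure), this lemma lets us pass freely between the occurs relation and structural containment, which is precisely what is needed later in the proof of Lemma~\ref{lem:doom}.
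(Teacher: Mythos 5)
Your proof is correct and takes the same route as the paper, which simply argues by induction on the derivation of $x \sqsubset M$; your spelled-out base and inductive cases (using the rules $\in x$ and $\in x\ \vec{M}$) are exactly the details that induction unfolds to.
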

\begin{proof}
By induction on the derivation $x \sqsubset M$.
\end{proof}

\begin{lem}[Substitution Preserves Substructure]\label{lem:subst-substr}If $M_1 \sqsubset M_2$ then $[M/x]M_1 \sqsubset [M/x]M_2$ \end{lem}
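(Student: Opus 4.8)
The plan is to proceed by a routine induction on the derivation of $M_1 \sqsubset M_2$. The entire content of the lemma is the observation that the substructure relation only inspects the application-spine shape of a term — both of its rules, $\sqsubset\textsc{-Base}$ and $\sqsubset\textsc{-Ind}$, have a conclusion of the form $\cdot \sqsubset c\ \vec M$ — together with the fact that substitution commutes with that shape, i.e. $[M/x](c\ M_1\ \cdots\ M_n) = c\ ([M/x]M_1)\ \cdots\ ([M/x]M_n)$, which I will abbreviate $[M/x](c\ \vec M) = c\ ([M/x]\vec M)$. Since neither $\sqsubset$-rule ever descends under a $\Pi$-binder, no capture-avoidance subtlety is involved.

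First I would dispatch the base case, where the last rule applied is $\sqsubset\textsc{-Base}$: then $M_2 = c\ \vec M$ and $M_1 = M_i$ for some $i$. Applying the substitution to $M_2$ gives $[M/x]M_2 = c\ ([M/x]\vec M)$, whose $i$-th argument is precisely $[M/x]M_i = [M/x]M_1$, so a single application of $\sqsubset\textsc{-Base}$ yields $[M/x]M_1 \sqsubset [M/x]M_2$. In the inductive case the last rule is $\sqsubset\textsc{-Ind}$: then $M_2 = c\ \vec M$ and the premise is $M_1 \sqsubset M_i$ for some $i$. By the induction hypothesis $[M/x]M_1 \sqsubset [M/x]M_i$, and since $[M/x]M_2 = c\ ([M/x]\vec M)$ has $[M/x]M_i$ among its arguments, the rule $\sqsubset\textsc{-Ind}$ lets us conclude $[M/x]M_1 \sqsubset [M/x]M_2$, completing the induction.

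I do not expect any genuine obstacle here; the only thing worth stating explicitly is the commutation of substitution with application spines noted above, which is immediate from the definition of (simultaneous, capture-avoiding) substitution on LF terms. This lemma serves purely as plumbing — alongside Lemmas~\ref{lem:occ-substr} and~\ref{lem:substr-occ} — for the proof of Lemma~\ref{lem:doom}, so a terse argument of this form is all that is required.
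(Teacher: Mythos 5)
Your proof is correct and follows exactly the route the paper takes — induction on the derivation of $M_1 \sqsubset M_2$, using the fact that substitution commutes with the application spine; the paper simply states the induction without spelling out the two cases, which you have done correctly.
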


\begin{proof}
By induction on the derivation $M_1 \sqsubset M_2$.
\end{proof}

\begin{lem}[Transitivity of Substructure]
\label{lem:substr-trans}
If $M_1 \sqsubset M_2$ and $M_2 \sqsubset M_3$ then $M_1 \sqsubset M_3$.
\end{lem}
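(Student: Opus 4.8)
\textbf{Proof plan for Lemma~\ref{lem:substr-trans} (Transitivity of Substructure).}
The plan is to proceed by induction on the derivation of the second hypothesis $M_2 \sqsubset M_3$, keeping $M_1 \sqsubset M_2$ fixed as a side assumption. The relation $\sqsubset$ has only the two rules $\sqsubset\textsc{-Base}$ and $\sqsubset\textsc{-Ind}$, and in either case the conclusion forces $M_3$ to be an application $c\ \vec M$, so the case analysis is short.

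First I would handle the base case, where $M_2 \sqsubset M_3$ is derived by $\sqsubset\textsc{-Base}$: then $M_3 = c\ \vec M$ and $M_2 = M_i$ for some index $i$. Since we are given $M_1 \sqsubset M_2 = M_i$, a single application of $\sqsubset\textsc{-Ind}$ yields $M_1 \sqsubset c\ \vec M = M_3$, as desired. Next I would handle the inductive case, where $M_2 \sqsubset M_3$ is derived by $\sqsubset\textsc{-Ind}$: then $M_3 = c\ \vec M$ and the subderivation gives $M_2 \sqsubset M_i$ for some $i$. Applying the induction hypothesis to $M_1 \sqsubset M_2$ together with $M_2 \sqsubset M_i$ gives $M_1 \sqsubset M_i$, and one more application of $\sqsubset\textsc{-Ind}$ gives $M_1 \sqsubset c\ \vec M = M_3$.

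This exhausts the cases, so the lemma follows. There is no real obstacle here; the only point worth a moment's care is choosing to induct on the \emph{second} derivation ($M_2 \sqsubset M_3$) rather than the first, since $M_3$ is the term being decomposed in both rules, which keeps both cases uniform and lets the inductive case appeal to the IH on a strictly smaller derivation of the same shape.
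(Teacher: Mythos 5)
Your proof is correct and follows exactly the paper's approach: induction on the derivation of $M_2 \sqsubset M_3$, with the base case closed by one application of $\sqsubset\textsc{-Ind}$ and the inductive case by the IH followed by $\sqsubset\textsc{-Ind}$. The paper states only the induction measure, and your case analysis fills it in correctly.
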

\begin{proof}
By induction on the derivation $M_2 \sqsubset M_3$.
\end{proof}

\begin{lem}[Substructures Don't Unify]\label{lem:subst-no-unif} If $\DE M : A, \DE M' : A$ and $M \sqsubset M'$ then $\DE M \sqcap M' = \bot$.
\end{lem}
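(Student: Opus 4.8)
The plan is to prove this by induction on the lexicographic measure $(|\Delta|,\ \mathit{size}(M'))$, where $\mathit{size}(M')$ counts the variable and constant nodes of $M'$, doing cases on the last rule of $M \sqsubset M'$ together with the shape of $M$. Since $M\sqsubset M'$, the term $M'$ is necessarily an application $c\ M_1\cdots M_n$ with $n\ge 1$; and since unification in T-Prolog/TWAM is first-order, $M$ is either a variable $x$ or an application $c'\ N_1\cdots N_k$. The easy cases come first: if $M=x$, then $x\sqsubset M'$ and Lemma~\ref{lem:substr-occ} give $x\in M'$, so $\bot x1$ yields $\DE x\sqcap M'=\bot$; and if $M=c'\ \vec N$ with $c'\neq c$, then $\bot c2$ applies directly.

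This leaves the case $M=c\ N_1\cdots N_m$ where $M$ and $M'$ share the head $c$. Since both are well-typed at the same type $A$ and headed by $c$, they are fully applied with the same arity $m=n\ge 1$. From the last rule of $M\sqsubset M'$ I extract an index $i\in\{1,\dots,m\}$ with $N_i\sqsubset M_i$: in the $\sqsubset$-\textsc{Base} case this is immediate, since $M=M_i$ and $N_i\sqsubset M$; in the $\sqsubset$-\textsc{Ind} case it follows from $N_i\sqsubset M$, $M\sqsubset M_i$, and Lemma~\ref{lem:substr-trans}. Now I attempt to unify the argument pairs $N_1\sqcap M_1,\dots,N_{i-1}\sqcap M_{i-1}$ left to right, invoking Lemma~\ref{lem:stat-unif-tot} (with well-typedness of the arguments threaded through Lemma~\ref{lem:static-unif} and standard LF substitution properties) at each step. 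If some pair $j<i$ is the first to fail, then $\bot c1$ at position $j$ already gives $\DE M\sqcap M'=\bot$. Otherwise positions $1,\dots,i-1$ all succeed, producing an accumulated unifier $\sigma=[\sigma_{i-1},\dots,\sigma_1]$; iterating Lemma~\ref{lem:subst-substr} on $N_i\sqsubset M_i$ gives $[\sigma]N_i\sqsubset[\sigma]M_i$, and the measure strictly decreases — if $\sigma$ is the identity then $\mathit{size}([\sigma]M_i)=\mathit{size}(M_i)<\mathit{size}(M')$ because $M_i$ is a proper subterm of $M'$, and otherwise $\sigma$ substitutes at least one variable of $\Delta$, so $|\sub{\sigma}\Delta|<|\Delta|$. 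The induction hypothesis then gives $\sub{\sigma}\Delta \ent [\sigma]N_i\sqcap[\sigma]M_i=\bot$, and $\bot c1$ at position $i$ (premises: positions $1,\dots,i-1$ succeed via $\sigma_1,\dots,\sigma_{i-1}$; position $i$ yields $\bot$) concludes $\DE M\sqcap M'=\bot$.

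Note that this argument never needs a uniqueness/functionhood property of static unification: in the branch where the first $i-1$ positions all succeed I \emph{prove} that position $i$ fails (via the induction hypothesis) rather than deriving a contradiction, so only totality (Lemma~\ref{lem:stat-unif-tot}) is used.

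The main obstacle I anticipate is bookkeeping rather than conceptual difficulty. First, justifying the recursive appeal requires showing that after applying the unifier $\sigma$ from the first $i-1$ argument positions, $[\sigma]N_i$ and $[\sigma]M_i$ are still well-typed at a common LF type in $\sub{\sigma}\Delta$; this means carefully combining LF substitution lemmas with Lemma~\ref{lem:static-unif} (which guarantees $[\sigma]N_j=[\sigma]M_j$ for $j<i$, so the dependent types of the $i$-th arguments coincide after $\sigma$). Second, one must check the lexicographic measure genuinely decreases in the nonempty-unifier branch, which hinges on the conventions that $\dom{\sigma}\subseteq\dom{\Delta}$ and that $\sub{\sigma}\Delta$ deletes exactly $\dom{\sigma}$ from $\Delta$. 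Neither point is deep, but both must be stated precisely; the remainder is a routine unfolding of the $\sqcap$, $\bot c1$, $\bot c2$, $\bot x1$ and $\sqsubset$ rules.
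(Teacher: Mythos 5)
Your proposal is correct and follows essentially the same route as the paper's proof: the same case split on the shape of $M$ and the last rule of $M \sqsubset M'$, the same extraction of an argument index $i$ with the $i$-th argument of $M$ a strict substructure of the $i$-th argument of $M'$ (via Lemmas~\ref{lem:substr-occ} and \ref{lem:substr-trans}), and the same ``either some earlier position already fails, or apply Lemma~\ref{lem:subst-substr} and the inductive hypothesis under the accumulated unifier at position $i$'' argument concluded by rule $\bot c1$. The differences are cosmetic: the paper's lexicographic measure uses $|\Delta|$ and the structure of $M$ rather than the size of $M'$, and it leaves implicit the totality and typing bookkeeping you spell out.
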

\begin{proof} By lexicographic induction on $|\Delta|$ and the structure of $M$.

Consider the cases for $M \sqsubset M'$.
The case $x \sqsubset c\ \vec M$ holds by Lemma~\ref{lem:substr-occ}.

\case{\infer[\sqsubset\textsc{-Base}]{c\ \vec M \sqsubset c'\ \vec M'}{\text{(because $c\ \vec M = M'_i$)}}}
If $c' \neq c,$ then unification fails immediately, so assume $c \neq c'$.
Observe $M_i \sqsubset c\ \vec M$ by rule $\sqsubset$\textsc{-Base}. Since $M'_i = c\ \vec M$, we have $M_i \sqsubset M_i'$.
By Lemma~\ref{lem:subst-substr}, for any substitution $\sigma,$ we have $[\sigma]M \sqsubset [\sigma]M'$. Note that when we unify $c\ \vec M$ and
$c\ \vec M'$ we either fail before $M_i$ or attempt to compute $[\sigma]M \sqcap [\sigma]M'$ for some $\sigma$. If we failed already,
 the case is done. If we succeeded, then by the IH $[\sigma]\DE [\sigma]M \sqcap [\sigma]M' = \bot$ and we fail here.

\case{\infer[\sqsubset\textsc{-Ind}]{c\ \vec M \sqsubset c'\ \vec M'}{c\ \vec M \sqsubset M'_i}}As in the previous case, AWLOG $c = c'$. Now since $c\ \vec M \sqsubset c\ \vec M'$ then by Lemma~\ref{lem:substr-trans}, $M_i \sqsubset M'_i$ for the $i$ such that $c\ \vec M \sqsubset M_i$. The rest of the case is analogous to the last one.
\end{proof}

\begin{lem}[Unification Lemma of Doom]
\label{lem:doom}
Unifications that fail are doomed to fail forever. That is, if $\Delta,x:A \ent M_1 \sqcap M_2 = \bot$ and $\DE M : A$ then $[M/x]\DE [M/x]M_1 \sqcap [M/x]M_2 = \bot$.
\end{lem}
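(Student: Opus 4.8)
The plan is to argue by lexicographic induction on $|\Delta|$ and the structure of $M_1$ (equivalently, on the derivation of $\Delta,x\colon A \ent M_1 \sqcap M_2 = \bot$), with a case analysis on the last rule applied; write $\theta$ for the substitution $[M/x]$. Two observations are used throughout: $\theta$ commutes with constructor application and fixes head constants, and by Lemma~\ref{lem:subst-substr} it preserves the substructure relation $\sqsubset$. The rule $\bot c2$ is then immediate, since if $M_1 = c\ \vec M$ and $M_2 = c'\ \vec M'$ with $c \neq c'$, then $\theta M_1$ and $\theta M_2$ still carry the distinct heads $c$ and $c'$, so $\bot c2$ re-applies.

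The two occurs-check rules are where the substructure machinery pays off. In rule $\bot x1$ we have $M_1 = z$ a variable with $z \in M_2$; since the case $z = M_2$ is resolved by $\sqcap\cdot$ (unification of a variable with itself succeeds), we may assume $z \sqsubset M_2$ by Lemma~\ref{lem:occ-substr}. If $z \neq x$, then $\theta z = z \sqsubset \theta M_2$ by Lemma~\ref{lem:subst-substr}, so $z \in \theta M_2$ by Lemma~\ref{lem:substr-occ}, and $\bot x1$ re-applies. If $z = x$, then $\theta M_1 = M$ and Lemma~\ref{lem:subst-substr} gives $M = \theta x \sqsubset \theta M_2$; since $M$ and $\theta M_2$ are well-typed (by LF substitution, using $\DE M : A$), Lemma~\ref{lem:subst-no-unif} yields $\theta\Delta \ent \theta M_1 \sqcap \theta M_2 = \bot$ outright. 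The rule $\bot x2$ is symmetric.

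The main obstacle will be rule $\bot c1$, where $M_1 = c\ \vec N$, $M_2 = c\ \vec N'$ of common length, and for some index $i$ the first $i-1$ argument pairs unify with composed most-general unifier $\rho$, after which $\sub{\rho}(\Delta,x\colon A) \ent [\rho]N_i \sqcap [\rho]N_i' = \bot$. The plan is to replay the substituted problem $c\ \theta\vec N \sqcap c\ \theta\vec N'$ one argument at a time, invoking Static Unification Totality (Lemma~\ref{lem:stat-unif-tot}) at each step: if any of the first $i-1$ substituted pairs fails, we are done by $\bot c1$; otherwise the substituted prefix succeeds with some composed unifier, and the standard fact that first-order unification commutes with instantiation up to most-generality — which I would extract from Lemma~\ref{lem:static-unif} — lets me match it against $\rho$ and apply the induction hypothesis to the failing pair $[\rho]N_i \sqcap [\rho]N_i' = \bot$. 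That appeal respects the measure: either $\rho$ instantiates some variable and $|\sub{\rho}(\Delta,x\colon A)| < |\Delta,x\colon A|$, or $\rho$ is empty, $[\rho]N_i = N_i$, and $N_i \sqsubset c\ \vec N = M_1$ by $\sqsubset\textsc{-Base}$. Failure of the substituted $i$-th pair then concludes via $\bot c1$. The delicate bookkeeping is keeping the prefix unifier produced under $\theta$ coherent with $\rho$ even when $\rho$ itself substitutes for $x$.

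A shorter route to the $\bot c1$ case — indeed to the whole lemma — is semantic: granting the usual soundness and completeness of first-order unification (that the $\bot$ judgement holds precisely when $M_1,M_2$ have no unifier, a mild strengthening of Lemma~\ref{lem:static-unif}), if $\theta M_1$ and $\theta M_2$ had a unifier $\varrho$ then $\varrho \circ \theta$ would unify $M_1$ and $M_2$, contradicting the hypothesis; hence $\theta M_1, \theta M_2$ are non-unifiable and Lemma~\ref{lem:stat-unif-tot} forces $[M/x]\Delta \ent \theta M_1 \sqcap \theta M_2 = \bot$. I would present the inductive proof for self-containedness but note this alternative.
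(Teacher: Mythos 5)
Your proof is correct and follows essentially the same route as the paper's: the same lexicographic induction, the same use of Lemmas~\ref{lem:occ-substr}, \ref{lem:subst-substr}, and \ref{lem:subst-no-unif} for the occurs-check cases, preservation of head constructors for $\bot c2$, and the same replay-the-prefix-and-compare-with-the-most-general-unifier argument (via Lemmas~\ref{lem:stat-unif-tot} and \ref{lem:static-unif}) for $\bot c1$, including the measure bookkeeping. The semantic shortcut you mention is a reasonable aside but, as you note, needs a completeness strengthening of Lemma~\ref{lem:static-unif} that the paper does not state, so keeping the inductive argument as primary is the right call.
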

\begin{proof}
By lexicographic induction on $|\Delta|$ and the unification derivation $\Delta,x:A \ent M_1 \sqcap M_2 =\bot$.

\case{\infer[\bot{}x1]{\Delta,x:A \ent x' \sqcap M_2 = \bot}{x' \neq M_2 & x' \in M_2}}
Case on whether $x = x'$.

\case{$x \neq x'$} In this case, $[M/x]x' = x'$ and $x' \in [M/x]M_2$ so the unification still fails.

\case{$x = x'$} This case reduces to the following claim:
\begin{claim}If $x \in M$ and $\DE M' : A$ and $\Delta,x:A \ent M : A$ then $[M/x]\DE M' \sqcap [M/x] = \bot$. \end{claim}
\begin{proof}
  By Lemma~\ref{lem:occ-substr}, $x \sqsubset M$. 
  By Lemma~\ref{lem:subst-substr}, $M' \sqsubset [M'/x]M$.
  By Lemma~\ref{lem:subst-no-unif}, $[M'/x]\DE M' \sqcap [M'/x]M = \bot$.
\end{proof}
\case{\infer[\bot{}x2]{\Delta,x:A \ent M \sqcap x' = \bot}{x' \neq M & x' \in M}}
This case holds by symmetry.

\case{\infer[\bot{}c2]{\Delta,x:A \ent c\ M_1\ \ldots\ M_n \sqcap c'\ M_1'\ \ldots\ M_m' = \bot}{c \neq c'}}
This case holds because substitution preserves head constructors.

\case{\infer[\bot{}c1]{\DE c\ M_1\ \ldots\ M_n \sqcap c\ M_1'\ \ldots\ M_n' = \bot}
{\deduce[\vdots]{[\sigma_{i-1},\ldots,\sigma_1]\Delta \ent [\sigma_{i-1},\ldots,\sigma_1]M_i \sqcap [\sigma_{i-1},\ldots,\sigma_1]M_i' = \bot}{M_1 \sqcap M_1' = \sigma_1}}}

If some $i' < i$ fails to unify, we're done. Otherwise we attempt to unify $ [\sigma]M_i \sqcap [\sigma]M_2$ where  $\sigma = \sigma'_{i-1},\ldots,\sigma_1',M/x$.
By Lemma~\ref{lem:static-unif}, $\sigma = \sigma^*, \sigma_{i-1},\ldots,\sigma_1$ so
we can apply the IH to get
and
\[\infer[\bot{}c1]{[M/x]\DE [M/x]c\ M_1\ \ldots\ M_n \sqcap [M/x]c\ M_1'\ \ldots\ M_n' = \bot}
{\deduce[\vdots]{[\sigma'_{i-1},\ldots,\sigma'_1,M/x]\Delta \ent [\sigma'_{i-1},\ldots,\sigma'_1,M/x]M_i \sqcap [\sigma'_{i-1},\ldots,\sigma'_1,M/x]M_i' = \bot}{[M/x]M_1 \sqcap [M/x]M_1' = \sigma'_1}}\]
\end{proof}

\subsubsection{Substitution}
\begin{lem}[Substitution] All appropriate typing judgements support substitution.
  \label{lem:subst}
  \begin{enumerate}
  \item \emph{LF terms:} If $\Delta,x\colon{}A \ent M_1\colon{}A'$, $\DE M_2\colon{}A$ then $[M_2/x]\DE[M_2/x]M_1\colon{}[M_2/x]A$.
  \item \emph{Operands:} If $\Delta_1,x\colon{}A;\Delta_2;\myG\ent op\colon{}\tau$, $\Delta_1\ent M\colon{}A$ then $\Delta_1,[M/x]\Delta_2;[M/x]\myG \ent [M/x]op\colon{}[M/x]\tau$.
  \item \emph{Word values:} If $\Delta_1,x\colon{}A,\Delta_2;\PE w\colon{}\tau$, $\Delta_1\ent{}M\colon{}A$ then $\Delta_1,[M/x]\Delta_2;[M/x]\PE [M/x] w\colon{}[M/x]\tau$.
  \item \emph{Register Files:} If $\Delta_1,x\colon{}A,\Delta_2;\Psi\ent R\colon{}\myG$, $\Delta_1\ent M:A$ then $\Delta_1,[M/x]\Delta_2;[M/x]\Psi \ent [M/x]R \colon{} [M/x]\myG$.
  \item \emph{Heap values:}  If $\Delta_1,x\colon{}a,\Delta_2;\Psi\ent v^H\colon{}\tau$,$\Delta_1\ent M\colon{}a$, $v^H\neq\free{}[x\colon{}a]$ then \\$\Delta_1,[M/x]\Delta_2;[M/x]\Psi\ent[M/x]v^H\colon{}[M/x]\tau$
  \item \emph{Basic blocks:} If $\Delta_1,x\colon{}A,\Delta_2;\Gamma \ent I\ok$, $\Delta_1\ent M\colon{}A$ then $\Delta_1,[M/x]\Delta_2;[M/x]\myG \ent [M/x] I \ok$.
\end{enumerate}
\end{lem}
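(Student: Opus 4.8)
The plan is to prove parts (2)--(6) by a single simultaneous induction on the structure of the given typing derivations, taking part (1) as the standard substitution theorem for LF~\cite{Harper93aframework}; part (1) is needed on its own because the rules $op\ (w\ M)$, $w\ (w\ M)$ and \textsc{Succeed} carry an LF-typing premise $\DE M':A'$ to which only LF substitution applies. Throughout I would use three routine facts: (i) $[M/x]$ acts trivially on Prolog types $a$ (they are closed base constants), so $[M/x]\sing(M':a)=\sing([M/x]M':a)$; (ii) $[M/x]$ commutes with the update notation $\thup{\Gamma}{r}{\tau}$ and with $\Gplus{\myG'}$ on rftypes and heap types; and (iii) single-variable substitution preserves word canonicity, since $w\canon$ constrains only the outermost $\lambda$/$\ell$/application shape of $w$, which substituting an LF term for an LF variable leaves unchanged. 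For the large majority of rules the argument is the expected one: invert the derivation, push $[M/x]$ through the conclusion, apply the appropriate induction hypothesis to each premise, and reassemble. Context bookkeeping is uniform --- a judgement over $\Delta_1,x{:}A,\Delta_2$ becomes one over $\Delta_1,[M/x]\Delta_2$ --- and where a rule introduces a \emph{fresh} LF variable after $x$ (\textsc{Putvar}, \textsc{Unifyvar}, \textsc{Close}, \textsc{Code}) weakening (Lemma~\ref{lem:sweak}) lets that new variable be placed past $[M/x]\Delta_2$ before the IH is invoked. The heap-value clause (5) carries the side condition $v^H\neq\free{}[x:a]$ precisely to exclude the one heap value that cannot be rewritten \emph{as a heap value} under $[M/x]$, namely the free cell realizing $x$; for every other form of $v^H$, $[M/x]$ distributes over the component words and (5) follows from the IH at (3). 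The rules \textsc{Jmp} and \textsc{Mov} are immediate (they move no LF data beyond an operand, handled by the operand case).

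The real work is in the basic-block rules that mention static unification: \textsc{Getval-S}, \textsc{Getval-F}, the spine-closing rules $\sqcap\sigma$ and $\sqcap\bot$, and --- feeding into the latter two --- \textsc{Getstr}, \textsc{Putstr}, \textsc{Unifyvar}, \textsc{Unifyval}. Here pushing $[M/x]$ inward is not enough; one must know that substitution is compatible with static unification. The failure cases go through directly by the Unification Lemma of Doom (Lemma~\ref{lem:doom}): if a premise reads $\DE M_1\sqcap M_2=\bot$, then $[M/x]$ preserves the $\bot$ verdict, so the substituted instruction is again vacuously well-typed by \textsc{Getval-F}/$\sqcap\bot$ (after discharging the two LF-typing side conditions via part (1)). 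For a success case such as \textsc{Getval-S}, with premises $\Gamma(r_i)=\sing(M_i:a)$, $\DE M_1\sqcap M_2=\sigma$, and $\sub{\sigma}\Delta;[\sigma]\Gamma\ent[\sigma]I\ok$, I would: (a) form $[M/x]M_1\sqcap[M/x]M_2$, which by Static Unification Totality (Lemma~\ref{lem:stat-unif-tot}) is either $\bot$ --- whereupon \textsc{Getval-F} applies to the substituted instruction and we are done --- or some $\sigma'$; (b) in the latter case observe that $\sigma'\circ[M/x]$ is itself a unifier of $M_1,M_2$, so by most-generality (Lemma~\ref{lem:static-unif}) there is a substitution $\rho$ with $\sigma'\circ[M/x]\equiv_{\alpha}\rho\circ\sigma$; (c) apply the induction hypothesis, with the substitution $\rho$, to the genuine subderivation $\sub{\sigma}\Delta;[\sigma]\Gamma\ent[\sigma]I\ok$, obtaining $\sub{\rho\circ\sigma}\Delta;[\rho][\sigma]\Gamma\ent[\rho][\sigma]I\ok$, and note that $\sub{\rho\circ\sigma}\Delta=\sub{\sigma'}(\Delta_1,[M/x]\Delta_2)$ and $[\rho][\sigma](-)=[\sigma'][M/x](-)$ on $\Gamma$ and $I$ --- exactly what \textsc{Getval-S} demands for the substituted instruction. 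The $\sqcap\sigma$ spine rule, and transitively \textsc{Getstr}/\textsc{Putstr}/\textsc{Unifyval}, are handled identically, with \textsc{Unifyval} additionally needing only that $[M/x]$ composes with the inner substitution it already carries.

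The main obstacle is step (c): applying the IH with the \emph{composite} substitution $\rho$ rather than a single $[M/x]$. The clean fix is to prove the lemma in the mildly stronger form where $[M/x]$ is replaced by an arbitrary simultaneous substitution $\sigma_0$ with $\dom{\sigma_0}\subseteq\dom{\Delta}$; the stated lemma is the special case $\sigma_0=[M/x]$, the strengthening costs nothing in the routine cases, and (c) then becomes a plain appeal to the IH on a subderivation. (Equivalently one adds a ``composite substitution'' corollary obtained by an outer induction on $|\sigma_0|$; either route suffices.) Everything else --- the many structural cases, the context shuffling, and the Prolog-type and canonicity facts above --- is bookkeeping, and parallels the corresponding (proof-free) arguments of the simply-typed development in Section~\ref{sec:simple-metatheory}.
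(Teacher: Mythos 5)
Your proposal is correct and uses the same key ingredients as the paper's proof: in the critical \textsc{Getval-S}/$\sqcap\sigma$ cases, both arguments invoke Static Unification Totality (Lemma~\ref{lem:stat-unif-tot}) on the substituted unificands, factor the composed unifier through the most general one via Lemma~\ref{lem:static-unif}, and dispatch the failure cases (\textsc{Getval-F}, $\sqcap\bot$) with the Unification Lemma of Doom (Lemma~\ref{lem:doom}). The one point where you genuinely diverge is how the resulting \emph{composite} substitution is fed back into the induction: the paper keeps the single-variable statement and performs a lexicographic induction on $|\Delta|$ and the typing derivation, ``iterating the IH'' variable by variable --- legitimate because each iteration strictly shrinks $|\Delta|$ even though the derivation being substituted is no longer a subderivation --- whereas you strengthen the lemma to an arbitrary simultaneous substitution $\sigma_0$ (or prove a composite-substitution corollary by an outer induction on $|\sigma_0|$), after which a single IH application to the genuine premise $\sub{\sigma}\Delta;[\sigma]\Gamma\ent[\sigma]I\ok$ suffices under plain structural induction. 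Both resolutions are sound; the paper's avoids restating the lemma at the cost of the lexicographic measure, yours avoids the measure at the cost of a mildly more general statement, and you correctly note that your plain structural induction would \emph{not} go through without that strengthening --- which is precisely why the paper's version carries the $|\Delta|$ component.
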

\begin{proof}
Each claim holds by lexicographic induction on $|\Delta|$ and the structure of the typing derivation.
The first five claims are straightforward. The interesting cases of the final claim are the unification instructions,
because there is a subtle interaction between static and dynamic unification.
The case for  $\getval{r_1}{r_2}$ is representative:

\case{{\footnotesize\infer[\textsc{GetVal-S}]{\Delta_1,x:A,\Delta_2;\GE \getval {r_1}{r_2}; I \ok}
{[\sigma]\Delta_1,x:A,\Delta_2;[\sigma]\Gamma \ent [\sigma]I \ok&\Delta_1,x:A,\Delta_2 \ent M_1 \sqcap M_2 = \sigma&\Gamma(r_1) = \sing(M_1 : a) & \Gamma(r_2) = \sing(M_2 : a)}}}
By claim 1, $\Delta_1,[M/x]\Delta_2\ent [M/x]M_1 : a$ and $\Delta_1,[M/x]\Delta_2\ent [M/x]M_2 : a$ so by
Lemma~\ref{lem:stat-unif-tot}, either $\Delta_1,[M/x]\Delta_2\ent [M/x]M_1 \sqcap [M/x]M_2 = \sigma'$ or $\Delta_1,[M/x]\Delta_2\ent [M/x]M_1 \sqcap [M/x]M_2 = \bot$.

\subcase{$\Delta_1,[M/x]\Delta_2\ent [M/x]M_1 \sqcap [M/x]M_2 = \sigma'$}
By Lemma~\ref{lem:static-unif}, $[\sigma']([M/x]M_1) = [\sigma']([M/x]M_2)$ which we
can rewrite as $[\sigma',M/x]M_1 = [\sigma',M/x] M_2$. Also by Lemma~\ref{lem:static-unif}, $\sigma$ is a most general unifier of $M_1$ and $M_2$. Thus there exists $\sigma^*$ such that $\sigma',M/x = \sigma^*,\sigma$ (they need not be syntactically equal, but must be alpha-equivalent). In particular, alpha-vary $\sigma^*$ such that it substitutes for $x$. Then by iterating the IH (we can do this because $|\Delta|$ decreases every time), $[\sigma^*,\sigma](\Delta_1,x:A,\Delta_2);[\sigma^*,\sigma]\Gamma \ent [\sigma^*,\sigma]I \ok$. By the assumption that $\sigma^*,\sigma$ substitutes for $x,$ we have $[\sigma^*,\sigma](\Delta_1,\Delta_2);[\sigma^*,\sigma]\Gamma \ent [\sigma^*,\sigma]I \ok$ which suffices to show the result:
\[\infer[\textsc{GetVal-S}]{\Delta_1,[M/x]\Delta_2;[M/x]\GE \getval{r_1}{r_2}; [M/x]I \ok}
{\deduce{\Delta_1,[M/x]\Delta_2 \ent [M/x]M_1 \sqcap [M/x]M_2 = \sigma^*,\sigma}{([M/x]\myG)(r_1) = \sing([M/x]M_1 : a)}
&\deduce{[\sigma^*,\sigma]\Delta;[\sigma^*,\sigma]\GE [\sigma^*,\sigma]I \ok }{([M/x]\myG)(r_2) = \sing([M/x]M_2 : a)}}\]
\subcase{$\Delta_1,[M/x]\Delta_2\ent [M/x]M_1 \sqcap [M/x]M_2 = \bot$}In this case, since the
unification failed, the result is vacuously well-typed:
\[{\footnotesize\infer[\textsc{GetVal-F}]{\Delta_1,[M/x]\Delta_2;[M/x]\GE \getval{r_1}{r_2}; [M/x]I \ok}
{([M/x]\myG)(r_1) = \sing([M/x]M_1 : a) &
 ([M/x]\myG)(r_2) = \sing([M/x]M_2 : a) &
 \Delta_1,[M/x]\Delta_2\ent [M/x]M_1 \sqcap [M/x]M_2 = \bot}
}\]
\case{\infer{\Delta_1,x:A,\Delta_2;\GE \getval {r_1}{r_2}; I \ok}
{\deduce{\Delta_1,x:A,\Delta_2\ent M_1 \sqcap M_2 = \bot}{\Gamma(r_1) = \sing(M_1 : a) & \Gamma(r_2) = \sing(M_2 : a)}}}
By Lemma~\ref{lem:doom}.
\end{proof}
\subsubsection{Soundness of Unification}
\begin{lem}[Soundness of \unify] If $\DE M_1 \colon{}a$, $\DE M_2 \colon{}a$, $\DE S \colon{} (\Xi;\Psi)$, $\Delta;S\ent T\ok$,
$\SE(\Delta;\mu)\colon{}S$, $\DPE\ell^H_1\colon{}\sing(M_1\colon{}A)$, $\DPE\ell^H_2\colon{}\sing(M_2\colon{}A)$ then
\begin{itemize}
\item If $\DE M_1 \sqcap M_2 = \bot$ then $\unify(\Delta,S,T,\ell^H_1,\ell^H_2) = \bot$
\item If $\DE M_1 \sqcap M_2 = \sigma$ then $\unify(\Delta,S,T,\ell^H_1,\ell^H_2) = (\Delta',S',T')$ where
$\Delta' = [\sigma]\Delta$ and $[\sigma]\DE H' \colon{} [\sigma]\Psi$ and $\Delta',S'\ent T' \ok.$
\end{itemize}
\end{lem}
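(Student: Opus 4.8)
The plan is to prove the lemma by simultaneous induction with an auxiliary subclaim for $\unifyargs$, the push-down-list form of unification, mirroring the simply-typed Soundness of $\unify$ (Lemma~\ref{lem:unify-sound}) and its companion claim. The induction is lexicographic: on $|\Delta|$ and the structure of $M_1$ for the main statement, and on the length of the argument lists for the subclaim, so that it tracks the recursion both of the dynamic algorithm and of Static Unification Totality (Lemma~\ref{lem:stat-unif-tot}). The content added over the simply-typed version is that dynamic unification must fail \emph{exactly} when static unification does, and that in the success case the triple $(\Delta',S',T')$ returned by $\unify$ satisfies $\Delta' = \sub{\sigma}\Delta$ together with $\Delta'\ent H':[\sigma]\Psi$ and $\Delta';S'\ent T'\ok$. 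The key bridge between the static and dynamic worlds is a dependent, singleton-aware refinement of End Correctness (Lemma~\ref{lem:end-corr}): from $\DPE \ell^H : \sing(M:a)$ and $\DE S : (\Xi;\Psi)$ we obtain that $\edn(S,\ell^H) = \ell'^H$ is canonical, $\DPE \ell'^H : \sing(M:a)$, and the outermost shape of $S(\ell'^H)$ matches that of $M$ --- it is $\free{}[x:a]$ when $M = x$, and $\args{\ell^H_1,\ldots,\ell^H_n}$ headed by $c$ with $\DPE \ell^H_i : \sing(M_i:a_i)$ when $M = c\ \vec M$. Combined with the uniqueness guaranteed by $\SE(\Delta;\mu):S$, this lets us read off, from the shape of the static unification derivation, which dynamic rule must fire.

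I would then proceed by cases on the derivation of $\DE M_1 \sqcap M_2 = \sigma$ (or $= \bot$). For $\sqcap\cdot$ ($M_1 = M_2 = x$), both ends resolve to the unique heap location of $x$, so they coincide, the $\unify{=}$ rule fires, and $(\Delta,S,T)$ works since $\sigma = \cdot$. For $\sqcap x1$ and $\sqcap x2$ ($x\sqcap M$ or $M\sqcap x$ with $x\notin M$), one of the two ends is $\free{}[x:a]$ and we apply the free-variable binding rule to that end; its runtime occurs-check side condition is discharged by a correspondence lemma (by induction on $x\notin M$, using the singleton End lemma and the uniqueness from $\mu$): if $x\notin M$, $\ell'$ is $x$'s location and $\ell''$ encodes $M$, then $\ell'\notin_S\ell''$. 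After the binding only one heap cell changes, and $\Delta'\ent H':[\sigma]\Psi$ follows from the dependent analogue of Heap Update (Lemma~\ref{lem:heap-update}) together with the Substitution Lemma (Lemma~\ref{lem:subst}) --- the locations formerly encoding subterms that mention $x$ now encode their $[M/x]$-images, since $x$'s cell redirects into $M$ --- while $\Delta';S'\ent T'\ok$ follows from the dependent analogue of Trail Update (Lemma~\ref{lem:trail-up}). For $\sqcap c$, both ends are structures headed by $c$, the $\unify\ c\langle\rangle$ rule fires, and the goal reduces to the $\unifyargs$ subclaim. The failure derivations are handled similarly: $\bot c2$ gives ends headed by distinct constructors, hence $\textsc{U}\bot3$; $\bot x1$ and $\bot x2$ give, by the converse occurs correspondence, a dynamic occurs failure $\ell'\in_S\ell''$, hence $\textsc{U}\bot1$ or $\textsc{U}\bot2$; and $\bot c1$ reduces to the subclaim returning $\bot$, hence $\textsc{U}\bot4$.

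The main obstacle is the $\unifyargs$ subclaim, where the left-to-right substitution composition of the statics --- which unifies $[\sigma_{i-1},\ldots,\sigma_1]M_i$ against $[\sigma_{i-1},\ldots,\sigma_1]M_i'$ --- must be matched against the in-place, destructive heap updates of dynamic $\unifyargs$, each step running on the store left by its predecessor. I would strengthen the subclaim so that its hypothesis carries the processed prefix together with the partial substitution $\sigma_{i-1},\ldots,\sigma_1$ it induces, asserting that over the remaining suffix the heap locations encode the $[\sigma_{i-1},\ldots,\sigma_1]$-substituted argument terms and the residual spine obligation is likewise substituted; each step is then an application of the main inductive hypothesis (on a structurally smaller $M_i$, or a smaller $|\Delta|$ once a substitution is applied) followed by re-typing the tail via the Substitution Lemma, appealing to Static Unification Correctness (Lemma~\ref{lem:static-unif}) to know the composed substitution is a most general unifier and therefore agrees, up to renaming, with the implicit substitution computed by the dynamics. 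A secondary, purely bookkeeping difficulty is that when both unificands are variables the statics' most general unifier and the dynamics' binding may retain different variable names; this is absorbed by reading $\Delta' = \sub{\sigma}\Delta$ up to $\alpha$-renaming of $\Delta$ and $\mu$, in keeping with the $\sub{\sigma}\Delta$ convention. All remaining ingredients --- the singleton End lemma, the occurs-check correspondence in both directions (resting on the totality and functionhood of occurs, Lemma~\ref{lem:occ-tot}), the dependent Heap Update and Trail Update, Static Unification Totality, and the Substitution Lemma (whose own proof rests on the Unification Lemma of Doom, Lemma~\ref{lem:doom}) --- are either already available or routine adaptations of their simply-typed counterparts.
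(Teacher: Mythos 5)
Your proposal is correct and takes essentially the same route as the paper: the paper proves this lemma ``as in SWAM'' --- the same simultaneous induction with a \unifyargs{} subclaim, using End Correctness, Heap Update, and Trail Update --- with the Substitution Lemma supplying the dependent bookkeeping, which is exactly what you flesh out (your singleton-aware End Correctness, the occurs-check correspondence, and the substitution-carrying strengthening of the subclaim are the details the paper leaves to its appendix). The only cosmetic divergence is the induction measure (the paper's simply-typed proof inducts on the size of $\Psi$ and on the argument lists, you on $|\Delta|$ and the structure of $M_1$, mirroring Static Unification Totality), which does not change the substance of the argument.
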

\begin{proof}
  As in SWAM, with the additional use of Lemma~\ref{lem:subst}.
\end{proof}
\subsubsection{Words and Operands}
\begin{lem}[Word Totality]\label{lem:word-tot} If $\DPE w : \tau$ then $w \eval w'$ and $w' \canon$.
\end{lem}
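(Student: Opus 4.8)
The plan is to induct on the derivation of $\DPE w : \tau$, case-splitting on the shape of $w$ (equivalently, on the last word-typing rule). The two location cases, $w = \ell^C$ and $w = \ell^H$, are immediate: a location evaluates to itself and is a path ($\ell\path$), hence canonical. For $w = \lambda x{:}A.~w_0$, the rule $w\ \lambda$ gives $\Delta,x{:}A;\Psi \ent w_0 : \tau_0$, so the induction hypothesis yields $w_0 \eval w_0'$ with $w_0' \canon$; then $\lambda\eval$ gives $(\lambda x{:}A.~w_0) \eval (\lambda x{:}A.~w_0')$ and $\lambda\canon$ gives $(\lambda x{:}A.~w_0') \canon$.

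The substantive case is $w = w_1\ M$, typed by $w\ (w\ M)$ from $\DPE w_1 : \Pi x{:}A.~\tau_0$ and $\DE M : A$. The induction hypothesis on $w_1$ gives $w_1 \eval w_1'$ with $w_1' \canon$, so $w_1'$ is either a path or of the form $\lambda x{:}A.~w_2$ with $w_2 \canon$. If $w_1' \path$, then $op\ M\eval$ gives $w_1\ M \eval w_1'\ M$ and $w\ M\path$ gives $w_1'\ M \path$, hence $w_1'\ M \canon$. If $w_1' = \lambda x{:}A.~w_2$, we must apply $\beta\eval$, which demands an evaluation of the substituted body $[M/x]w_2$. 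This is where the induction threatens to break down: $[M/x]w_2$ is not a subterm of $w$ and is not obviously typed by a smaller derivation, so the main induction does not apply to it directly. I expect this to be the main obstacle.

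The resolution — and, I believe, the conceptually interesting point — is that word values have no bare-variable heads: by the grammar $w ::= \ell^C \mid \ell^H \mid w\ M \mid \lambda x{:}A.~w$, every canonical word is a $\lambda$-prefixed path $\lambda \vec{y}{:}\vec{B}.~(\ell\ \vec{N})$, so an LF substitution $[M/x]$ only rewrites the LF-term arguments $\vec{N}$ and never produces a word-level $\beta$-redex. I would therefore first establish a small auxiliary lemma: \emph{if $w \canon$ then, for every LF term $M$ and variable $x$, $[M/x]w \canon$ and $[M/x]w \eval [M/x]w$.} This is a routine induction on the derivation of $w \canon$: the $\lambda\canon$ case pushes the substitution under the binder and uses the inner induction hypothesis, while the base case $w\canon$ from $w\path$ is handled by a further induction on $w \path$, using $\ell^C\eval$ (and its heap-location counterpart) at the head and $op\ M\eval$ at each application node — substitution into each argument keeps the spine a path. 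With this lemma in hand, the $\beta\eval$ subcase closes: from $w_1 \eval \lambda x{:}A.~w_2$ and $[M/x]w_2 \eval [M/x]w_2$ we obtain $w = w_1\ M \eval [M/x]w_2$, and $[M/x]w_2 \canon$, as required.

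Note that no appeal to typing is needed inside the $\beta\eval$ case, only the structural fact that $w_2$ is a canonical word; this is what lets us avoid the heavier route of introducing a measure on LF types and re-running the hereditary-substitution argument behind LF canonical forms~\cite{Harper93aframework}. That alternative would also work — $\lambda x{:}A.~w_2$ has type $\Pi x{:}A.~\tau_0$, so $w_2$ is typable at $\tau_0$ and $[M/x]w_2$ at $[M/x]\tau_0$, and one could induct on a suitable type measure — but the word-structure argument is self-contained and shorter, so it is the route I would take.
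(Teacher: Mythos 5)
Your proof is correct, but it is not the route the paper takes. The paper's proof is a one-line sketch: induction on the word-size measure $\size(w)$ (with $\size(\ell)=0$, $\size(w\ M)=1+\size(w)$, $\size(\lambda x{:}A.w)=1+\size(w)$), ``appealing to Lemma~\ref{lem:subst} and the fact that substitution preserves $\size$.'' In other words, the paper closes the $\beta\eval$ case by applying the induction hypothesis to the substituted body $[M/x]w'$ -- justified because LF substitution leaves the word-level size unchanged, and because Lemma~\ref{lem:subst} keeps $[M/x]w'$ well-typed so the IH is applicable (strictly, this also needs that evaluation does not increase word size, which the sketch leaves implicit). You instead avoid invoking the IH on the substituted term altogether, via your auxiliary lemma that canonical words are stable under LF substitution and self-evaluate; this exploits exactly the same underlying observation (substituting an LF term for an LF variable cannot disturb the word-level applicative/\(\lambda\) structure, since canonical words are $\lambda\vec y{:}\vec B.\,\ell\ \vec N$ with a location head), but packages it as a canonical-forms stability lemma rather than as a measure for the induction. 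What each buys: the paper's measure-based argument is shorter on paper but leans on Lemma~\ref{lem:subst} and the implicit size bookkeeping; yours is more self-contained -- no typing is needed in the $\beta$ case and no size/termination bookkeeping arises -- at the cost of stating and proving one extra (routine) lemma. One small housekeeping point you already noticed and should keep: the paper only displays the rule $\ell^C\eval$, so your appeal to a heap-location counterpart (or reading $\ell^C\eval$ as covering both kinds of locations) should be made explicit.
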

\begin{proof}
By induction on $\size(w)$, defined by $\size(\ell) = 0, \size(w\ M) = 1 + \size(w), \size(\lambda x:A.w) = 1 + \size(w)$,
appealing to Lemma~\ref{lem:subst} and the fact that substitution preserves $\size(M)$.
\end{proof}
\begin{lem}[Operand Resolution]\label{lem:oper-res} For all operands $op$, if $\DGE_{\Xi} op : \tau$
 and $\DPE R : \Gamma$ then $R \ent op \squig w$
for some word $w$ and $\DPE_{\Xi} w : \tau$. \end{lem}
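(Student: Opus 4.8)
The plan is to prove the lemma by structural induction on the operand $op$ (equivalently, by induction on the derivation of $\DGE_{\Xi} op : \tau$, which is syntax-directed on $op$). The key observation is that operand resolution $R \ent op \squig w$ performs no static-level computation: it simply descends through applications $op\ M$ and abstractions $\lambda x{:}A.~op$, replacing each register occurrence $r$ by $R(r)$, and never fires a $\beta$-redex (all $\beta$-reduction is deferred to the separate word-evaluation judgement $w \eval w'$, handled by Lemma~\ref{lem:word-tot}). Hence, unlike word evaluation, totality of resolution needs no termination/size argument: each of the operand forms has exactly one applicable resolution rule, and its premises are discharged either directly or by the induction hypothesis.

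Concretely I would go case-by-case on $op$:
\begin{itemize}
\item $op = \ell^C$: inverting the operand typing rule gives $\Xi(\ell^C) = \tau$; $\ell^C$ resolves to itself, and the corresponding word-value rule gives $\DPE_{\Xi} \ell^C : \tau$.
\item $op = r$: inverting operand typing gives $\Gamma(r) = \tau$; inverting $\DPE R : \Gamma$ via \textsc{RF} gives $R(r) = w$ with $\DPE w : \tau$ (and $w\canon$); the register resolution rule gives $R \ent r \squig w$.
\item $op = op'\ M$: inverting gives $\DGE op' : \Pi x{:}A.~\tau'$, $\DE M : A$, and $\tau = [M/x]\tau'$; the induction hypothesis gives $R \ent op' \squig w'$ with $\DPE w' : \Pi x{:}A.~\tau'$; the application resolution rule gives $R \ent op'\ M \squig w'\ M$, and the word-value application rule gives $\DPE w'\ M : [M/x]\tau' = \tau$.
\item $op = \lambda x{:}A.~op'$: inverting gives $\DE A : \type$ and $\Delta,x{:}A;\Gamma \ent op' : \tau'$ with $\tau = \Pi x{:}A.~\tau'$; to apply the induction hypothesis to $op'$ we first weaken $\DPE R : \Gamma$ to $\Delta,x{:}A;\Psi \ent R : \Gamma$; the induction hypothesis then gives $R \ent op' \squig w'$ with $\Delta,x{:}A;\Psi \ent w' : \tau'$; the abstraction resolution rule gives $R \ent \lambda x{:}A.~op' \squig \lambda x{:}A.~w'$, and the word-value abstraction rule (using $\DE A : \type$) gives $\DPE \lambda x{:}A.~w' : \Pi x{:}A.~\tau' = \tau$.
\end{itemize}

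The only point requiring care is the abstraction case, where the induction hypothesis must be invoked under the extended LF context $\Delta, x{:}A$: this needs a routine weakening principle saying that extending $\Delta$ preserves register-file (hence word-value) typing — the dependent analogue of the heap-location weakening of Lemma~\ref{lem:sweak}, with canonicity insensitive to $\Delta$. I do not expect any genuine obstacle here: resolution is deliberately designed so that it is total for purely structural reasons, with all the nontrivial computational content (and the attendant termination concern) isolated in Lemma~\ref{lem:word-tot}.
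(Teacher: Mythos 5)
Your proposal is correct and matches the paper's proof, which is exactly an induction on the derivation $\DGE_{\Xi} op : \tau$ together with inversion on register-file typing for the register case; your case analysis just spells out the details the paper leaves implicit. The extra remark about $\Delta$-weakening in the abstraction case is a routine point the paper elides, not a departure in approach.
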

\begin{proof}
By induction on $\DGE_{\Xi} op : \tau$ and inversion on register file typing.
\end{proof}
\begin{lem}[Word Preservation]\label{lem:word-pres} If $\DPE w : \tau$ and $w \eval w'$ then $\DPE w' : \tau$.\end{lem}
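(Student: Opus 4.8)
The plan is to prove this by induction on the derivation of $w \eval w'$, performing inversion on the word-value typing derivation $\DPE w : \tau$ in each case and then reassembling a typing derivation for $w'$. The base case — a location evaluating to itself — is immediate, since there $w' = w$. The content comes from the three rules that recurse, and of those only the $\beta$-reduction rule requires invoking a previous result, namely the word-value clause of the Substitution Lemma (Lemma~\ref{lem:subst}).

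First I would handle $\lambda\eval$: here $w = \lambda x{:}A.~w_0$ and $w' = \lambda x{:}A.~w_0'$ with $w_0 \eval w_0'$. Inverting rule $w\ \lambda$ on $\DPE w : \tau$ gives $\tau = \Pi x{:}A.~\tau_0$, together with $\DE A : \type$ and $\Delta,x{:}A;\Psi \ent w_0 : \tau_0$. The induction hypothesis on $w_0 \eval w_0'$ yields $\Delta,x{:}A;\Psi \ent w_0' : \tau_0$, and reapplying $w\ \lambda$ gives $\DPE \lambda x{:}A.~w_0' : \Pi x{:}A.~\tau_0 = \tau$. The non-$\beta$ application rule is analogous: $w = w_1\ M$, $w' = w_2\ M$, $w_1 \eval w_2$; inverting rule $w\ (w\ M)$ gives $\DPE w_1 : \Pi x{:}A.~\tau_0$, $\DE M : A$, and $\tau = [M/x]\tau_0$, so the IH on $w_1 \eval w_2$ together with a reapplication of $w\ (w\ M)$ recover $\DPE w_2\ M : [M/x]\tau_0 = \tau$.

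The main obstacle is the $\beta$-reduction case $\beta\eval$, where $w = w_1\ M$, $w_1 \eval (\lambda x{:}A.~w_0')$, and $[M/x]w_0' \eval w'$. Inverting $w\ (w\ M)$ on $\DPE w : \tau$ gives $\DPE w_1 : \Pi x{:}A'.~\tau_0$, $\DE M : A'$, and $\tau = [M/x]\tau_0$. Applying the IH to $w_1 \eval (\lambda x{:}A.~w_0')$ gives $\DPE \lambda x{:}A.~w_0' : \Pi x{:}A'.~\tau_0$; inverting rule $w\ \lambda$ (and using that a $\Pi$-type determines its components up to $\alpha$-equivalence, so $A = A'$ after renaming) yields $\Delta,x{:}A';\Psi \ent w_0' : \tau_0$. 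Since $x$ is bound inside the $\Pi$-type it does not occur in $\Psi$, so $[M/x]\Psi = \Psi$, and the word-value clause of Lemma~\ref{lem:subst} applied with $\DE M : A'$ gives $\DPE [M/x]w_0' : [M/x]\tau_0$, i.e.\ $\DPE [M/x]w_0' : \tau$. A final appeal to the IH, now on the premise $[M/x]w_0' \eval w'$, gives $\DPE w' : \tau$. The two subtle points to keep straight are that the induction must be on the evaluation derivation rather than on $\size(w)$ — the term $[M/x]w_0'$ need not be a syntactic subterm of $w_1\ M$, but its evaluation derivation is a proper subderivation — and that the substitution lands us back in the \emph{same} heap type $\Psi$ precisely because $x$ is $\Pi$-bound. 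Everything else is routine inversion and reconstruction.
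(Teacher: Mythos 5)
Your proof is correct and matches the paper's own argument, which is exactly an induction on the evaluation derivation $w \eval w'$ appealing to the word-value clause of the Substitution Lemma (Lemma~\ref{lem:subst}) in the $\beta$-reduction case. The details you supply — routine inversion in the $\lambda$ and non-$\beta$ application cases, and the observation that the $\Pi$-bound variable does not occur in $\Psi$ so the substitution stays in the same heap type — are the intended elaboration of the paper's one-line proof.
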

\begin{proof} By induction on the trace $w \eval w'$ and Lemma~\ref{lem:subst}. \end{proof}
\begin{lem}[Operand Preservation]\label{lem:oper-pres} If $\DPE R : \myG$ and $\DGE op : \tau$ and $R \ent op \eval w$ then $w \canon$ and $\DPE w : \tau$.\end{lem}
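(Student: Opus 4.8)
The plan is to reduce the statement to the three preceding lemmas on words and operands, with a short extra induction to pin down canonicity of the \emph{particular} evaluation result. First I would invert $R \ent op \eval w$: the only rule that derives operand evaluation is $op\eval$, so there is a word $w_0$ with $R \ent op \squig w_0$ and $w_0 \eval w$. Applying Operand Resolution (Lemma~\ref{lem:oper-res}) to $\DPE R : \myG$ and $\DGE op : \tau$ yields $\DPE w_0 : \tau$. Then Word Preservation (Lemma~\ref{lem:word-pres}) applied to $\DPE w_0 : \tau$ and $w_0 \eval w$ gives $\DPE w : \tau$, which is the second conjunct.

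For the first conjunct, $w \canon$, I would argue by induction on the derivation $w_0 \eval w$. In the $\ell^C\eval$ case $w = \ell^C$, which is a $\path$ and hence $\canon$. In the $\lambda\eval$ case $w = \lambda x{:}A.\,w'$ where the body evaluation is an immediate subderivation; the IH gives $w' \canon$ and the $\lambda\canon$ rule concludes. In the $op\ M\eval$ case $w = w_1\ M$ with $w_1 \path$, so $w_1\ M \path$ by the $w\ M\path$ rule, hence $w_1\ M \canon$. In the $\beta\eval$ case $w$ is the result of a word evaluation of $[M/x]w'$, again an immediate subderivation, so the IH applies directly. (Alternatively one could appeal to Word Totality, Lemma~\ref{lem:word-tot}, together with determinism of word evaluation, since the $\beta\eval$ and $op\ M\eval$ rules are mutually exclusive on the shape of the evaluated head; but the direct induction above is self-contained and avoids proving determinism.)

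I do not anticipate any real obstacle: the typing part is just composition of Lemmas~\ref{lem:oper-res} and~\ref{lem:word-pres}, and the canonicity part is a syntax-directed structural induction on the word-evaluation derivation. The only point requiring a moment's care is confirming that in the $\beta\eval$ and $op\ M\eval$ cases the premises on which the induction hypothesis is invoked ($op \eval (\lambda x{:}A.\,w')$, $[M/x]w' \eval w''$, and $op \eval w$) are genuine subderivations of the derivation being analyzed — which they are — so that the induction is well-founded.
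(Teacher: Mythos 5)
Your proof is correct and follows essentially the same route as the paper, whose proof is just the composition of Operand Resolution (Lemma~\ref{lem:oper-res}) with Word Preservation (Lemma~\ref{lem:word-pres}). Your explicit induction on the word-evaluation derivation for the canonicity conjunct fills in a detail the paper's one-line proof leaves implicit (it never separately addresses $w \canon$), and it is a sound, self-contained way to discharge it.
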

\begin{proof}
Follows directly from Lemmas~\ref{lem:oper-res} and \ref{lem:word-pres}.
\end{proof}
\begin{lem}[Operand Canonicalization]\label{lem:oper-canon} If $\DPE R : \myG$ and $\DGE op : \tau$ then $R \ent op \eval w$ and $w \canon$ and $\DPE w : \tau$.
\end{lem}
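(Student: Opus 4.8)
The plan is to assemble this statement directly from the three preceding results on operand resolution and word evaluation, with no genuine new content. First I would apply Operand Resolution (Lemma~\ref{lem:oper-res}) to the hypotheses $\DPE R : \myG$ and $\DGE op : \tau$, obtaining a word $w_0$ with $R \ent op \squig w_0$ and $\DPE w_0 : \tau$. Since $w_0$ is now known to be well-typed, Word Totality (Lemma~\ref{lem:word-tot}) yields some $w'$ with $w_0 \eval w'$ and $w' \canon$. The single evaluation rule $op\eval$, whose premises are exactly $R \ent op \squig w$ and $w \eval w'$, then assembles these two facts into a derivation of $R \ent op \eval w'$, discharging the operand-evaluation conclusion and the canonicity side-condition in one stroke.

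It remains only to re-establish the typing $\DPE w' : \tau$, which follows from Word Preservation (Lemma~\ref{lem:word-pres}) applied to $\DPE w_0 : \tau$ and $w_0 \eval w'$; equivalently one may invoke Operand Preservation (Lemma~\ref{lem:oper-pres}) on the $R \ent op \eval w'$ derivation just constructed, since that lemma packages precisely the same two steps. Canonicity of $w'$ is not disturbed in either route — it was produced by Word Totality — so the three conclusions $R \ent op \eval w'$, $w' \canon$, and $\DPE w' : \tau$ hold simultaneously with the same witness $w'$.

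There is essentially no hard step: this lemma is a convenience repackaging of ``resolution terminates'', ``word evaluation terminates'', and ``word evaluation preserves types'' into the single totality-plus-soundness statement that the \textsc{Jmp} and \textsc{Mov} cases of progress and preservation want to cite. The only point requiring a moment's attention is that Word Totality must be fed a well-typed word, which is exactly what Operand Resolution supplies; and because $op\eval$ is the unique rule concluding an $\eval$ judgement on operands, no further case analysis on the shape of $op$ is needed beyond what those two lemmas have already carried out internally.
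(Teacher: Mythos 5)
Your proposal is correct and matches the paper's own proof: it applies Operand Resolution to get a well-typed resolved word, Word Totality to evaluate it to a canonical form, rule $op\eval$ to package the two derivations, and Word Preservation to recover the typing. The paper's proof text even writes this in the same three-lemma sequence (with a minor notational slip, writing $\eval$ where $\squig$ is meant for the resolution step), so there is nothing to change.
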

\begin{proof}
By Lemma~\ref{lem:oper-res}, $R \ent op \eval w'$ and $\DPE w' : \tau$. By Lemma~\ref{lem:word-tot}, $w' \eval w$, so by rule $op\eval$ we have
$R \ent op \eval w$
and by Lemma~\ref{lem:word-pres}, $\DPE w : \tau$.
\end{proof}
\begin{lem}[Word Inversion]\label{lem:word-inv} If $\DPE \ell\ \vec M : \tau$ then $\DPE \ell : \Pi \vec x:\vec A.\tau'$ where $[\vec M/\vec x]\tau'=\tau$
and $\DE M_i : [M_1,\ldots,M_{i-1}/x_1,\ldots,x_{i-1}]A_i$.
\end{lem}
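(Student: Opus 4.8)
The plan is to prove Word Inversion by induction on $n = |\vec M|$, the number of arguments applied to $\ell$.

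In the base case $n = 0$ the word is just $\ell$, so I would take the binder sequences $\vec x$ and $\vec A$ to be empty and $\tau' = \tau$: the equation $[\vec M/\vec x]\tau' = \tau$ and the (vacuous) typing obligations on the $M_i$ then hold trivially, and $\DPE \ell : \tau'$ is the hypothesis itself. This case works uniformly whether $\ell$ is a heap location $\ell^H$ or a code location $\ell^C$, since the corresponding typing rule merely reads the type out of $\Psi$ (resp.\ $\Xi$).

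For the inductive step I would write $\vec M = M_1,\ldots,M_n$ with $n \ge 1$ and regard $\ell\ \vec M$ as $(\ell\ M_1\cdots M_{n-1})\ M_n$. Because $\ell\ M_1\cdots M_{n-1}$ is syntactically an application (not a $\lambda$), the only word-value typing rule that can conclude $\DPE (\ell\ M_1\cdots M_{n-1})\ M_n : \tau$ is $(w\ (w\ M))$, so inversion gives some $A_n$ and $\sigma$ with $\DPE \ell\ M_1\cdots M_{n-1} : \Pi x_n{:}A_n.\,\sigma$, $\DE M_n : A_n$, and $\tau = [M_n/x_n]\sigma$. I would then apply the induction hypothesis to $\DPE \ell\ M_1\cdots M_{n-1} : \Pi x_n{:}A_n.\,\sigma$, obtaining $\DPE \ell : \Pi x_1{:}A_1'\cdots\Pi x_{n-1}{:}A_{n-1}'.\,\rho$ together with $\DE M_i : [M_1,\ldots,M_{i-1}/x_1,\ldots,x_{i-1}]A_i'$ for $i \le n-1$ and $[M_1,\ldots,M_{n-1}/x_1,\ldots,x_{n-1}]\rho = \Pi x_n{:}A_n.\,\sigma$. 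Finally I would peel one further $\Pi$ off $\rho$: since $\Pi$-types, singleton types, continuation types, and tuple types are distinct constructors with no variable in head position, a substitution for term variables cannot alter the top-level shape of a type, so $[M_1,\ldots,M_{n-1}/\vec x]\rho$ being a $\Pi$-type forces $\rho = \Pi x_n{:}A_n^0.\,\rho^0$ with $[M_1,\ldots,M_{n-1}/\vec x]A_n^0 = A_n$ and $[M_1,\ldots,M_{n-1}/\vec x]\rho^0 = \sigma$; then $\DPE \ell : \Pi x_1{:}A_1'\cdots\Pi x_n{:}A_n^0.\,\rho^0$ and, renaming $x_n$ fresh so the simultaneous substitution factors as an iterated one, $[M_1,\ldots,M_n/\vec x]\rho^0 = [M_n/x_n]([M_1,\ldots,M_{n-1}/\vec x]\rho^0) = [M_n/x_n]\sigma = \tau$, while $\DE M_n : A_n = [M_1,\ldots,M_{n-1}/\vec x]A_n^0$ discharges the last obligation.

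The main obstacle is this last peeling step rather than the induction itself: I need the syntactic observation that term-level substitution preserves the outermost type constructor, so that the $\Pi$ I need is really present in the type $\rho$ that the induction hypothesis hands me, plus the mildly fiddly bookkeeping relating the paper's simultaneous substitutions to iterated ones, which is valid precisely because the freshly chosen bound variable $x_n$ does not occur in $M_1,\ldots,M_{n-1}$. Both are routine but are where the care lies; no appeal to the Substitution Lemma~\ref{lem:subst} or any deeper property should be needed.
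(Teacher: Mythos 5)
Your proposal is correct and matches the paper's proof, which is simply by induction on the derivation $\DPE \ell\ \vec M : \tau$ — structurally the same as your induction on $|\vec M|$, inverting the application rule and appealing to the induction hypothesis. The extra care you take (substitution preserving the outermost type constructor, and factoring the simultaneous substitution via a fresh bound variable) is exactly the routine bookkeeping the paper leaves implicit.
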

\begin{proof}
By induction on the derivation $\DPE \ell\ \vec M : \tau$.
\end{proof}
%As before, the \verb|unify_val| case of preservation relies on the soundness of unification, which
%we generalize to the following:
%The proof is similar to before, with additional appeals to substitution lemmas. \\
\subsubsection{Progress and Preservation}
\begin{thm}[Progress] If $\SE m \ok$ then either $m \done$ or $m \fails$ or $m \step m'$.\end{thm}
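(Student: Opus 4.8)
The plan is to prove Progress by cases on the shape of the machine state $m$ --- a normal state $(T,\Delta,S,R,I)$, a read-spine state $\mread(\cdots)$, a Prolog-write-spine state $\mwrite(\cdots)$, or a tuple-write-spine state $\twrite(\cdots)$ --- and then, within each case, by cases on the first remaining instruction of $I$ (or on the residual tuple counter $n$ in the $\twrite$ case). In every case I first invert the applicable machine-typing rule (\textsc{Mach}, \textsc{Mach-Read}, \textsc{Mach-Write}, \textsc{Mach-TWrite}) to obtain $\DGE I \ok$ together with the store, register-file, and trail typing premises and, for spinal states, the relevant spine invariant (\textsc{Reads}, \textsc{Writes}, or \textsc{TWrites}); then I invert the instruction-typing or spine-typing derivation to learn just enough about operands, registers, and the store to exhibit an applicable operational rule.

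For a normal state, the purely local instructions step immediately: \texttt{succeed} gives $m\done$, while \texttt{put\_var}, \texttt{put\_str}, \texttt{put\_tuple}, \texttt{close}, and \texttt{push\_bt} step using only a register lookup or the fact that $\Sigma(c)=\vec a\to a$. For \texttt{mov} and \texttt{jmp} I would invoke Operand Canonicalization (Lemma~\ref{lem:oper-canon}) to get $R\ent op\eval w$ with $w$ canonical and well-typed, then the (dependent) Canonical Forms lemma (Lemma~\ref{lem:scf}) to see that $w$ is either a code location, making \textsc{Jmp-}$\ell^C$ applicable, or a heap location holding a closure, making \textsc{Jmp-}$\ell^H$ applicable; \texttt{mov} simply stores the resolved word. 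For \texttt{proj}, Canonical Forms on $R(r_s)$ plus the side condition $i\le|\vec\tau|$ in rule \textsc{Proj} guarantees the heap holds a tuple of sufficient length. The essential cases are \texttt{get\_val} and \texttt{get\_str}. For \texttt{get\_val}, End Correctness (Lemma~\ref{lem:end-corr}) and Soundness of \texttt{unify} tell us $\unify(S,T,\ell^H_1,\ell^H_2)$ is either $(S',T')$, so \textsc{GetVal}$\step$ fires, or $\bot$, in which case Backtracking Totality (Lemma~\ref{lem:bt-tot}) yields either some $m'$ (\textsc{GetVal-BT}) or $\bot$ (\textsc{GetVal-}$\bot$, i.e.\ $m\fails$). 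For \texttt{get\_str}, End Correctness and Canonical Forms decide whether $\edn(S,R(r))$ is a free variable (go to write mode) or a structure, matching the constructor (go to read mode) or not (backtrack, again via Lemma~\ref{lem:bt-tot}).

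The three spinal cases are parallel: from the spine invariant one reads off whether we are mid-spine or at the end. Mid-spine, the next instruction must be the spinal instruction predicted by the bookkeeping list: \texttt{unify\_var} always steps (read mode consumes the head of $\vec\ell^H$, write mode allocates a fresh location), \texttt{unify\_val} in write mode always steps, \texttt{unify\_val} in read mode steps when \texttt{unify} succeeds and otherwise backtracks (Soundness of \texttt{unify} and Lemma~\ref{lem:bt-tot}), and in $\twrite$ with $n>0$ the next instruction is \texttt{set\_val}, which steps. At the end of a tuple spine, $n=0$ triggers \textsc{TWrite}$\step$; at the end of a write spine, $|\vec\ell^H|=\arity(c)$ and totality of the occurs check (Lemma~\ref{lem:occ-tot}) chooses between \textsc{Write}$\step$ and \textsc{Write-BT}/\textsc{Write-}$\bot$ (the latter via Lemma~\ref{lem:bt-tot}); a completed read spine returns to a normal state.

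The main obstacle is exactly the unification-dependent cases (\texttt{get\_val} and read-mode \texttt{unify\_val}): there progress is not syntactic but rests on the totality of runtime unification and the occurs check and on the fact that failed unification always leads to a well-defined backtracking outcome. These are Soundness of \texttt{unify}, Lemma~\ref{lem:occ-tot}, and Backtracking Totality (Lemma~\ref{lem:bt-tot}), all ultimately grounded in the acyclic-heap machinery (Heap Update, Lemma~\ref{lem:heap-update}, and its Heap Reordering subclaim). Once those are available, the remaining cases are routine inversions; the only subtlety beyond that is that distinguishing ``mid-spine'' from ``end of spine'' must be done from the spine invariants rather than from the instruction list alone.
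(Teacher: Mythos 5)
Your proposal matches the paper's own proof, which is only a one-line sketch: ``by cases on $m$ (specifically, cases on $I$)'' citing essentially the same lemmas you invoke (Canonical Forms, Operand Canonicalization/Preservation, End Correctness, Soundness of \texttt{unify}, and Backtracking Totality). Your case analysis is just a more detailed elaboration of that same argument, so the approach is essentially identical.
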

\begin{proof}
  By cases on $m$ (specifically, cases on $I$), and by Lemmas~\ref{lem:scf}, \ref{lem:bt-tot}, \ref{lem:end-corr}, \ref{lem:unify-sound}, \ref{lem:stat-unif-tot}, \ref{lem:oper-pres}, \ref{lem:oper-canon}, and \ref{lem:word-inv}.
\end{proof}
\begin{thm}[Preservation] If $\SE m \ok$ and $m \step m'$ then $\SE m' \ok$.\end{thm}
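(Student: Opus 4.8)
The plan is to prove preservation by a large case analysis on the derivation of $m \step m'$, which is in effect a case analysis on the leading instruction of the current basic block (and, when $m$ is one of the spinal states $\mwrite$, $\mread$, $\twrite$, on the form of the spinal instruction together with the mode). In every case I first invert the governing machine typing rule --- \textsc{Mach}, or one of \textsc{Mach-Read}, \textsc{Mach-Write}, \textsc{Mach-TWrite} --- to extract its components: $\Delta \vdash S : (\Xi;\Psi)$, $\DPE R : \myG$, $\DGE I \ok$, $\Delta;S \vdash T \ok$, $\cdot \vdash (\Delta,\mu) : H$, and, for spinal states, the relevant \textsc{Reads}/\textsc{Writes}/\textsc{TWrites} auxiliary invariant. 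Next I invert the instruction's typing rule, using Canonical Forms (Lemma~\ref{lem:scf}) and Weak Unicity of Heap Value Typing wherever the shape of an instruction rule must be read off from the shape of a runtime value. Finally I reassemble a typing derivation for $m'$. The purely structural instructions are routine: {\tt mov} and {\tt jmp} use Operand Preservation and Operand Canonicalization (Lemmas~\ref{lem:oper-pres}, \ref{lem:oper-canon}), and, when {\tt jmp} targets a code value $\ell^C\ \vec M$, Word Inversion (Lemma~\ref{lem:word-inv}) to peel off the LF arguments, the Substitution lemma (Lemma~\ref{lem:subst}) to instantiate the callee's parameters $\vec x:\vec A$, and Register File Subtyping (Lemma~\ref{lem:reg-sub}) to match the resulting register file type against the caller's; {\tt succeed} cannot step; {\tt proj}, {\tt put\_tuple}, {\tt set\_val} just manipulate a tuple spine via \textsc{TWrites}; and the allocating instructions ({\tt put\_var}, {\tt close}, {\tt push\_bt}, and the final step of a tuple spine) extend the heap with a fresh cell, re-established using the weakening clauses of Lemma~\ref{lem:sweak} and Lemma~\ref{lem:trail-up}(b) together with $\mu$-\textsc{Cons} (for the LF variable introduced by {\tt put\_var}) or $\mu$-\textsc{Skip}.

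The interesting cases are the ones that perform unification. For {\tt get\_val}, the step invokes $\unify(S,T,\ell^H_1,\ell^H_2)$; by the dependent Soundness of \unify lemma (which internally chains Heap Update, Lemma~\ref{lem:heap-update}, and Trail Update, Lemma~\ref{lem:trail-up}) either dynamic unification fails --- in which case the step is \textsc{GetVal-BT} to $\bt(S,T)$ and preservation follows from Backtracking Totality (Lemma~\ref{lem:bt-tot}) --- or it succeeds with result $(\Delta',S',T')$ where $\Delta' = \sub{\sigma}\Delta$, $[\sigma]\Delta \vdash H' : [\sigma]\Psi$, and $\Delta';S' \vdash T' \ok$, and where $\sigma$ is exactly the most-general unifier named in the premise of \textsc{GetVal-S}. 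That same premise already delivers $\sub{\sigma}\Delta;[\sigma]\myG \vdash [\sigma]I \ok$, and applying the Substitution lemma to the register file and trail judgements closes the case. The spinal transitions are handled analogously: entering a write spine ({\tt put\_str}, or {\tt get\_str} landing on a free variable) allocates a fresh $\free{}[x:a]$, seeds a \textsc{Writes} invariant from the \textsc{Putstr}/\textsc{Getstr} premise, and extends $\Delta,\mu,\Psi$; {\tt unify\_var}/{\tt unify\_val} in write mode consume one $\Pi$ from the spine type (with {\tt unify\_var} extending $\Delta$); the terminating \textsc{Write}$\step$ assembles $c\args{\vec\ell^H}$ at $\ell^H$, at which point the completed \textsc{Writes} invariant is a unification problem $x' \sqcap c\ \vec M\ \vec M'$ whose static unifier (or failure) --- matched to the dynamic occurs-check side condition $\ell^H \notin_S \ell^H_i$ --- tells us how to substitute across the state, using the $\sqcap\sigma$ / $\sqcap\bot$ conclusion of the spine typing. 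Read mode is similar, except that {\tt unify\_val} in read mode itself calls $\unify$, so the reasoning above recurs, and a head-constructor mismatch steps by \textsc{GetStr-BT}, again discharged by Backtracking Totality.

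The main obstacle is the coordination between the static substitution $\sigma$ produced at compile time and the heap surgery performed by the runtime: preservation only goes through because the dependent Soundness of \unify lemma certifies that the runtime's new heap is typed by precisely $[\sigma]\Psi$ for the very $\sigma$ appearing in the instruction's typing rule, and that $\Delta$ is trimmed to $\sub{\sigma}\Delta$ in lockstep --- this in turn rests on Heap Update's reconstruction of an acyclic topological ordering (the Heap Reordering subclaim). The second subtlety, and the reason the type system carries the $\bot$-variants \textsc{Getval-F} and $\sqcap\bot$ at all, is the predicate-call case: after a {\tt jmp} into a clause we substitute the caller-supplied terms $\vec M$ into the callee's body, and a {\tt get\_val} there that originally typechecked with some unifier may, post-substitution, either unify with a different unifier or fail outright; both are absorbed by the Substitution lemma (Lemma~\ref{lem:subst}, clause for basic blocks), whose unification cases lean on Static Unification Correctness (Lemma~\ref{lem:static-unif}) and the Unification Lemma of Doom (Lemma~\ref{lem:doom}). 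Once those lemmas are available the rest is bookkeeping, but the persistent care in every case is keeping the invariants $\cdot \vdash (\Delta,\mu) : H$ and $\Delta;S \vdash T \ok$ intact across each heap mutation --- allocation, binding during unification, and unwinding during backtracking.
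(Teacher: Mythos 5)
Your proposal is correct and follows essentially the same route as the paper's proof: a case analysis on $m \step m'$, with the backtracking cases discharged uniformly by Backtracking Totality, the unification cases by the dependent Soundness of \unify{} (resting on Heap Update and Trail Update), and the {\tt jmp}/spinal cases by Substitution, Register File Subtyping, Canonical Forms, and the operand/word lemmas. Your closing remarks on the role of \textsc{Getval-F} and $\sqcap\bot$ under post-call substitution match exactly the paper's motivation for those rules, so there is nothing further to add.
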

\begin{proof}
  By cases on $m \step m'$, using the assumptions of $\SE m\ok$.
First consider simulatenously all the cases that end in backtracking. 
Those cases hold by Lemma~\ref{lem:bt-tot}.
Thus it suffices to show the cases that do not backtrack.

\case{\infer[\textsc{Jmp-}\ell^H]{(T,\Delta,S,R,\jmp op; I) \step (T,\Delta, S,\hup{R}{r_1}{w},[\vec M\vec M'/\vec x\vec x']I')}
{R \ent op \hook \ell^H\ \vec M & C(\ell^H) = \close(w, \ell^C\ \vec M')}}
By Lemma~\ref{lem:oper-pres}, $\DPE \ell^H\ \vec M : \neg \Gamma'$. By typing assumption, $\ell^C\ \vec M : \Pi \vec x':\vec A'.\neg \hup{\myG''}{r_1}{\tau}$
so by Lemma~\ref{lem:word-inv}, $\DPE \ell^C : \Pi \vec{x}\vec{x'}:\vec A\vec A'.\neg \myG'''$ where $[\vec M'/\vec x']\myG''' = \myG''$. By Lemma~\ref{lem:scf},
$C(\ell^C) = \code[\vec x' \vec x: \vec A' \vec A](\lambda \vec x' \vec x : \vec A' \vec A.I')$ and $\vec x' \vec x:\vec A' \vec A;\heap{r_1 : \tau} \ent I' \ok$.
By Lemmas~\ref{lem:sweak} and \ref{lem:subst}, $\Delta;[\vec M \vec M'/\vec x\vec x']\thup{\Gamma'}{r}{\tau} \ent [\vec M/\vec x]I' \ok$ where $\DPE w : \tau=[\vec M'\vec M/\vec x'\vec x]\tau$. By assumption, $\DE \myG' \leq \myG$, so by Lemma~\ref{lem:reg-sub}, $\Delta;\thup{\Gamma}{r_1}{\tau} \ent [\vec M'\vec M/\vec x'\vec x]I' \ok$. By assumption $w \canon$ so  $\DPE \hup{R}{r_1}{w} : \thup{\Gamma}{r_1}{\tau},$ then $\SE m' \ok$.

\case{\infer[\textsc{Jmp-}\ell^C]{(T,\Delta, S,R, \jmp op; I) \step (T,\Delta, S,R,[\vec M/\vec x]I')}
  {R \ent op \hook \ell^C\ \vec M & C(\ell^C) = \code[\vec x:\vec A.\Gamma'](\lambda \vec x : \vec A.I')}}
By Lemma~\ref{lem:oper-pres}, $\DPE \ell^H\ \vec M : \neg \Gamma'$. By assumption, $(\vec x : \vec a);\Gamma' \ent I' \ok$  and by Lemmas~\ref{lem:sweak} and \ref{lem:subst}, $\Delta;[\vec M/\vec x]\Gamma' \ent [\vec M/\vec x]I' \ok$.
Since $\DE [\vec M/\vec x]\Gamma' \leq \Gamma$, by Lemma~\ref{lem:reg-sub}, $\DGE [\vec M/\vec x]I' \ok$, so $\SE m' \ok$.

\case{\infer[\textsc{Close}]{(T,\Delta, S,R,\close r_d, r_s, \ell^C\ \vec M; I) \step (T,\Delta, \hext{S}{\ell^H}{\close(w,\ell^C\ \vec M)},\hup{R}{r}{\ell^H},I)}
{R(r_s) = w}}
By assumption and inversion on $\myG(r) = \tau$, we have $\DPE \close(w,\ell^C\ \vec M) : (\Pi \vec x:\vec A.~\neg \Gamma'),$ so
$\DE \hext{S}{\ell^H}{\close[\Gamma'](w,\lambda \vec x:\vec A.I')} : \thext{\Psi}{\ell^H}{\Pi \vec x:\vec A.~\neg \Gamma'}$ and
$\Delta;\thext{\Psi}{\ell^H}{\Pi \vec x:\vec A.~\neg \Gamma'} \ent \hup{R}{r}{\ell^H} : \thup{\Gamma}{r}{\Pi \vec x:\vec A.~\neg \Gamma'}$ and thus, $\SE m' \ok$.

\case{\infer[\textsc{PushBT}\step]{(T,\Delta, S,R,\branch r, \ell^C\ \vec M; I) \step ((\epsilon,w,\ell^C\ \vec M)::T,\Delta, S,R,I)}
{R(r) = w}}
By $\epsilon$ case of $\unwind$, we have $\unwind(\Delta,S,\epsilon) = (\Delta;S)$ so let $\Delta'= \Delta, S'=S,\mu'=\mu$. By Lemma~\ref{lem:scf} and inversion, $\Delta';\Psi'\ent w:\tau$ and $w \canon$ and the rest holds by assumption:
\[\infer[\textsc{Trail-Cons}]{\Delta;S\ent (\epsilon,w,\ell^C\ \vec M)::T \ok}
{\deduce{\Delta';\Psi'\ent w:\tau\hskip 0.1in\Delta';\Psi'\ent \ell^C\ \vec M : \neg\heap{r_1:\tau}\hskip 0.1in w\canon}
{\unwind(\Delta,S,\epsilon)=(\Delta';S')&\Delta';S'\ent T\ok&\SE(\Delta';\mu'):H' &\SE H':\Psi'}}\]
%{\deduce{\Delta';\Psi'\ent w : \tau  \hskip 0.1in \Delta';\Psi'\ent \ell_C\ \vec M : \neg\heap{r_1:\tau \hskip 0.1in w \canon}}
%{\unwind(\Delta,S,\epsilon) = (\Delta';S')  & \Delta';S'\ent T \ok  & \SE (\Delta';\mu'):H'  & \SE H' : \Psi'}}\]
so $\SE m' \ok$ as well by the assumption $\DGE I \ok$.

\case{\infer[\textsc{GetVal}\step]{(T,\Delta, S,R, \getval{r_1}{r_2}; I) \step (T',\Delta', S',R,I)}
{R(r_1) = w_1 & R(r_2) = w_2 & \unify(\Delta,S,T,w_1,w_2)=(\Delta',S',T')}}

By assumption $\DE M_1 \sqcap M_2 = \sigma$ where $\DPE w_1 : \sing(M_1:a)$
and $\DPE w_2 : \sing(M_2 : a)$. By Lemma~\ref{lem:unify-sound}, $\Delta' = [\sigma]\Delta$, $\DE S' : [\sigma]\Psi$ and $\SE (\Delta', [\sigma]\mu) : H'$ and $\Delta';S'\ent T' \ok$. By
substitution, $[\sigma]\Delta;[\sigma]\Psi \ent R : [\sigma]\Gamma$ and
by assumption $[\sigma]\Delta;[\sigma]\Gamma \ent I \ok$ so $\SE m' \ok$.

\case{\infer[\textsc{Mov}\step]{(T,\Delta, S,R, \mov{r_d}{op}; I) \step (T,\Delta, S,\hup{R}{r_d}{w},I)}
{R\ent op \eval w}}
By Lemma~\ref{lem:oper-pres}, $\DPE w : \tau$ and $w \canon$ so  $\DPE \hup{R}{r_d}{w} : \thup{\Gamma}{r_d}{\tau}$ and $\SE m' \ok$.

\case{{\small\infer[\textsc{PutVar}\step]{(T,\Delta, S,R, \dputvar{r}{x:a.}  I) \step(T,(\Delta,x:a),\hext{S}{\ell^H}{\free{}[x:a]},\hups{R}{r \hook \ell^H},I)}{}}}

Let $\mu' = (x@\ell^H:a,::\mu),$ then have $\SE ((x:a,\Delta), ((x@\ell^H:a)::\mu)) : \hup{H}{\ell^H}{\free{}[x:a]}$ by rule $\mu{}\textsc{-Cons}$.
%\[\infer{\SE ((x:a,\Delta), (x@\ell_H,\mu)) : \hup{H}{\ell_H}{\free{}[x:a]}}
%{\SE (\Delta,\mu) : H}\]
We also have $\Delta,x:a \ent \hext{S}{\ell^H}{\free{}[x:a]} : \Xi;\thext{\Psi}{\ell^H}{\sing(x:a)}$ and $\Delta,x:a;\thext{\Psi}{\ell^H}{\sing(x:a)} \ent \hup{R}{r}{\ell^H} : \thup{\Gamma}{r}{\sing(x:a)}$. By Lemma~\ref{lem:trail-up}, $\Delta,x:a;\hext{S}{\ell^H}{\free{}[x:a]} \ent T \ok$ which together with the
previous statements gives us $\SE m' \ok$.

\case{\infer[\textsc{GetStr}\step\textsc{W}]{(T,\Delta, S,R,\getstr {c}{r}; I) \step
\mwrite(T,\Delta, S,R,I,c,\ell'^H,\epsilon)}
{R(r) = \ell^H & \edn(S,\ell^H) = \ell'^H & S(\ell'^H) = \free{}[x:a]}}
By assumption, suffices to show $\DPE (\vec \ell^H, r, c) \writes J$. Since
$\vec \ell^H$ is empty, we need only know $\Sigma(c) = \vec a_2 \to a$ where
$J$ accepts $\vec a_2$ which is true by assumption. We also
need $S(\ell'^H) = \free{}[x:a]$ which is true by case.

\case{\infer[\textsc{GetStr}\step\textsc{R}]{(T,\Delta, S,R,\getstr{c}{r}; I) \step
\mread(T,\Delta, S,R,I,\vec w)}
{R(r) = \ell^H & \edn(S,\ell^H) = \ell'^H & S(\ell'^H) = \cwsh}}
Suffices to show $\vec \ell^H \reads J$, and in particular $\DE M_1 \sqcap M_2 = \sigma$. Since $M_1 = x, M_2 = c\ \vec x$ and $x \notin \vec x$ (because $x\fresh$), the unification succeeds with $x \sqcap c\ \vec x = [c\ \vec x/x]$.
%\[\infer{x \sqcap c\ \vec x = [c\ \vec x/x]}{x \notin \vec x}\]

\case{{\footnotesize\infer[\textsc{PutStr}\step]{(T,\Delta, S,R,\putstr{c}{r}; I) \step\mwrite(T,(x:a,\Delta),
      \hext{S}{\ell^H}{\free{}[x:a]},\hup{R}{r}{\ell^H},I,c,\ell^H,\epsilon)}{\Sigma(c) = \vec a \to a}}}
Similar to the $\mwrite$ case of {\tt getstr}. Define $\mu' = (x@\ell^H:a),\mu,$
then $\SE (((x@\ell^H:a)::\mu),(\Delta,x:a)) : \hext{S}{\ell^H}{\free{}[x:a]}$.
Now $S(\ell^H) = \free{}[x:a]$ as needed, and by assumption, $J$ accepts $\vec a_2$.

\vspace{0.1in}
\case{\infer[\textsc{PutTuple}\step]{(T,\Delta, S,R,\puttuple{r, n}; I) \step
\twrite(T,\Delta, S,R,I,r,n,\epsilon)}{}}
Suffice to show $\DPE (\vec w, r,n) \writes J$ where in this case $\vec w$ is empty and $J = \vec \tau \to \heap{r : \cross{\vec \tau}}$. Since $n = |\tau|,$ we have $\DPE (\epsilon, r, n) \writes \vec \tau \to \heap{r:\cross{\vec \tau}}$.
%\[\infer{\DPE (\args{}, r, n) \writes \vec \tau \to \heap{r:\cross{\vec \tau}}}
%        {|\vec\tau| = n}\]

\case{\infer[\textsc{Proj}\step]{(T,\Delta, S,R,\proj r_d, r_s, i; I) \step (T,\Delta, S,\hup{R}{r_d}{w_i},I)}
{R(r_s) \ell^H & S(\ell^H) = \args{w_1,\ldots,w_i,\ldots,w_n}}}
By inversion and Lemma~\ref{lem:scf}, $R(r_s) = \ell^H$ and $\DPE \ell^H : \cross{\vec \tau}$ so by Lemma~\ref{lem:scf}, $S(\ell^H) = \args{w_1, \ldots,w_i,\ldots,w_n}$ and
$\DPE w_i : \tau_i$. Therefore $\DPE \hup{R}{r_d}{w_i} : \thup{\Gamma}{r_d}{\tau_i}$ so $\SE m \ok$.

\case{\infer[\textsc{SetVal}\step]{\twrite(T,\Delta, S,R, \setval r_s; I,r_d,n,\vec w) \step
\twrite(T,\Delta, S,R,I,r_d,n-1,(\vec w::w))}{R(r_s) = w & n > 0}}
Consider $\vec \tau_1, \vec \tau_2$ from the derivation $\SE (\vec w, r, n) \writes J$. Observe by assumption $\vec \tau_2$ has form $\tau,\vec \tau_2'$.
Now let $\vec \tau_1' = \vec \tau_1, \tau$. This gives us $\DPE (\vec w::w) : \vec \tau_1'$ and $|\vec \tau_2'| = n-1$ and $\vec \tau_1'\vec\tau_2' = \vec\tau_1\vec\tau_2$ so $\DPE ((\vec w::w),r,n-1) \writes (\tau_2' \to \heap{r:\cross{\vec\tau_1'\vec\tau_2'}})$. The rest is by assumption.

\vspace{0.1in}
\case{\infer[\textsc{TWrite}\step]{\twrite(T,\Delta, S,R,I,r,0,\vec w) \step (T,\Delta, \hext{S}{\ell^H}{\args{\vec \ell^H}},\hup{R}{r}{\ell^H},I)}{}}
Since $n=0,$ the instruction typing derivation has form $\DGE I : \heap{r:\cross{\vec\tau}}$
where $\DPE \vec \ell^H : \vec \tau$ (from the derivation $\DPE(\vec \ell^H,r,0) \writes J$). This derivation must contain $\Delta;\thup{\Gamma}{r}{\cross{\vec \tau}} \ent I \ok$. We also have $\DPE \args{\vec w} : \cross{\vec \tau}$
%\[\infer{\DPE \args{\vec w} : \cross{\vec \tau}}
%{\DPE \vec w : \vec \tau}\]
so $\DE \hext{S}{\ell^H}{\args{\vec w}} : (\Xi;\thext{\Psi}{\ell^H}{\cross{\vec\tau}})$ and $\Delta;\thext{\Psi}{\ell^H}{\cross{\vec\tau}} \ent \hup{R}{r}{\ell^H} : \thup{\Gamma}{r}{\cross{\vec \tau}}$ so $\SE m' \ok$.

\vspace{0.1in}
\case{\infer[\textsc{Read}\step]{\mread(T,\Delta, S,R,I,\epsilon) \step (T,\Delta, S,R,I)}{}}
Then $J = c\ \vec M \sqcap c\ \vec M$ which unifies under the empty substitution, so by assumption, $\DGE I \ok$ so $\SE m' \ok$.

\case{
\infer[\textsc{Write}\step]{\mwrite(T,\Delta, S,R,I,\ell^H,c,\vec \ell^H) \step
(T',[c\ \vec M/x]\Delta,  \hup{S}{\ell^H}{c\args{\vec \ell^H}},R,[c\ \vec M/x]I)}
  {\deduce{|{\vec \ell^H}| = \arity(c) \hskip0.1in \uptrail(T,x@\ell'^H:a) = T'\hskip0.1in\edn(S,\ell^H) = \ell'^H}
    {S(\ell'^H) = \free{}[x : a] & \ell'^H \notin c\args{\vec \ell^H}}}}
%\infer{\mwrite(T,\Delta, S,R,I,\ell_H,c,\vec \ell_H) \step
%(T',[c\ \vec M/x]\Delta,  \hup{S}{\ell_H}{c\args{\vec \ell_H}},R,[c\ \vec M/x]I)}
%{\deduce{ |{\vec \ell_H}| = \arity(c) \hskip 0.1in \uptrail(T,x:a@\ell'_H) = T'}
%  {\edn(S,\ell_H) = \ell'_H & S(\ell'_H) = \free{}[x : a] & \ell'_H \notin c\args{\vec \ell_H}}}
%}
By Lemma~\ref{lem:trail-up}, $[M/x]\Delta,\hup{S}{\ell'^H}{c\args{\vec \ell^H}}\ent T' \ok$.
Then by assumption, $\DPE \vec \ell^H : \sing(\vec M : \vec a)$ and $\Sigma(c) = \vec a \to a$ so $\DPE\func{c}{\vec \ell^H} :\sing(c\ \vec M : a)$. By assumption and Lemma~\ref{lem:end-corr}, $S(\ell'^H) = \free{}[x : a]$. The typing derivation for $J$ has the form
\[\infer[\sqcap\sigma]{\DGE I:_s (x \sqcap c\ \vec M)}{\DE x \sqcap c\ \vec M = \sigma & [\sigma]\Delta;[\sigma]\Gamma \ent [\sigma]I \ok}\]
By inversion, $\sigma = [c\ \vec M/x]$. Therefore $[c\ \vec M/x]\Delta;[c\ \vec M/x]\Gamma \ent [c\ \vec M/x]I \ok$ so
it suffices to show $\DE \hup{S}{\ell^H}{c\args{\vec \ell^H}} : (\Xi;[c\ \vec M/x]\Psi)$ which it does by Lemma~\ref{lem:heap-update}.

\case{\infer[\textsc{UnifyVar}\step\textsc{R}]{\mread(T,\Delta, S,R,(\unifyvar r, x:a. I), \vec \ell^H) \step
\mread(T,\Delta, S,R,I,(\vec \ell^H::w))}{R(r) = w}}
By inversion, $\Gamma(r) = \sing(M:a)$ so by Lemma~\ref{lem:subst}, $\DGE [M/x]I :_s [M/x]J$ and $\SE m' \ok$.

\case{\infer[\textsc{UnifyVal}\step\textsc{R}]{\mread(T,\Delta, S,R,(\unifyval r, x:a. I), (\ell^H::\vec \ell^H)) \step
\mread(T',\Delta', S',R,I, \vec \ell^H)}
{R(r) = \ell'^H & \unify(\Delta,S,T,\ell^H,\ell'^H) = (\Delta',S',T')}}
By assumption $\DE x \sqcap M = \sigma$ for some $\sigma = [M/x]$. By Lemma~\ref{lem:unify-sound}, $\Delta' \ent S' : (\Xi;\Psi')$ and $\Delta';S'\ent T' \ok$.
By Lemma~\ref{lem:subst}, $\DGE [M/x]I : [M/x]J$.
We now also have $\DE \vec \ell^H \reads \Pi \vec x : \vec A. (c\ \vec M\ M\ \vec x' \sqcap c\ \vec M\ M\ \vec M'')$ as desired, so $\SE m' \ok$.

\vspace{0.1in}
\case{
%\infer{\mwrite(T,\Delta,S,R,\unifyvar[a] r_s; I,c,\ell_d,\vec \ell_H)\step\mwrite(T,(x:a,\Delta),\hext{S}{\ell_H}{\free{}[x:a]},\hup{R}{r_s}{\ell_H},I,c,\ell_d,(\vec\ell_H,\ell_H))}{}
%}
\infer[\textsc{UnifyVar}\step\textsc{W}]{\deduce{\mwrite(T,(x:a,\Delta), \hext{S}{\ell^H}{\free{}[x:a]},\hup{R}{r}{\ell^H},I,c,\ell^H_d,(\vec \ell^H::\ell^H))}{\mwrite(T,\Delta, S,R,(\unifyvar r, x:a. I),c,\ell^H_d,\vec \ell^H) \step}
}{}}
By Lemma~\ref{lem:trail-up}, $\Delta,x:a;\hext{S}{\ell^H}{\free{}[x:a]}\ent T \ok$. Let $\mu' = ((x@\ell^H:a),::\mu)$. Now $\SE ((\Delta,x:a),((x@\ell^H:a)::\mu)) : \hext{H}{\ell^H}{\free{}[x:a]}$.
Take $\vec a_1, \vec a_2$ from the derivation $\DPE (\vec \ell^H, \ell^H_d, c) \writes \Pi \vec x:\vec \tau.(M_1\sqcap M_2)$. Note $\vec a_2$ has form $a,\vec a_2'$.
Let $\vec a_1' = (\vec a_1::a')$. Now $(\Delta, x:a);\thext{\Psi}{\ell^H}{\free{}[x : a]} \ent (\vec \ell^H, \ell^H) : \vec a_1'$
so $\SE m' \ok$.

\case{{\small\infer[\textsc{UnivyVal}\step\textsc{W}]{\mwrite(T,\Delta, S,R,(\unifyval r, x:a. I),c,\ell^H_d,\vec \ell^H) \step
\mwrite(T,\Delta, S,R,I,c,\ell^H_d,(\vec \ell^H::\ell^H))}{R(r) = \ell^H}}}
Take $\vec a_1, \vec a_2$ from the derivation $\DPE (\vec \ell^H, \ell^H_d, c) \writes \Pi \vec x:\vec \tau.(M_1\sqcap M_2)$. Note $\vec a_2$ has form $a',\vec a_2'$.
Let $\vec a_1' = (\vec a_1,a')$. Now $\DPE (\vec \ell^H, \ell^H) : \vec a_1'$
so $\SE m' \ok$.
\end{proof}

\subsection{Operands, {\tt mov} and Tail Calls}
\label{sec:tco}
Now that we have proven TWAM sound, we illustrate its optimization potential
by describing our implementation of tail-call optimization (TCO), a common and
performance-critical optimization. We take as an example the predicate $f$:
\begin{verbatim}
f : t -> prop.
f(X) :- g(X).
\end{verbatim}
where the definition of $g$ is irrelevant. In our LF translation, this predicate has
one proof term constructor: \verb|F-X:|$\Pi$\verb| X:t.|$\Pi$\verb|D:g X.f X|.
If we compile this program na\"{i}vely, it would produce the following code:

\begin{exmp}[Before Tail-Call Optimization]
\label{ex:before-tco}~\\
\verb+f_main +$\mapsto$\verb+ code[X:nat. {r1:+$\sing(X)$\verb+, r2: +$\Pi D\colon{}f\ X.\ \neg$\verb+{}}(+\\
\verb+  +$\lambda X\colon{}nat.$\\
\verb+  mov f_tail, r2;+\\
\verb+  jmp (g X)+\\
\verb+),+

\verb+# The following code value should not be necessary+\\
\verb+f_tail +$\mapsto$\verb+ code[X:nat, D:g X. {r1:+$\sing(X)$\verb+, r2: +$\Pi D\colon{}f\ X.\ \neg$\verb+{}}(+\\
\verb+  +$\lambda X\colon{}nat, D\colon{}g\ X.$\\
\verb+  jmp (r2 (F-X D))+\\
\verb+)+
\end{exmp}
The problem is that to find a proof of \verb|f X| we must first find a proof \verb|D : g X| and then apply \verb|F-X X D|.
Because \verb|F-X| uses \verb|D| we must apply it after \verb|g| has succeeded,
which means we must apply it in \verb|g|'s success continuation.
This is a problem: \verb|g| is supposed to be a tail call, so its success continuation should be
the one passed to \verb|f|.

Since LF proofs are completely unnecessary at runtime, our ideal solution would be no-op of sorts:
an instruction that allows us to perform simple proof steps in LF, but which can
trivially be deleted after typechecking to avoid any runtime cost. This no-op is easily expressed
as a special case of the \verb+mov+ instruction. The following
\verb+mov+ instruction takes the success continuation and pre-composes an LF term that converts
proofs of \verb|g X| into proofs of \verb|f X| as needed by the continuation.

\begin{exmp}[After Tail-Call Optimization]\label{ex:aften-tco}~\\
\verb+f_main +$\mapsto$\verb+ code[X:t.{r1 :+$\Pi$\verb+D : f X.+$\neg$\verb+{}, r2 : +$\sing($\verb+X+$)$\verb+}](+\\
\verb+  +$\lambda$\verb+X:t.+\\
\verb+  mov r1 (+$\lambda$\verb+ D: g X. r1 (f/X X D));+\\
\verb+  jmp (g X)+\\
\verb+)+
\end{exmp}
\section{Implementation}
The full source code for our compiler implementation is available at \\\verb+http://www.cs.cmu.edu/~bbohrer/pub/twam.zip+, along with
a small test suite. The compiler consists of approximately 5000 lines of Standard ML, and has the following phases:
\begin{itemize}
\item Parsing and Elaboration: T-Prolog is parsed with {\tt ML-Lex} and {\tt ML-Yacc}, then check T-Prolog types.
\item Flattening: Terms with nested constructors are flattened.
\item Main Translation: Generates a variant of TWAM where failure continuations are stored inline.
\item Hoisting: Lifts failure continuations to the top level.
\item Certification: Runs the TWAM typechecker on hoisted code. On failure, signals a compile error.
\item Type Erasure: Trivial conversion from TWAM code to SWAM code.
\item Rechecking: As a sanity check, type-check the SWAM code.If this fails, signals a compiler error.
\item Interpretation: The SWAM code is interpreted, following the operational semantics.
\end{itemize}
The primary goal of this implementation, which it achieved, was to validate the design of TWAM,
especially to show that it is expressive enough to support an interesting source language.
For this reason, we intentionally omitted most optimizations except tail-call optimization, which we deemed too essential to ignore.
That being said, there are a number of avenues available to build a compiler with more competitive performance:
\begin{itemize}
\item Compile to machine code instead of interpreting TWAM code.
\item Replace our trivial register allocator with an efficient (e.g. graph-coloring-based) one.
\item Implement existing WAM optimizations (e.g. optimized switch statements).
\item 
Investigate the cost of continuation-passing style.
Implement optimizations to reduce the number of environments allocated, or develop a stack-based system should that be insufficient.
\item Reduce the use of the occurs check by adding a mode system.
\end{itemize}
Among these, the first two options can be implemented with no changes to the instruction set or type system, 
and the third can be implemented by adding new instructions without modifying the existing instructions.
The final two optimizations are more fundamental in nature, requiring changes to existing instructions or changes that affect the entire type system.

\section{Conclusion}

We have designed and implemented a typed compiler for T-Prolog by first creating a \textit{certifying abstract machine} for logic programs, called the TWAM.
Our implementation demonstrates that the TWAM is expressive enough to use as a compilation target for real programs,
and that the implementation burden of TWAM is acceptable.
This implementation result supports our primary contributions: the development of the TWAM and its metatheory.
The metatheory shows that TWAM typechecking suffices to enable certifying compilation in theory, and our compiler shows it in practice.

Our work differs from previous work on Prolog compilation because 
we are the first to take a typed compilation approach.
We have also produced a working compiler with a formal guarantee, whereas previous efforts stopped before implementing a compiler
\cite{Russinoff92averified,Borger95thewam---definition,Beierle92correctnessproof,Pusch96verificationof}.
Several optimizing compilers have been verified in proof assistants \cite{Leroy06formalcertification,cakeml} and some of them use
proof-producing compilation \cite{JFP14}, but these do not address logic programming languages. 
Typed compilation is from our perspective an instance of certifying compilation\cite{necula1998design},
and proof-passing style specifically is a variant that allows us reason about semantic preservation.
%\todo{should cite Flint or another typed compiler or something}

Our type system relies on the logical framework LF \cite{Harper93aframework}, and
is inspired by other languages with dependent type systems \cite{Xi98dependenttypes}, though
the languages differ greatly. Our formalisms are inspired by 
typed assembly languages, but we make major changes to provide stronger guarantees and support
logic programming \cite{Morrisett:1999:SFT:319301.319345}.

Future work includes developing a production-quality optimizing
compiler and runtime, including any changes to the core TWAM language
to enable greater efficiency. We also wish to extend our abstract
machine to support logic programming languages with advanced type
system features and investigate whether certifying abstract machines
can provide equally strong guarantees for non-logic programming languages.

\bibliographystyle{ACM-Reference-Format}
\bibliography{twam-long}

%%% -*-BibTeX-*-
%%% Do NOT edit. File created by BibTeX with style
%%% ACM-Reference-Format-Journals [18-Jan-2012].

\begin{thebibliography}{00}

%%% ====================================================================
%%% NOTE TO THE USER: you can override these defaults by providing
%%% customized versions of any of these macros before the \bibliography
%%% command.  Each of them MUST provide its own final punctuation,
%%% except for \shownote{}, \showDOI{}, and \showURL{}.  The latter two
%%% do not use final punctuation, in order to avoid confusing it with
%%% the Web address.
%%%
%%% To suppress output of a particular field, define its macro to expand
%%% to an empty string, or better, \unskip, like this:
%%%
%%% \newcommand{\showDOI}[1]{\unskip}   % LaTeX syntax
%%%
%%% \def \showDOI #1{\unskip}           % plain TeX syntax
%%%
%%% ====================================================================

\ifx \showCODEN    \undefined \def \showCODEN     #1{\unskip}     \fi
\ifx \showDOI      \undefined \def \showDOI       #1{{\tt DOI:}\penalty0{#1}\ }
  \fi
\ifx \showISBNx    \undefined \def \showISBNx     #1{\unskip}     \fi
\ifx \showISBNxiii \undefined \def \showISBNxiii  #1{\unskip}     \fi
\ifx \showISSN     \undefined \def \showISSN      #1{\unskip}     \fi
\ifx \showLCCN     \undefined \def \showLCCN      #1{\unskip}     \fi
\ifx \shownote     \undefined \def \shownote      #1{#1}          \fi
\ifx \showarticletitle \undefined \def \showarticletitle #1{#1}   \fi
\ifx \showURL      \undefined \def \showURL       #1{#1}          \fi
% The following commands are used for tagged output and should be
% invisible to TeX
\providecommand\bibfield[2]{#2}
\providecommand\bibinfo[2]{#2}
\providecommand\natexlab[1]{#1}
\providecommand\showeprint[2][]{arXiv:#2}

\bibitem[\protect\citeauthoryear{A\"it-Kaci}{A\"it-Kaci}{1991}]%
        {ait1999warren}
\bibfield{author}{\bibinfo{person}{Hassan A\"it-Kaci}.}
  \bibinfo{year}{1991}\natexlab{}.
\newblock \bibinfo{booktitle}{{\em Warren's abstract machine: A tutorial
  reconstruction}}.
\newblock \bibinfo{publisher}{MIT Press}.
\newblock


\bibitem[\protect\citeauthoryear{Beierle and B\"{o}rger}{Beierle and
  B\"{o}rger}{1992}]%
        {Beierle92correctnessproof}
\bibfield{author}{\bibinfo{person}{Christoph Beierle} {and}
  \bibinfo{person}{Egon B\"{o}rger}.} \bibinfo{year}{1992}\natexlab{}.
\newblock \showarticletitle{Correctness proof for the {WAM} with types}.
\newblock \bibinfo{journal}{{\em Computer Science Logic, LNCS volume 626\/}}
  (\bibinfo{year}{1992}), \bibinfo{pages}{15--34}.
\newblock


\bibitem[\protect\citeauthoryear{Bohrer and Crary}{Bohrer and Crary}{2016}]%
        {bbohrer2016:TR}
\bibfield{author}{\bibinfo{person}{Brandon Bohrer} {and} \bibinfo{person}{Karl
  Crary}.} \bibinfo{year}{2016}\natexlab{}.
\newblock \bibinfo{title}{A Proof-Producing Verified {Prolog} Compiler}.
\newblock \bibinfo{howpublished}{Tech. Rep. CMU-CS-16-104.
  \url{http://www.cs.cmu.edu/~bbohrer/pub/twam-iclp2016-long.pdf}}.
  (\bibinfo{year}{2016}).
\newblock
\newblock
\shownote{Draft.}


\bibitem[\protect\citeauthoryear{B\"{o}rger and Rosenzweig}{B\"{o}rger and
  Rosenzweig}{1994}]%
        {Borger95thewam---definition}
\bibfield{author}{\bibinfo{person}{Egon B\"{o}rger} {and} \bibinfo{person}{D.
  Rosenzweig}.} \bibinfo{year}{1994}\natexlab{}.
\newblock \showarticletitle{The {WAM}---Definition and Compiler Correctness}.
\newblock \bibinfo{journal}{{\em Logic Programming: Formal Methods and
  Practical Applications\/}} (\bibinfo{year}{1994}).
\newblock


\bibitem[\protect\citeauthoryear{Cervesato}{Cervesato}{}]%
        {Cervesato_proof-theoreticfoundation}
\bibfield{author}{\bibinfo{person}{Iliano Cervesato}.}
\newblock \bibinfo{journal}{{\em Proof-Theoretic Foundation of Compilation in
  Logic Programming Languages\/}} (\bibinfo{year}{????}),
  \bibinfo{pages}{115--129}.
\newblock


\bibitem[\protect\citeauthoryear{Crary}{Crary}{2003}]%
        {Crary:2003:TFT:604131.604149}
\bibfield{author}{\bibinfo{person}{Karl Crary}.}
  \bibinfo{year}{2003}\natexlab{}.
\newblock \showarticletitle{Toward a Foundational Typed Assembly Language}.
\newblock \bibinfo{journal}{{\em POPL '03\/}} (\bibinfo{year}{2003}),
  \bibinfo{pages}{198--212}.
\newblock
\showISBNx{1-58113-628-5}
\showDOI{%
\url{https://doi.org/10.1145/604131.604149}}


\bibitem[\protect\citeauthoryear{Crary and Sarkar}{Crary and Sarkar}{2008}]%
        {Crary:2008:FCC:1352582.1352584}
\bibfield{author}{\bibinfo{person}{Karl Crary} {and} \bibinfo{person}{Susmit
  Sarkar}.} \bibinfo{year}{2008}\natexlab{}.
\newblock \showarticletitle{Foundational Certified Code in the {Twelf}
  Metalogical Framework}.
\newblock \bibinfo{journal}{{\em ACM Trans. Comput. Logic\/}}
  \bibinfo{volume}{9}, \bibinfo{number}{3}, Article \bibinfo{articleno}{16}
  (\bibinfo{date}{June} \bibinfo{year}{2008}), \bibinfo{numpages}{26}~pages.
\newblock
\showISSN{1529-3785}
\showDOI{%
\url{https://doi.org/10.1145/1352582.1352584}}


\bibitem[\protect\citeauthoryear{Crary and Vanderwaart}{Crary and
  Vanderwaart}{2002}]%
        {Crary02anexpressive}
\bibfield{author}{\bibinfo{person}{Karl Crary} {and} \bibinfo{person}{Joseph~C.
  Vanderwaart}.} \bibinfo{year}{2002}\natexlab{}.
\newblock \showarticletitle{An expressive, scalable type theory for certified
  code}.
\newblock \bibinfo{journal}{{\em ICFP\/}} (\bibinfo{year}{2002}),
  \bibinfo{pages}{191--205}.
\newblock


\bibitem[\protect\citeauthoryear{Elliott and Pfenning}{Elliott and
  Pfenning}{1991}]%
        {Elliott91asemi-functional}
\bibfield{author}{\bibinfo{person}{Conal Elliott} {and} \bibinfo{person}{Frank
  Pfenning}.} \bibinfo{year}{1991}\natexlab{}.
\newblock \showarticletitle{A Semi-Functional Implementation of a Higher-Order
  Logic Programming Language}. In \bibinfo{booktitle}{{\em Topics in Advanced
  Language Implementation}}. \bibinfo{publisher}{MIT Press},
  \bibinfo{pages}{289--325}.
\newblock


\bibitem[\protect\citeauthoryear{Harper, Honsell, and Plotkin}{Harper
  et~al\mbox{.}}{1993}]%
        {Harper93aframework}
\bibfield{author}{\bibinfo{person}{Robert Harper}, \bibinfo{person}{Furio
  Honsell}, {and} \bibinfo{person}{Gordon Plotkin}.}
  \bibinfo{year}{1993}\natexlab{}.
\newblock \showarticletitle{A Framework for Defining Logics}.
\newblock \bibinfo{journal}{{\em JACM\/}} (\bibinfo{year}{1993}).
\newblock


\bibitem[\protect\citeauthoryear{Kumar, Myreen, Norrish, and Owens}{Kumar
  et~al\mbox{.}}{2014}]%
        {cakeml}
\bibfield{author}{\bibinfo{person}{Ramana Kumar}, \bibinfo{person}{Magnus~O.
  Myreen}, \bibinfo{person}{Michael Norrish}, {and} \bibinfo{person}{Scott
  Owens}.} \bibinfo{year}{2014}\natexlab{}.
\newblock \showarticletitle{{CakeML}: {A} Verified Implementation of {ML}}.
\newblock \bibinfo{journal}{{\em POPL '14\/}} (\bibinfo{year}{2014}),
  \bibinfo{pages}{179--191}.
\newblock
\showISBNx{978-1-4503-2544-8}
\showDOI{%
\url{https://doi.org/10.1145/2535838.2535841}}


\bibitem[\protect\citeauthoryear{Leroy}{Leroy}{2006}]%
        {Leroy06formalcertification}
\bibfield{author}{\bibinfo{person}{Xavier Leroy}.}
  \bibinfo{year}{2006}\natexlab{}.
\newblock \showarticletitle{Formal certification of a compiler back-end, or:
  programming a compiler with a proof assistant}.
\newblock \bibinfo{journal}{{\em POPL '06\/}} (\bibinfo{year}{2006}).
\newblock


\bibitem[\protect\citeauthoryear{Li, Owens, and Slind}{Li
  et~al\mbox{.}}{2007}]%
        {Li07structureof}
\bibfield{author}{\bibinfo{person}{Guodong Li}, \bibinfo{person}{Scott Owens},
  {and} \bibinfo{person}{Konrad Slind}.} \bibinfo{year}{2007}\natexlab{}.
\newblock \showarticletitle{Structure of a proof-producing compiler for a
  subset of higher order logic}.
\newblock \bibinfo{journal}{{\em ESOP\/}} (\bibinfo{year}{2007}).
\newblock


\bibitem[\protect\citeauthoryear{Martelli and Montanari}{Martelli and
  Montanari}{1982}]%
        {Martelli:1982:EUA:357162.357169}
\bibfield{author}{\bibinfo{person}{Alberto Martelli} {and} \bibinfo{person}{Ugo
  Montanari}.} \bibinfo{year}{1982}\natexlab{}.
\newblock \showarticletitle{An Efficient Unification Algorithm}.
\newblock \bibinfo{journal}{{\em TPLS\/}} \bibinfo{volume}{'82},
  \bibinfo{number}{2} (\bibinfo{date}{April} \bibinfo{year}{1982}),
  \bibinfo{pages}{258--282}.
\newblock
\showISSN{0164-0925}
\showDOI{%
\url{https://doi.org/10.1145/357162.357169}}


\bibitem[\protect\citeauthoryear{Morrisett, Crary, Glew, and Walker}{Morrisett
  et~al\mbox{.}}{2002}]%
        {Morrisett:2002:STA:968425.968427}
\bibfield{author}{\bibinfo{person}{Greg Morrisett}, \bibinfo{person}{Karl
  Crary}, \bibinfo{person}{Neal Glew}, {and} \bibinfo{person}{David Walker}.}
  \bibinfo{year}{2002}\natexlab{}.
\newblock \showarticletitle{Stack-based Typed Assembly Language}.
\newblock \bibinfo{journal}{{\em JFP\/}}  \bibinfo{volume}{'02}
  (\bibinfo{date}{Jan.} \bibinfo{year}{2002}), \bibinfo{pages}{43--88}.
\newblock
\showISSN{0956-7968}
\showDOI{%
\url{https://doi.org/10.1017/S0956796801004178}}


\bibitem[\protect\citeauthoryear{Morrisett, Walker, Crary, and Glew}{Morrisett
  et~al\mbox{.}}{1999}]%
        {Morrisett:1999:SFT:319301.319345}
\bibfield{author}{\bibinfo{person}{Greg Morrisett}, \bibinfo{person}{David
  Walker}, \bibinfo{person}{Karl Crary}, {and} \bibinfo{person}{Neal Glew}.}
  \bibinfo{year}{1999}\natexlab{}.
\newblock \showarticletitle{From {System F} to Typed Assembly Language}.
\newblock \bibinfo{journal}{{\em TPLS\/}} \bibinfo{volume}{21},
  \bibinfo{number}{3} (\bibinfo{date}{May} \bibinfo{year}{1999}),
  \bibinfo{pages}{527--568}.
\newblock
\showISSN{0164-0925}
\showDOI{%
\url{https://doi.org/10.1145/319301.319345}}


\bibitem[\protect\citeauthoryear{Myreen and Owens}{Myreen and Owens}{2014}]%
        {JFP14}
\bibfield{author}{\bibinfo{person}{Magnus~O. Myreen} {and}
  \bibinfo{person}{Scott Owens}.} \bibinfo{year}{2014}\natexlab{}.
\newblock \showarticletitle{Proof-producing translation of higher-order logic
  into pure and stateful {ML}}.
\newblock \bibinfo{journal}{{\em JFP\/}}  \bibinfo{volume}{'14}
  (\bibinfo{date}{May} \bibinfo{year}{2014}), \bibinfo{pages}{284--315}.
\newblock


\bibitem[\protect\citeauthoryear{Necula and Lee}{Necula and Lee}{1998}]%
        {necula1998design}
\bibfield{author}{\bibinfo{person}{George~C Necula} {and}
  \bibinfo{person}{Peter Lee}.} \bibinfo{year}{1998}\natexlab{}.
\newblock \showarticletitle{The Design and Implementation of a Certifying
  Compiler}.
\newblock \bibinfo{journal}{{\em ACM SIGPLAN Notices\/}} \bibinfo{volume}{33},
  \bibinfo{number}{5} (\bibinfo{year}{1998}), \bibinfo{pages}{333--344}.
\newblock


\bibitem[\protect\citeauthoryear{Pfenning}{Pfenning}{1989}]%
        {Pfenning89elf:a}
\bibfield{author}{\bibinfo{person}{Frank Pfenning}.}
  \bibinfo{year}{1989}\natexlab{}.
\newblock \showarticletitle{Elf: A Language for Logic Definition and Verified
  Metaprogramming}.
\newblock \bibinfo{journal}{{\em LICS\/}} (\bibinfo{year}{1989}),
  \bibinfo{pages}{313--322}.
\newblock


\bibitem[\protect\citeauthoryear{Pfenning}{Pfenning}{1991}]%
        {Pfenning91unificationand}
\bibfield{author}{\bibinfo{person}{Frank Pfenning}.}
  \bibinfo{year}{1991}\natexlab{}.
\newblock \showarticletitle{Unification and Anti-Unification in the Calculus of
  Constructions}.
\newblock \bibinfo{journal}{{\em LICS\/}} (\bibinfo{year}{1991}),
  \bibinfo{pages}{74--85}.
\newblock


\bibitem[\protect\citeauthoryear{Pfenning and Sch{\"u}rmann}{Pfenning and
  Sch{\"u}rmann}{1999}]%
        {pfenning1999system}
\bibfield{author}{\bibinfo{person}{Frank Pfenning} {and}
  \bibinfo{person}{Carsten Sch{\"u}rmann}.} \bibinfo{year}{1999}\natexlab{}.
\newblock \showarticletitle{System description: Twelf--a meta-logical framework
  for deductive systems}.
\newblock \bibinfo{journal}{{\em CADE\/}}  \bibinfo{volume}{'99}
  (\bibinfo{year}{1999}), \bibinfo{pages}{202--206}.
\newblock


\bibitem[\protect\citeauthoryear{Pusch}{Pusch}{1996}]%
        {Pusch96verificationof}
\bibfield{author}{\bibinfo{person}{Cornelia Pusch}.}
  \bibinfo{year}{1996}\natexlab{}.
\newblock \showarticletitle{Verification of Compiler Correctness for the
  {WAM}}.
\newblock \bibinfo{journal}{{\em TPHOLs '96\/}} (\bibinfo{year}{1996}),
  \bibinfo{pages}{347--362}.
\newblock


\bibitem[\protect\citeauthoryear{Robinson}{Robinson}{1965}]%
        {Robinson:1965:MLB:321250.321253}
\bibfield{author}{\bibinfo{person}{J.~A. Robinson}.}
  \bibinfo{year}{1965}\natexlab{}.
\newblock \showarticletitle{A Machine-Oriented Logic Based on the Resolution
  Principle}.
\newblock \bibinfo{journal}{{\em JACM\/}} \bibinfo{volume}{'65},
  \bibinfo{number}{1} (\bibinfo{year}{1965}), 19.
\newblock
\showISSN{0004-5411}
\showDOI{%
\url{https://doi.org/10.1145/321250.321253}}


\bibitem[\protect\citeauthoryear{Russinoff}{Russinoff}{1992}]%
        {Russinoff92averified}
\bibfield{author}{\bibinfo{person}{David~M. Russinoff}.}
  \bibinfo{year}{1992}\natexlab{}.
\newblock \showarticletitle{A Verified {Prolog} Compiler for the {Warren}
  Abstract Machine}.
\newblock \bibinfo{journal}{{\em Journal of Logic Programming\/}}
  \bibinfo{volume}{'13} (\bibinfo{year}{1992}), \bibinfo{pages}{367--412}.
\newblock


\bibitem[\protect\citeauthoryear{Warren}{Warren}{1983}]%
        {warren1983abstract}
\bibfield{author}{\bibinfo{person}{David~HD Warren}.}
  \bibinfo{year}{1983}\natexlab{}.
\newblock \bibinfo{booktitle}{{\em An abstract {Prolog} instruction set}}.
  Vol.~\bibinfo{volume}{309}.
\newblock \bibinfo{publisher}{Artificial Intelligence Center, SRI International
  Menlo Park, California}.
\newblock


\bibitem[\protect\citeauthoryear{Wielemaker}{Wielemaker}{2016}]%
        {OpenHub}
\bibfield{author}{\bibinfo{person}{Jan Wielemaker}.}
  \bibinfo{year}{2016}\natexlab{}.
\newblock \bibinfo{title}{{SWI-Prolog} {OpenHub} Project Page}.
\newblock \bibinfo{howpublished}{\url{https://www.openhub.net/p/swi-prolog}}.
  (\bibinfo{year}{2016}).
\newblock
\newblock
\shownote{Accessed: 2016-01-22.}


\bibitem[\protect\citeauthoryear{Xi and Scott}{Xi and Scott}{1998}]%
        {Xi98dependenttypes}
\bibfield{author}{\bibinfo{person}{Hongwei Xi} {and} \bibinfo{person}{Dana
  Scott}.} \bibinfo{year}{1998}\natexlab{}.
\newblock \showarticletitle{Dependent Types in Practical Programming}.
\newblock \bibinfo{journal}{{\em POPL '98\/}} (\bibinfo{year}{1998}),
  \bibinfo{pages}{214--227}.
\newblock


\bibitem[\protect\citeauthoryear{Yang, Chen, Eide, and Regehr}{Yang
  et~al\mbox{.}}{2011}]%
        {Yang:2011:FUB:1993316.1993532}
\bibfield{author}{\bibinfo{person}{Xuejun Yang}, \bibinfo{person}{Yang Chen},
  \bibinfo{person}{Eric Eide}, {and} \bibinfo{person}{John Regehr}.}
  \bibinfo{year}{2011}\natexlab{}.
\newblock \showarticletitle{Finding and Understanding Bugs in C Compilers}.
\newblock \bibinfo{journal}{{\em SIGPLAN Not.\/}} \bibinfo{volume}{46},
  \bibinfo{number}{6} (\bibinfo{date}{June} \bibinfo{year}{2011}),
  \bibinfo{pages}{283--294}.
\newblock
\showISSN{0362-1340}
\showDOI{%
\url{https://doi.org/10.1145/1993316.1993532}}


\end{thebibliography}

%\printbibliography
%\bibliography{thesis}{}
%\bibliographystyle{acmtrans}
%\newpage
%\addcontentsline{toc}{chapter}{Appendix A: Statics}
%\onecolumn
%\section*{Appendix A: Statics}
%\input{statics}
%\newpage
%\addcontentsline{toc}{chapter}{Appendix B: Operational Semantics}
%\section*{Appendix B: Operational Semantics}
%\input{dynamics}
%\newpage
%\addcontentsline{toc}{chapter}{Appendix C: Proofs}
%\section*{Appendix C: Proofs}
%\input{proofs}

\end{document}